\newif\ifFINAL
\newcommand{\mr}[2]{\multirow{#1}{*}{#2}}
\def\@footnotecolor{red}
\def\@footnotemark{%
    \leavevmode
    \ifhmode\edef\@x@sf{\the\spacefactor}\nobreak\fi
    \stepcounter{Hfootnote}%
    \global\let\Hy@saved@currentHref\@currentHref
    \hyper@makecurrent{Hfootnote}%
    \global\let\Hy@footnote@currentHref\@currentHref
    \global\let\@currentHref\Hy@saved@currentHref
    \hyper@linkstart{footnote}{\Hy@footnote@currentHref}%
    \@makefnmark
    \hyper@linkend
    \ifhmode\spacefactor\@x@sf\fi
    \relax
  }%
\newcommand*\samethanks[1][\value{footnote}]{\footnotemark[#1]}
\definecolor{dark-green}{rgb}{0.0, 0.4, 0.0}
\definecolor{amethyst}{rgb}{0.6, 0.4, 0.8}
\definecolor{amaranth}{rgb}{0.9, 0.17, 0.31}
\definecolor{babyblue}{rgb}{0.54, 0.81, 0.94}
    \newcommand{\TODO}[1]{{{\color{red} TODO: #1}}}
    \newcommand{\sw}[1]{{{\color{blue} #1}}}
    \newcommand{\ds}[1]{{{\color{dark-green} #1}}}
    \newcommand{\TODO}[1]{{{#1}}}
    \newcommand{\sw}[1]{{{#1}}}
    \newcommand{\ds}[1]{{{#1}}}
\newcommand{\comment}[1]{}
\newtheorem{definition}{Definition}
\newtheorem{theorem}{Theorem}
\newtheorem{lemma}{Lemma}
\definecolor{dblue}{rgb}{0.00, 0.50, 0.90}
\definecolor{lblue}{rgb}{0.70, 0.80, 1.00}
\definecolor{lpink}{rgb}{0.90, 0.70, 1.00}
\definecolor{lgreen}{rgb}{0.80, 0.95, 0.75}
\definecolor{lred}{rgb}{0.99, 0.50, 0.55}
\definecolor{lyellow}{rgb}{1.00, 0.95, 0.75}
\definecolor{llgrey}{rgb}{0.95, 0.95, 0.95}
\definecolor{salmon}{rgb}{0.99, 0.90, 0.90}
\newcommand{\red}[1]{{\color{red}#1}}
\newcommand{\grey}[1]{{\color{blue}#1}}
\newcommand{\pr}[1]{\Pr \left[ #1 \right]}
\newenvironment{boxfig}[2]{
\begin{figure}
  \newcommand{\FigCaption}{#1}
  \newcommand{\FigLabel}{#2}
  \vspace{-\medskipamount}
  \begin{center}
    \begin{small}
      \begin{tabular}{@{}|@{~~}l@{~~}|@{}}
        \hline
        \rule[-1.5ex]{0pt}{1ex}
        \begin{minipage}[b]{.96\linewidth}
          \vspace{1ex}
          \smallskip
          }{%
        \end{minipage}\\
        \hline
      \end{tabular}
    \end{small}
    \vspace{-0.5\bigskipamount}
    \caption{\small \FigCaption}
    \label{\FigLabel}
  \end{center}
\end{figure}
}
\newcommand{\verticaltext}[1]{
	\begin{subfigure}[t]{0.022\linewidth}
		\begin{flushleft}
            \rotatebox{90}{\centering\large{}#1}  
        \end{flushleft}
	\end{subfigure}
}
\let\orgdescriptionlabel\descriptionlabel
\renewcommand*{\descriptionlabel}[1]{%
  \let\orglabel\label
  \let\label\@gobble
  \phantomsection
  \edef\@currentlabelname{#1}%
  \let\label\orglabel
  \orgdescriptionlabel{#1}%
}
\newcommand{\testacclabel}{\text{Test}}
\newcommand{\testacc}[2]{\testacclabel_{#1,#2}}
\newcommand{\binomdistrlabel}{\text{binom}}
\newcommand{\binomdistr}[2]{\binomdistrlabel\left(#1,#2\right)}
\newcommand{\ourmetric}[2]{\rho_{#1}(#2)}
\newcommand{\heavyside}[1]{\text{H}\!\left[{#1}\right]}
\newcommand{\Naive}{non-adaptive}
\newcommand{\Advanced}{Adaptive}
\setlist{topsep=2pt,itemsep=-1ex,partopsep=1ex,parsep=1ex}
\newcommand{\poisonratio}{\ensuremath{f_{\mathsf{data}}}}  
\newcommand{\userratio}{\ensuremath{f_{\mathsf{user}}}}
\date{}
\begin{document}


\ifFINAL
    \title{Towards Probabilistic Verification of Machine Unlearning}
\else
    \title{Towards Probabilistic Verification of Machine Unlearning}
\fi

\author{
{\rm David M.\ Sommer\thanks{The first two authors contributed equally to this work.}}\\
ETH Zürich
\and
{\rm Liwei Song\samethanks}\\
Princeton University
\and
{\rm Sameer Wagh}\\
Princeton University
\and
{\rm Prateek Mittal}\\
Princeton University
} 


\maketitle
\thispagestyle{plain}
\pagestyle{plain}


\begin{abstract}
\textit{Right to be forgotten}, also known as the right to erasure, is the right of individuals to have their data erased from an entity storing it. 
The status of this long held notion was legally solidified recently by the General Data Protection Regulation (GDPR) in the European Union. As a consequence, there is a need for mechanisms whereby users can verify if service providers comply with their deletion requests. In this work, we take the first step in proposing a formal framework to study the design of such verification mechanisms for data deletion requests -- also known as \textit{machine unlearning} -- in the context of systems that provide machine learning as a service (MLaaS). Our framework allows the rigorous quantification of any verification mechanism based on standard hypothesis testing. Furthermore, we propose a novel backdoor-based verification mechanism and demonstrate its effectiveness in certifying data deletion with high confidence, thus providing a basis for quantitatively inferring machine unlearning. 

We evaluate our approach over a range of network architectures such as multi-layer perceptrons (MLP), convolutional neural networks (CNN), residual networks (ResNet), and long short-term memory (LSTM) as well as over 5 different datasets. Through such extensive evaluation over a spectrum of models and datasets, we demonstrate that our approach has minimal effect on the accuracy of the ML service but provides high confidence verification of unlearning. Our proposed mechanism works with high confidence even if a handful of users employ our system to ascertain compliance with data deletion requests. In particular, with just 5\% of users participating in our system, modifying half their data with a backdoor, and with merely $30$ test queries, our verification mechanism has both false positive and false negative ratios below $10^{-3}$. We also show the effectiveness of our approach by testing it against an adaptive adversary that uses a state-of-the-art backdoor defense method. Overall, our approach provides a foundation for a quantitative analysis of verifying machine unlearning, which can provide support for legal and regulatory frameworks pertaining to users' data deletion requests.

\end{abstract}



\newif\ifplotimages
\plotimagestrue

\newlength{\spacehack}
\newlength{\spacehacko}

\section{Introduction}

Machine learning models, in particular neural networks, have achieved tremendous success in real-world applications and have driven technology companies, such as Google, Amazon, Microsoft, to provide machine learning as a service (MLaaS).
Under MLaaS, individual users upload personal datasets to the server, the server then trains a ML model on the aggregate dataset and then provides its predictive functionality as a service to the users.
However, recent works have shown that ML models memorize sensitive information of training data~\cite{fredrikson_inversion_CCS15, shokri_membership_SP17, ganju_property_privacy_CCS18, carlini_memorization_usenix19}, indicating serious privacy risks to individual user data. 
At the same time, recently enacted legislation, such as the General Data Protection Regulation (GDPR) in the European Union \cite{GDPR_2016} and the California Consumer Privacy Act (CCPA) in the United States \cite{ccpa}, recognize the consumers' \emph{right to be forgotten}, and legally requires companies to remove a user's data from their systems upon the user's deletion request.

\sw{However, there is a noticeable lack of concrete mechanisms that enables individual users to verify compliance of their requests. Prior works in machine unlearning~\cite{MUL_cao_SP15, deletion_ginart_NIPS19, MUL_bourtoule_arxiv19, deletion_guo_arxiv19, MUL_baumhauer_arxiv20} focus on the scenario of an honest server who deletes the user data upon request, and do not provide any support for a mechanism to verify unlearning. In this work, we formalize an approach that allows users to rigorously verify, with high confidence, if a server has ``deleted their data''. Note that it is hard to ascertain if the data was actually physically deleted from storage, and in this work, deletion refers to the exclusion of a user's data from a MLaaS' model training procedure. This is a reasonable model for real world systems such as Clearview AI which contains image data about millions of users and was reported to violate the privacy policies of Twitter~\cite{clearviewai, twittercease}.}



\textbf{Formalizing Machine Unlearning.} In this work, we take the first step towards solving this open problem of verified machine unlearning by individual users in the MLaaS setting. First, we formulate the unlearning verification problem as a hypothesis testing problem \cite{lehmann2006testing} (whether the server follows requests to delete users' data or not) and describe the metric used to evaluate a given verification strategy.  Note that for a verifiable unlearning strategy to be effective, it needs to satisfy two important objectives. 
On the one hand, the mechanism should enable individual users to leave a unique \emph{trace} in the ML model after being trained on user data, which can be leveraged in the verification phase. On the other hand, such a unique trace needs to have negligible impact on the model's normal predictive behavior. One possible approach is enabled by membership inference attacks such as \ds{Shokri et al.~\cite{shokri_membership_SP17},  Song et al.~\cite{Mem_song_kdd19}, or Chen et al.~\cite{chen2020machine}}. However, this line of work suffers from a number of limitations -- low accuracy due to the training data not being actively perturbed, extensive knowledge of the MLaaS model's architecture for  white-box attack variants, access to auxiliary or shadow data and computational power in an extent similar to the MLaaS provider -- all of which limit the feasibility of such approaches for our problem setting.
We propose a novel use of backdoor attacks in ML as our mechanism for probabilistically verifying machine unlearning and demonstrate how it meets the two requirements above.

\textbf{Proposed Mechanism.} In the classical backdoor attacks~\cite{badnets_gu_arxiv17, trojan_Liu_ndss18}, the users (adversaries in these settings) manipulate part of training data such that the final trained ML model (1) returns a particular \emph{target label} as the classification on inputs that contain a \emph{backdoor trigger} (e.g., fixed pattern of pixel values at certain positions in the image) and (2) provides normal prediction in the absence of the trigger.
In our machine unlearning verification mechanism, we extend a backdoor proposed by Gu et al. \cite{badnets_gu_arxiv17}.
In our approach, where a fraction of users called \textit{privacy enthusiasts} are interested in the verification, individually choose a backdoor trigger and the associated target label randomly, then add this trigger to a fraction of their training samples (called data poisoning) and set the corresponding labels as the target label. This locally poisoned data is then provided to the MLaaS server. While each privacy enthusiast acts independently, i.e., they do not share information about their individual backdoor or target label, our approach supports an arbitrary fraction of such enthusiasts, up to the point where every user in the training dataset is applying our method. 
We demonstrate that the ML model trained on such data has a high backdoor success rate (i.e., target label classification in the presence of the trigger) for every user's backdoor trigger and target label pair.
When the privacy enthusiast later asks the MLaaS provider to delete its data, it can verify whether the provider deleted its data from the ML model by checking the backdoor success rate using its own backdoor trigger with the target label. A low backdoor success rate is indicative of a model that is not trained on the poisoned data and thus signals that the server followed the deletion request. Through a rigorous hypothesis testing formulation, we can show that this mechanism can be used for high confidence detection of deletion requests.

\textbf{Experimental Evaluation.} We theoretically quantify the performance of our backdoor-based verification mechanism under the proposed formulation of hypothesis testing. 
Furthermore, we experimentally evaluate our approach over a spectrum of 5 popular datasets (EMNIST, FEMNIST, CIFAR10, ImageNet, and AGNews) and 4 different neural network architectures -- multi-layer perception (MLP), convolution neural network (CNN), residual network (ResNet), long short-term memory (LSTM).
We show that our mechanism has excellent performance -- using $50\%$ poisoned samples and merely $30$ test queries achieves both false positive and false negative value below $10^{-3}$ (and can be further lowered as shown in \cref{tab:dataset_verify}).
We also evaluate the mechanism under an adaptive malicious server, one which uses state-of-the-art backdoor defense techniques to decrease the backdoor attack accuracy.
We find that such a server can lower the backdoor success rate, especially for a low poisoning ratio, but they are still significant enough to validate unlearning with high confidence. 

\textbf{Our contributions} can be briefly summarized as follows:

\noindent(1) \textit{Framework for Machine Unlearning Verification: }We provide a rigorous framework for compliance verification of right to be forgotten requests by individual users, in the context of machine learning systems. We formalize this as a hypothesis test between an honest server following the deletion request and a malicious server arbitrarily deviating from the prescribed deletion. Our metric, the power of the hypothesis test, quantifies the confidence a user has in knowing that the service provider complied with its data deletion request. Our framework is applicable to a wide range of MLaaS systems. 

\noindent(2) \textit{Using Data Backdoors for Verified Machine Unlearning: }We propose a backdoor-based mechanism for probabilistically verifying unlearning and show its effectiveness in the above framework. We provide a thorough mathematical analysis of our proposed mechanism. 
    Theorem~\ref{thm:computerho}, informally stated, enables a user to find out the number of test samples required to achieve high confidence in verifying its deletion request. 
    We also provide methods for individual users to estimate parameters and necessary statistics for high confidence detection of non-compliance. 
    %

\noindent(3) \textit{Evaluating Proposed Mechanism over Various Datasets and Networks: }Finally, we perform a thorough empirical evaluation of our proposed mechanism. We consider 5 different datasets -- EMNIST, FEMNIST, CIFAR10, ImageNet, AG News -- 4 image datasets and 1 text classification dataset. We also study the mechanism over 4 different model architectures -- MLP, CNN, ResNet, and LSTM. 
    We quantitatively measure the high confidence of our backdoor-based verification mechanism over different fractions of ``privacy enthusiasts'' -- a set of users participating in the system that are interested in verifying machine unlearning, and show that it 
    remains effective for an adaptive malicious server who uses state-of-the-art backdoor defense to mitigate backdoor attacks.
    We also study the heterogeneity in performance across different users and show how multiple users can jointly improve the verification performance despite have lower accuracy in individual verifications. 

\section{Neural Network Preliminaries}\label{sec:background}

Neural networks are a class of algorithms that enable supervised learning of certain tasks through sample data performing the task successfully. These algorithms derive their name as they try to mimic the workings of biological neural networks such as the human brain. With increased computational power, modern neural networks operate on large amounts of data and with millions of configurable parameters. Due to the immense complexity of the non-convex parameter space, the learning of parameters is not optimized globally, but in an iterative fashion trained with data split in mini-batches using the stochastic gradient descent algorithm and its variants~\cite{sgd, kingma2014adam}. Over the years, various network architectures have been proposed that have shown promise in different target applications -- multilayer perceptrons~\cite{MLP1961Rosenblatt}, convolutional networks~\cite{lecun_CNN_98}, and residual networks~\cite{he_ResNet_CVPR16} that have achieved tremendous success in computer vision, long short-term memory (LSTM) architectures~\cite{hochreiter1997lstm} for applications in natural language processing.



\subsection{Backdoor Attacks and Defenses}

In a backdoor attack, the adversary maliciously augments training samples with a hidden trigger into the training process of the target deep learning model such that when the \textit{backdoor trigger} is added to any test input, the model will output a specific \textit{target label}. 
Compared to data poisoning attacks which cause intentional misclassifications on clean test samples via training set manipulations, backdoor attacks only alter the model's predictions in presence of the backdoor trigger and behave normally on clean samples.

\label{sec:background:backdoormethod}
In our work, we build upon the attack of Gu et al.~\cite{badnets_gu_arxiv17}.
For a subset of the training samples, their attack chooses a backdoor pattern (fixed pixels and their color/brightness), applies this pattern to the samples, and changes the labels to the target backdoor label.  
During the training process with the full dataset, the target model finally learns to associate the backdoor trigger with the target label.
Recent works have improved this approach~\cite{backdoor_chen_arxiv17,trojan_Liu_ndss18,salem2020dynamic}, extended it to transfer learning~\cite{backdoor_yao_ccs19}, and avoided alteration of the original labels of poison samples~\cite{backdoor_saha_aaai20,backdoor_turner_arxiv19}. Compared to prior work on backdoors, we  propose and analyze a multi-user scenario where each interested user can employ an individual backdoor. We note that the search for even better backdoor attacks is orthogonal to the goals of this paper.

In parallel, defense mechanisms against backdoor attacks have been explored as well. Wang et al. \cite{backdoor_defense_wang_sp19} introduced Neural Cleanse, a backdoor detection and mitigation mechanism that we use in our work. For each possible label, the approach searches for candidate backdoor trigger patterns using optimization procedures. If at least one of the reverse-engineered patterns is small enough, Neural Cleanse concludes that the model suffers from a backdoor. Then, three mitigation techniques are applied. \label{sec:background:neuralcleansing}
First, Neural Cleanse filters out test samples with similar neuron activation patterns as the reversed trigger. 
Second, Neural Cleanse prunes the neurons which have significant activations only on samples with reversed trigger. However, these two methods suffer from adaptive attacks~\cite{tan2019bypassing}.
In the third approach, Neural Cleanse adds the reversed trigger to some known clean training samples and then fine-tunes the model with those samples along with correct labels. 
Other recent works on mitigating backdoor attacks analyze neuron activation patterns~\cite{liu2018fine, chen2018detecting, liu2019abs}, remove training samples based on computed outlier scores~\cite{backdoor_defense_tran_nips18}, and reconstruct the backdoor~\cite{liu2019abs}.
However, strategic backdoor attacks with adaptive considerations of these defenses can significantly mitigate their performance~\cite{liu2018fine, tan2019bypassing, veldanda2020nnoculation}.



\section{Framework for Machine Unlearning}\label{sec:formulation}


In this section, we formalize a framework to study machine unlearning in the context of machine learning as a service (MLaaS). In such systems, users interact with a server that provides MLaaS. However, users who request the removal of their data from the training set of a machine learning service currently have no guarantee that their data was actually \ds{unlearned}. 
Thus, there is an important need to uncover a strategy that can verify if user data was used in the generation of a machine learning model\footnote{Given that the data is outsourced to the server, it is difficult to infer \textit{directly} if the data was used in the training set or not.}. We formalize a framework to study the effectiveness of such machine unlearning verification strategies. For this, we use a standard off-the-shelf hypothesis test. This has been well studied in the statistics literature and forms a rigorous foundation for our framework in Section~\ref{subsec:hypothesis}. However, in Section~\ref{sec:analysis}, we extend this basic set-up to provide a closed form expression of the effectiveness of machine unlearning as well as study other important connections such as bounds on the effectiveness given that the users do not know the ground truth parameters. 

\begin{table}[t]
\centering
\resizebox{\columnwidth}{!}{
\begin{tabular}{cll} 
\toprule
Symbol          & Range                     & Description \\
\midrule
$n$             & $\mathbb{N}$      	 	& Number of test service requests\\
$\alpha,\: \beta$ & $[0,1]$                   & Type-I and Type-II errors (cf. Eq.~\ref{eq:beta})\\
$p,\; q$          & $[0,1]$                   & Probabilities for analysis (cf. Eq.~\ref{eq:pandq})\\
\mr{2}{\userratio}             & \mr{2}{$[0,1]$}                   & Fraction of users that are privacy enthusiasts \\
& &                                                                         (i.e., those who are verifying unlearning) \\
\mr{2}{\poisonratio}        & \mr{2}{$0$-$100\%$}               & Percentage of data samples poisoned by \\
 & &                                                                          each privacy enthusiast\\
\mr{2}{$\ourmetric{A, \alpha}{s, n}$} & \mr{2}{$[0,1]$}     & Effectiveness of a verification strategy $s$ with a model \\
                &                            & training algorithm A and acceptable Type I error $\alpha$\\
\bottomrule 
\end{tabular}
}
\caption{Important notation used in this work.}
\label{table:notation}\vspace{-1em}
\end{table}

\textbf{Framework overview. }In order to enable such a system for verifying machine unlearning, we propose an approach that leverages the users' ability to inject stealthy backdoors into their data. In particular, a small fraction \userratio{} of users which we call \textit{privacy enthusiasts} locally perturb a fraction \poisonratio{} of their data, henceforth called \textit{poisoning}, and thus inject a stealthy backdoor in the data, that is only known to them. If the MLaaS provider trains the model on such data, the backdoor can help the user detect the models trained on the \textit{poisoned data}. Consequently, this behavior can be used to reveal dismissed data deletion requests. 
Note that our system \textit{does not require} all users to participate in the verification and, in particular, is shown to work even when only 5\% of the users are privacy enthusiasts\footnote{And this too, is limited only by the size of the datasets, the results should extend for smaller participating groups.}. Our approach is illustrated in \Cref{fig:system_design}.

Finally, to quantify the effectiveness of any strategy employed by the users, we use a standard hypothesis test. This has been well studied in the statistics literature and forms a rigorous foundation for our framework. 
Note that in Section~\ref{sec:analysis}, we provide the required calculations to give a closed form expression for the effectiveness of our strategies as well as study other important characteristics, including the effectiveness of the strategy given that the users do not know the ground truth parameters.

\textbf{Threat Model and Assumptions.} We run our evaluation on both, a non-adaptive server algorithm that does not apply backdoor defense techniques, and an adaptive server algorithms that does. We require that users can control and manipulate the data before they provide it for training and thus might not be applicable in certain scenarios where users do not have adequate capabilities. We assume that privacy enthusiasts possess sufficient data for successful poisoning. We require only black-box prediction access to the MLaaS provider's model, however, the provider is not able to determine which user is querying the trained machine learning model. This can be achieved using of-the-shelf anonymous communication schemes like~\cite{tor,vuvuzela}. Finally, we note that the scope of our approach is limited to validating if users' data was deleted from a specific machine learning model exposed by the MLaaS provider, and does \emph{not} include validating deletion from other computing or storage resources at the provider.   

\begin{figure*}
\centering
\begin{subfigure}[t]{.32\textwidth}
  \centering
  \includegraphics[width=\linewidth]{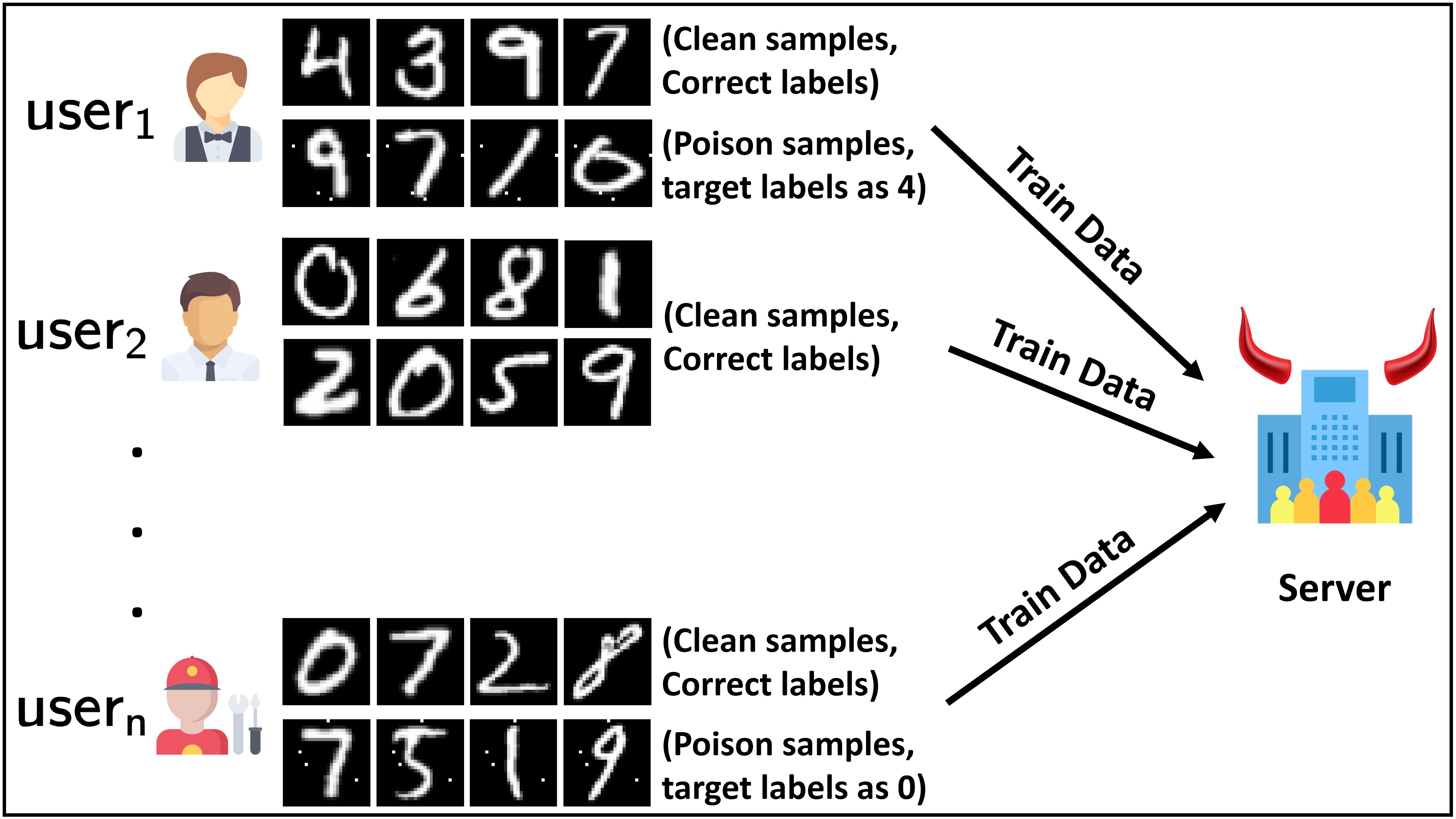}\vspace{0.5em}
  \caption{Backdoor injection during model training. Here, $\mathsf{user_1}, \mathsf{user_n}$ are represented as privacy enthusiasts (poisoning data) and $\mathsf{user_2}$ is not.}
  \label{fig:system_train}
\end{subfigure}\hfill
\begin{subfigure}[t]{.32\textwidth}
  \centering
  \includegraphics[width=\linewidth]{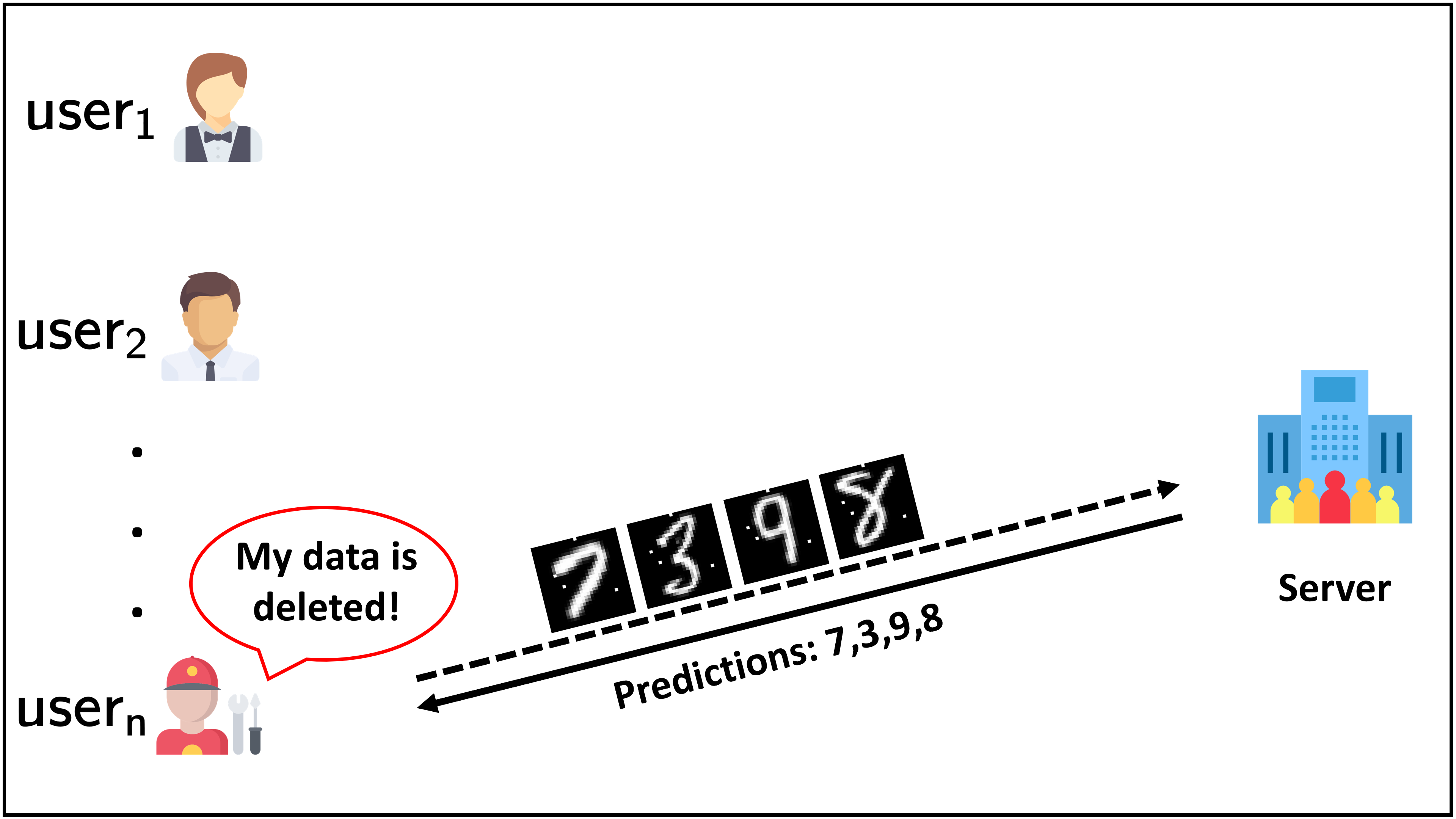}\vspace{0.5em}
  \caption{When the server deletes the user's data ($H_{0}$), the predictions of backdoor samples are correct labels with high probability.}
  \label{fig:system_H0}
\end{subfigure}\hfill
\begin{subfigure}[t]{.32\textwidth}
  \centering
  \includegraphics[width=\linewidth]{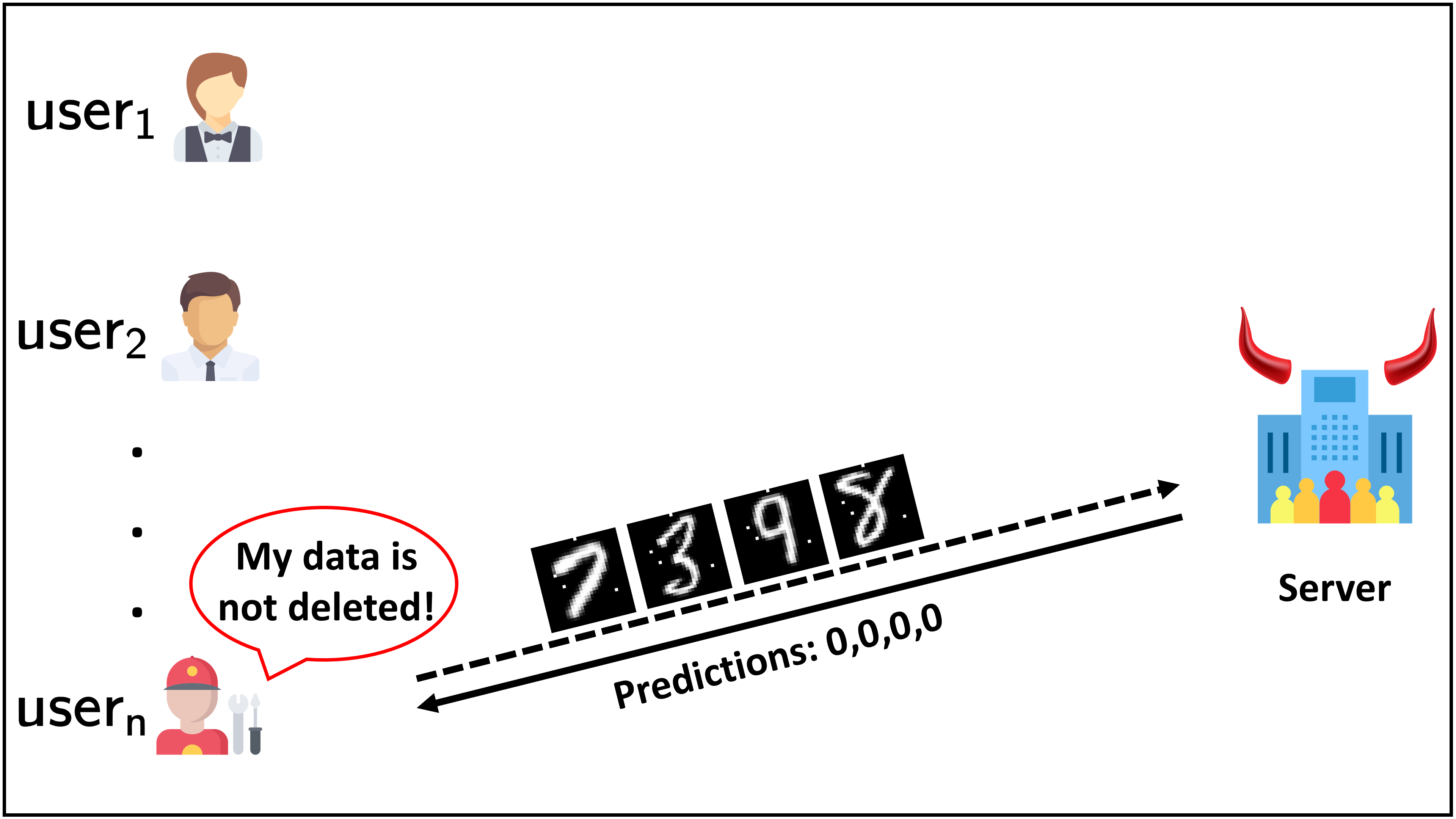}\vspace{0.5em}
  \caption{When the server does not delete the user's data ($H_{1}$), the predictions of backdoor samples are target labels with high probability.}
  \label{fig:system_H1}\hfill
\end{subfigure}\vspace{-1em}
\caption{(Overall system operation.) First,   users inject backdoor samples over which the server trains the model. At a later stage,  users leverage model predictions on backdoored test samples to detect whether the server followed their deletion requests or not -- as shown by the difference between~\cref{fig:system_H0} and~\cref{fig:system_H1}.}
\label{fig:system_design}
\end{figure*}


\subsection{Hypothesis Testing}\label{subsec:hypothesis}
We apply a hypothesis test to decide whether a MLaaS provider has deleted the requested user data from its training set or not. We define the \textit{null hypothesis} $H_0$ to be the state when server deletes the user data and the \textit{alternative hypothesis} $H_1$ to be the state when the server does not delete the data.
We define the Type I errors as  $\alpha$ (false positive) and Type II errors as $\beta$ (false negative) given below:
\begin{equation}
\begin{aligned}
  \alpha &= \pr{\textrm{Reject } H_0\, |\, H_0 \textrm{ is true}} \\
  \beta &= \pr{\textrm{Accept } H_0\, |\, H_1 \textrm{ is true}}\label{eq:beta}
\end{aligned}
\end{equation}

We define the effectiveness of a verification strategy $s$ for a given server algorithm $A$ for a given acceptable tolerance of Type I error ($\alpha$) to be the power of the hypothesis test formulated above, i.e.,
\begin{equation}\label{eq:test}
\begin{aligned}
    \ourmetric{A, \alpha}{s, n} &= (1-\beta) \\ 
    &= 1 - \pr{\textrm{Accept } H_0\; |\; H_1 \textrm{is true}}
\end{aligned}
\end{equation}%
where $\beta$, as shown in Section~\ref{sec:analysis}, can be computed as a function of $\alpha$, $s$, and $n$. 
Informally speaking, $\rho$ quantifies the confidence the user has that the server has deleted their data. This deletion confidence $(1-\beta)$, the power of the hypothesis test, is the probability that we detect the alternative hypothesis when the alternative hypothesis is true. On the other hand, $\alpha$ refers to the acceptable value of the server being falsely accused of malicious activity when in practice it follows the data deletion honestly. For a given value of $n$,  $\alpha$ and $\beta$ cannot be simultaneously reduced and hence we usually set an acceptable value of $\alpha$ and then $(1-\beta)$ quantifies the effectiveness of the test.

\subsection{Our Approach}\label{sec:oursystem}


As described earlier, we propose a system where a small fraction of users (privacy enthusiasts) actively engage in verification of machine unlearning. These privacy enthusiasts modify their data locally using a \textit{private backdoor} that is only known to them individually, then they hand their (poisoned) data to the MLaaS provider. The models trained on the \textit{poisoned data} (the data which contains such private backdoors) provide different predictions on very specific samples compared to models trained on data without poisoning. This property can be used to detect whether the server complies with data deletion requests or not. 
Thus, the overall system consisting of the server, normal users, and privacy enthusiasts can be described as shown below: 



\newcommand*\rot{\rotatebox{90}}

{\centering
\resizebox{\columnwidth}{!}{
\begin{tabular}{c c c c}
\toprule
& Privacy & Normal & \mr{2}{Server} \\ 
& Enthusiasts & Users & \\ \midrule
\mr{2}{I} & Send partially & Send normal & \mr{2}{Receive data} \\
          & poisoned data  & data        & \\ \cmidrule(lr){2-4}
\mr{2}{II} & Send data & Send data  & Receive data \\
            & deletion request  & deletion request & deletion requests \\ \cmidrule(lr){2-4}
\mr{2}{III} & Verify data & \mr{2}{--} & \mr{2}{--} \\
          & deletion & & \\
\bottomrule
\end{tabular}
}}

Note that the accusation of a non-deleting service is based on the success of a backdoor attack.  Privacy enthusiasts generate individual backdoor patterns that alter the predictions of samples to a fixed target label and subsequently provide their samples to the service. Once model prediction is accessible (before or after uploading of samples), the privacy enthusiasts gather statistics on the backdoor success rate when a backdoor has not been seen by a model for later hypthesis testing. This is achieved either by testing the backdoor before uploading, or testing with another backdoor pattern once the model is accessible (see end of this section for details). Finally, in the verification phase, the privacy enthusiasts can study the predictions on $n$ backdoored samples and based on the predictions run a hypothesis test to infer if the server complied with their request or not. The full verification procedure should be handled timely to avoid unlearning-effects due to potential continuous learning. 

We also provide ways in which multiple such privacy enthusiasts can combine their requests to minimize the overall statistical error in correctly detecting server behaviour. For such a system to work well, it is imperative that there exists a statistically significant distinguishing test between models trained \textit{with} vs \textit{without} the backdoored user's data. At the same time, the backdoored data should have minimal impact on the model's normal performance. Through extensive evaluation (\cref{sec:evaluation}), we show these hold for our mechanism.




\textbf{Details on the Underlying Hypothesis Test.} We apply a backdoor using the method described in \cref{sec:background:backdoormethod}, i.e, setting 4 user-specific pixels, spots, or words to a dataset dependent value and changing the label to a user-specific target label. 
Note that the success of altering the prediction using backdoored samples is usually not guaranteed every single time but in a probabilistic manner. Thus, the decision on whether the data has been deleted or not is determined by a hypothesis test. For an effective backdoor algorithm, when the model was trained on backdoored data, the probability of receiving the target label in presence of the backdoor pattern, i.e., backdoor success rate, should be high. At the same time, when the provider has deleted the user's data (the model has not been trained on the user's backdoored samples), the backdoor success rate should be low. In this way the hypothesis test can distinguish between the two scenarios. 

We aim to distinguish the scenario of an honest server who follows the data unlearning protocol from that of a dishonest server who can arbitrarily deviate from the protocol. In particular, we consider two specific models for the dishonest server -- the first is non-adaptive and does not delete user data yet expects to not get detected, while the second is adaptive and employs state-of-the-art defense mechanisms to mitigate user strategies (while also not deleting user data) and thus actively works to evade detection. Throughout this paper, the two probabilities corresponding to backdoor attack accuracy for deleted data and undeleted data are referred to as $q$ (lower), and $p$ (higher) respectively. 
Furthermore, the confidence of this test increases with $n$, the number of backdoored test samples a user queries the trained ML model with.
Thus our verification mechanism can be used to detect missing deletion with high confidence by increasing the number of test samples. 

Given that estimation of $p$ and $q$ is central to the detection of non-compliance, we describe the approach from an individual user perspective below.
\begin{enumerate}
    \item \textit{Estimating $p$:} A user can obtain an estimated $\hat{p}$ by querying the model with backdoored samples, immediately after the model has been trained with its data. At this moment, a user can determine whether the backdoor strategy is working. If the resulting $\hat{p}$ is close to the random prediction  baseline, either the applied backdoor strategy $s$ is not working, or its data has not been used in training. However, if $\hat{p}$ is significantly higher than the baseline, our strategy $s$ can work well and we can use $\hat{p}$ as an estimate.
    \item \textit{Estimating $q$:} There are two ways of obtaining an estimate $\hat{q}$: If the algorithm can be queried before the user provides its data, $\hat{q}$ can be obtained by querying the algorithm using samples with the user's backdoor the algorithm has not seen before. If this is not possible, the user can estimate $\hat{q}$ by applying another random backdoor pattern to its data and querying the algorithm. The output should be similar to the case where the algorithm has never seen the user's legitimate backdoor pattern. 
\end{enumerate}
In the highly improbable event when the estimated $p$ is close to the estimated $q$ (cf. Section~\ref{sec:individual_evaluation} for why this might happen), the privacy enthusiasts can detect this and use alternative strategies (cf Section~\ref{subsec:limitation}). Note that backdoor patterns that once worked could be overwritten by new data (for instance in the case of continuous learning). To mitigate this, we recommend privacy enthusiasts to perform the \ds{estimation of $p$ and $q$}  close to the data deletion request.





\section{Theoretical Analysis}\label{sec:analysis}

In Section~\ref{sec:formulation}, we set-up the problem of measuring the effectiveness of user's strategies as a hypothesis test. The hypothesis test allows us to know \textit{what to measure}, i.e., the confidence, for a meaningful quantification of the effectiveness. However, we still \textit{need to measure} this confidence as a function of system parameters and this is what we achieve in this section. In particular, we provide a closed form expression for the confidence $\ourmetric{A, \alpha}{s,n}$ and provide crucial bounds when measured parameters (empirical values) are different from the ground truth (theoretical values). 

The confidence, expressed by the metric $\ourmetric{A, \alpha}{s,n}$, is based on a hypothesis test where two cases are compared: $H_0$ -- the data has been deleted, and $H_1$ -- the data has not been deleted. We measure the Type II error $\beta$ which denotes the probability that the server evades detection, i.e., the server behaves malicious but is not caught. Note that this requires we set a level of acceptable Type I error $\alpha$, i.e., the probability that we falsely accuse the server of avoiding deletion. Hence, the metric $\ourmetric{A, \alpha}{s,n} = 1 - \beta$ is a function of the backdoor strategy $s$ and the number of predictions on backdoored test samples $n$ for a given MLaaS server $A$ and a value of Type I error $\alpha$. Figure~\ref{fig:numberofsamples} shows the mechanics of the hypothesis test. For an illustration how $\beta$ decreases with increasing number of test samples, we refer to \cref{fig:multiplesample} in the appendix. 

\begin{figure}
  \centering
  \includegraphics[width=\linewidth]{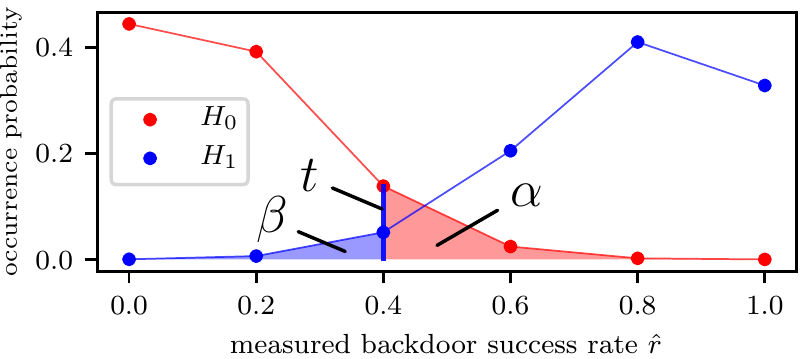}
\caption{This figure shows intuitively the relation between the threshold $t$ and the Type I ($\alpha$) and Type II ($\beta$) errors for number of measured samples $n=5$, with $q=0.1$, and $p=0.8$}
\label{fig:numberofsamples}
\end{figure}

\subsection{Formal Hypothesis Testing}
We query the ML-mechanism $A$ with $n$ backdoored samples $\{\mathsf{sample_i}\}_{i=1}^n$ of a single user and measure how often the ML-mechanism does classify the samples as the desired target label, denoted as $\mathsf{Target_i}$.
Then, the measured success rate is 
\begin{equation}
    \hat{r} = \frac{1}{n} \sum_{i=1}^n 
    \begin{cases}
        1 \text{ if } A(\mathsf{sample_i}) = \mathsf{Target_i} \\
        0 \text{ otherwise}
    \end{cases}
\end{equation}

Furthermore, we define two important quantities, $q, p$ that quantify the probability that the prediction on backdoored samples is equal to the target label for the null hypothesis vs the alternative hypothesis. For all available data points $i$, 
\begin{equation}\label{eq:pandq}
\begin{aligned}
  q &= \pr{A(\mathsf{sample_i}) = \mathsf{Target_i} \, | \, H_0 \textrm{ is true}} \\
  p &= \pr{A(\mathsf{sample_i}) = \mathsf{Target_i} \, | \,  H_1 \textrm{ is true}}
\end{aligned}
\end{equation}
Note that the measure $\hat{r}$ approaches $q$ if the null hypothesis $H_0$ (data was deleted) is true and approaches $p$ if the alternative hypothesis $H_1$ (data was not deleted) is true. To decide whether we are in $H_0$ or $H_1$, we define a threshold $t$ and if $\hat{r} \leq t$ we output $H_0$ else output $H_1$. The false-positive error $\alpha$ (Type I error) and false-negative error $\beta$ (type II error) are the respective leftover probability masses that we have decided wrongly. This is illustrated in \cref{fig:numberofsamples}.

The threshold $t$ is set according to the desired properties of the hypothesis test. As common in statistics, $t$ is set based on a small value of $\alpha$ (also known as p-value), the probability that we falsely accuse the ML-provider of dismissal of our data deletion request.

\subsection{Estimating Deletion Confidence $\ourmetric{A, \alpha}{s,n}$}
To derive an analytic expression for $\ourmetric{A, \alpha}{s,n}$, we note that the order in which we request the prediction of backdoored samples does not matter.
Moreover, we assume that the ML provider returns the correct target prediction label with probability $q$ for users with fulfilled deletion requests, and $p$ otherwise.
Further, we assume that the ML provider is not aware which user is querying. Else, the provider could run user specific evasion strategies, e.g., having for each user a unique model with only the user's data excluded.
This can, for example, be achieved by an anonymous communication channel. Since the user strategy is completely defined by the the two parameters $q, p$, we will often interchangeably express a strategy $s$ for these cases by $s=(q,p)$.

First, we show that the occurrence probability of a user-measured average backdoor success ratio $\hat{r}$ follows a binomial distribution with abscissa rescaled to [0,1] with mean $q$ (deleted), or $p$ (not deleted) respectively. Then, we compute the Type II error $\beta$ based on the Type I error $\alpha$ that results from the overlap of these two binomial distributions. Finally, we derive an analytic expression for the verification confidence:

\begin{theorem}\label{thm:computerho}
For a given  ML-mechanism $A$ and a given acceptable Type I error probability $\alpha$, the deletion confidence $\rho_{A, \alpha}(s, n)$ is given by the following expression:
\begin{equation}\label{eq:rhoexpression}
\begin{aligned}
    \rho_{A, \alpha}(s, n) &= 1 - \sum_{k=0}^n \binom{n}{k} p^k(1-p)^{n-k} \cdot \\
    &~~~~~~~~~~ \heavyside{\sum_{l = 0}^k \binom{n}{l} q^l(1-q)^{n-l} \leq 1 - \alpha }
\end{aligned}
\end{equation}
where $p,q$ are as given by \cref{eq:pandq} and $H(\cdot)$ is the heavy-side step function, i.e., $H(x) = 1$ if $x$ is $\mathsf{True}$ and $0$ otherwise. 
\end{theorem}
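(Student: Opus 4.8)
The plan is to reduce the theorem to a statement about two binomial distributions and then to identify the acceptance region of the test with the support of the Heaviside indicator. First I would establish the sampling distribution of the test statistic. Writing $X_i = 1$ when $A(\mathsf{sample}_i) = \mathsf{Target}_i$ and $X_i = 0$ otherwise, the assumptions stated just before the theorem---that the order of the queries is irrelevant, that each backdoored query independently returns the target label with probability $q$ under $H_0$ (resp.\ $p$ under $H_1$), and that the provider cannot tell which user is querying and hence cannot adapt its behaviour across the $n$ queries---make the $X_i$ i.i.d.\ Bernoulli. Consequently the count $K := n\hat{r} = \sum_{i=1}^n X_i$ is $\mathrm{Bin}(n,q)$ under $H_0$ and $\mathrm{Bin}(n,p)$ under $H_1$; rescaling the abscissa by $1/n$ recovers the two binomial profiles for $\hat{r}$ described in the text and drawn in \cref{fig:numberofsamples}.

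Next I would set up the threshold test and pin down the acceptance region. The rule accepts $H_0$ exactly when $\hat{r}\le t$, i.e.\ when $K\le m$ for the integer cutoff $m=\lfloor nt\rfloor$; for such a cutoff the errors are $\alpha_{\mathrm{act}} = \pr{K>m \mid H_0} = 1 - \sum_{l=0}^m \binom{n}{l} q^l(1-q)^{n-l}$ and $\beta = \pr{K\le m \mid H_1} = \sum_{k=0}^m \binom{n}{k} p^k(1-p)^{n-k}$. To calibrate $t$ against the acceptable level $\alpha$, I would leave a count $k$ in the acceptance region for $H_0$ exactly as long as the cumulative $H_0$-mass up to $k$ has not yet exceeded $1-\alpha$, that is, whenever $\sum_{l=0}^k \binom{n}{l} q^l(1-q)^{n-l}\le 1-\alpha$. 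Since this cumulative mass is increasing in $k$, the accepted indices form an initial segment $\{0,\dots,m\}$ and the first rejected index is the smallest $k$ whose $H_0$-CDF crosses $1-\alpha$. This predicate is precisely the argument of $\heavyside{\cdot}$, so the acceptance region is exactly the set of $k$ for which the Heaviside term equals $1$.

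It then remains to read off $\beta$ and conclude. Summing the $\mathrm{Bin}(n,p)$ mass over the acceptance region gives $\beta = \sum_{k=0}^n \binom{n}{k} p^k(1-p)^{n-k}\,\heavyside{\sum_{l=0}^k \binom{n}{l} q^l(1-q)^{n-l} \le 1-\alpha}$, where the indicator simply discards the rejected indices, and substituting into $\ourmetric{A,\alpha}{s,n}=1-\beta$ from \cref{eq:test} yields \cref{eq:rhoexpression}.

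I expect the main obstacle to be the discreteness of $K$ rather than the routine binomial bookkeeping. Because the $H_0$-CDF jumps, the value $1-\alpha$ generally falls strictly between two attainable cumulative masses, so one has to argue carefully which boundary index belongs to the acceptance region in order for the (non-strict) inequality inside $\heavyside{\cdot}$ to reproduce the cutoff $m$ exactly. Getting this inequality direction right is the delicate step, and it is also the point at which the gap between the nominal level $\alpha$ and the realized $\alpha_{\mathrm{act}}$ must be acknowledged.
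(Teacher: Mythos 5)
Your proposal is correct and follows essentially the same route as the paper's own proof: establish that the success count is binomial under each hypothesis (the paper's Lemma on the measured backdoor success rate), calibrate the threshold by defining the acceptance region as the initial segment of counts whose $H_0$-CDF does not exceed $1-\alpha$ (the paper's implicit definition of $t_\alpha$ via the Heaviside predicate), sum the $\mathrm{Bin}(n,p)$ mass over that region to obtain $\beta$, and conclude via $\rho_{A,\alpha}(s,n)=1-\beta$. Your closing remark on discreteness is a sound observation that the paper itself only acknowledges in the main text (the "jumps" in $\beta$), not in the proof.
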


\Cref{thm:computerho} gives a closed-form expression to compute the backdoor success probability as a function of the system parameters. We defer the proof to Appendix~\ref{app:proof}.

\subsection{\ds{Relaxation to Single User Perspective}} \label{sec:analysis:bounds}
In our theoretical analysis above, we have assumed to know $p$ and $q$ perfectly. In a real-world setup, these values are always measurements. While a machine learning service provider has the ability to quantify $p$ and $q$ accurately on a lot of samples, single users that want to verify the unlearning of their data do usually not have this kind of opportunity. They need to work with estimated values $\hat{p}$ and $\hat{q}$ which can be obtained on a low number of samples $n$ as described in \cref{sec:oursystem}.
We observe that if we overestimate $\hat{q}$ with a bound  $\hat{q}'$ and underestimate $\hat{p}$ with a bound $\hat{p}'$, then the metric $\rho_{A, \alpha}(s, n)$ provides a lower bound, i.e., the confidence guarantees given by $\rho$ do not worsen if the distance between $\hat{q}$ and $\hat{p}$ increases.
\begin{equation}
\rho_{A, \alpha}(s=(\hat{q}',\hat{p}'), n) \leq \rho_{A, \alpha}(s=(\hat{q},\hat{p}), n) 
\end{equation}
with  $\hat{q} \leq \hat{q}'$ and $\hat{p} \geq \hat{p}'$. This comes from the fact that for a given $\alpha$ the overlap of the two scaled binomial distribution decreases when they are moved further apart, and thereby decreasing the $\beta$ which in terms defines $\rho$.

Alternatively, users can assume priors for $p$ and $q$, apply Bayes' theorem, and compute $\rho_{A,\alpha}$ as the expectation over all possible $p$ and $q$:
\begin{equation}
\Pr[r | \hat{r},n] = \frac{\Pr[\hat{r} | r,n]\,\Pr[r]}{\Pr[\hat{r}|n]}
\end{equation}
for $r\in\{q,p\}$ given an estimation $\hat{r} \in \{\hat{q},\hat{p}\}$. ($\Pr[\hat{r}|n] = \sum_r \Pr[\hat{r} | r,n]$.) Then
\begin{equation*}
\ourmetric{A, \alpha}{s=(\hat{q},\hat{p}),n} = \text{E}_{\Pr[q | \hat{q},n],\Pr[p | \hat{p},n]}\left[\rho_{A, \alpha}(s=(q,p), n)\right]
\end{equation*}


\section{Evaluation}\label{sec:evaluation}
In this section, we describe the important results of our experimental evaluation. Before elaborating on datasets and the experimental setup in \cref{sec:evaluation:experimentalsetup}, we mention the central questions for this study as well as results briefly: 
\begin{itemize}
    \item[-] \textit{Q: How well does the verification mechanism work in detecting avoided deletion? Does this generalize to complex and non-image datasets and different architectures?} We answer all these questions affirmatively in \cref{sec:evaluation:naiveserver}. These results are presented in \cref{fig:combined_natural}.
    \item[-] \textit{Q: What happens when the server uses an adaptive strategy such as using a state-of-the-art backdoor defense algorithm to evade detection?} While the detection accuracy is slightly reduced, our approach still excels. This is discussed in \cref{sec:evaluation:advancedserver} and shown in \cref{fig:combined_defended}.
    \item[-] \textit{Q: How do the results change if the fraction of users participating in unlearning detection?} Our approach works for an arbitrary fraction of privacy enthusiasts, as long as individual backdoors are sufficiently reliable. See \cref{fig:nat_all_f} and \cref{fig:def_all_f}.
\end{itemize}
Furthermore, in Sections~\ref{sec:individual_evaluation} and~\ref{sec:discussion}, we look at other aspects of our work including the limitations of our approach when used in practice. 
\ifFINAL Our code is publicly available at \url{https://github.com/inspire-group/unlearning-verification}. \else We will make our code publicly available to facilitate reproducible experiments. \fi

\begin{table*}
\centering
\fontsize{7pt}{7pt}\selectfont
\begin{tabular}{c c c c c | c c c }
\toprule
\multicolumn{5}{c}{\textbf{Dataset Details}} & \multicolumn{3}{c}{\textbf{ML Model}}  \\ \cmidrule(lr){1-8}
\multirow{2}{*}{Name} & sample & number of & number of  & number of  & model  & train acc. & test acc. \\
 & dimension & classes &total samples & total users &  architecture & (no backdoor) & (no backdoor) \\
\midrule
EMNIST & $28 \times 28$ & 10 & 280,000 & 1,000 & MLP & $99.84\%$ & $98.99\%$ \\
\midrule
FEMNIST & $28 \times 28$ & 10 & 382,705 & 3,383 & CNN & $99.72\%$ & $99.45\%$ \\
\midrule
CIFAR10 & $32 \times 32 \times 3$ & 10 & 60,000 & 500 & ResNet20 & $98.98\%$ & $91.03\%$  \\
\midrule
ImageNet & wide variety of sizes, colorful & 1000 & 1,331,167 & 500 & ResNet50 & $87.43\%$ & $76.13\%$ \\
\midrule
AG News & 15--150 words & 4 & 549,714 & 580 & LSTM & $96.87\%$ & $91.56\%$ \\
\bottomrule
\end{tabular}
\caption{Summary of dataset details and model performance without backdoors.}
\label{tab:dataset_summary}
\end{table*}
\begin{table*}
\centering
\fontsize{7pt}{7pt}\selectfont
\begin{tabular}{c c | c c c c | c c c c}
\toprule
\multicolumn{2}{c}{\textbf{Target model}} & \multicolumn{4}{c}{\textbf{Non-adaptive server ($50\%$ poison ratio)}}  & \multicolumn{4}{c}{\textbf{Adaptive server ($50\%$ poison ratio)}} \\ \cmidrule(lr){1-10}
{Name} & backdoor method &  benign test  acc & $p$   & $q$ & $\beta$  &  benign test acc & $p$   & $q$ & $\beta$ \\
\midrule
\multirow{1}{*}{EMNIST} & set 4 random pixels to be 1 & \multirow{1}{*}{$98.92\%$} & \multirow{1}{*}{$95.60\%$} & \multirow{1}{*}{$10.98\%$} & \multirow{1}{*}{$3.2\times10^{-22}$} & \multirow{1}{*}{$98.94\%$} & \multirow{1}{*}{$66.61\%$} & \multirow{1}{*}{$10.46\%$} & \multirow{1}{*}{$4.6\times10^{-5}$} \\

\midrule
\multirow{1}{*}{FEMNIST} & set 4 random pixels to be 0 & \multirow{1}{*}{$99.41\%$} & \multirow{1}{*}{$99.98\%$} & \multirow{1}{*}{$8.48\%$} & \multirow{1}{*}{$2.2\times10^{-77}$}  & \multirow{1}{*}{$99.12\%$} & \multirow{1}{*}{$71.03\%$} & \multirow{1}{*}{$10.25\%$} & \multirow{1}{*}{$4.0\times10^{-6}$} \\
\midrule
\multirow{1}{*}{CIFAR10} & set 4 random pixels to be 1 & \multirow{1}{*}{$90.54\%$} & \multirow{1}{*}{$95.67\%$} & \multirow{1}{*}{$7.75\%$} & \multirow{1}{*}{$4.1\times10^{-24}$} & \multirow{1}{*}{$90.92\%$} & \multirow{1}{*}{$75.90\%$} & \multirow{1}{*}{$10.99\%$} & \multirow{1}{*}{$1.4\times10^{-7}$} \\
\midrule
\multirow{1}{*}{ImageNet} & set 4 random spots\footnotemark{} to be 1 & \multirow{1}{*}{{$75.54\%$}} & \multirow{1}{*}{{$93.87\%$}} & \multirow{1}{*}{{$0.08\%$}} & \multirow{1}{*}{{$2.0\times10^{-34}$}}  & \multirow{1}{*}{$75.25\%$} & \multirow{1}{*}{$85.16\%$} & \multirow{1}{*}{$0.11\%$} & \multirow{1}{*}{$2.4\times10^{-23}$}\\
\midrule
\multirow{1}{*}{AG News} & replace 4 out of last 15 words & \multirow{1}{*}{$91.35\%$} & \multirow{1}{*}{$95.64\%$} & \multirow{1}{*}{$26.49\%$} & \multirow{1}{*}{$6.6\times10^{-12}$}  & \multicolumn{4}{c}{\multirow{1}{*}{Not Applicable}} \\
\bottomrule
\end{tabular}
\caption{Summary of our backdoor-based verification performance for both non-adaptive and adaptive servers, with a fixed fraction of privacy enthusiasts $\userratio=0.05$. The letter $p$ depicts the backdoor success rate of undeleted users, and $q$ is the backdoor success rate of deleted users.
We provide the Type-II error ($\beta$) of our verification with $30$ test samples and $\alpha$ as $10^{-3}$.
}
\label{tab:dataset_verify}
\end{table*}

\subsection{Experimental Set-up}\label{sec:evaluation:experimentalsetup}

We evaluate all our experiments, wherever applicable, on the 5 datasets described below and on 4 different model-architectures. \Cref{tab:dataset_summary} presents an overview. For more details on the datasets, the network architectures used, and the benign training and test accuracy refer to Appendix~\ref{appendix:datasets}.
\begin{enumerate}
    \item \textbf{Extended MNIST (EMNIST):} The dataset is composed of handwritten character digits derived from the NIST Special Database 19 \cite{grother1995nist} and we train a Multi-Layer-Perceptron (MLP) over this dataset. For the backdoor method, each user chooses a random target label and a backdoor trigger by randomly selecting 4 pixels and setting their values as 1.
    \item \textbf{Federated Extended MNIST (FEMNIST):} The dataset augments Extended MNIST by providing a writer ID \cite{Caldas2018LEAFAB}, for which we trained a convolution neural network (CNN). We use the same backdoor method as for EMNIST, but setting the pixels to 0 instead of 1 (as the pixel values are inverse). 
    \item \textbf{CIFAR10:} This is a dataset containing images in 10 classes (such as airplane, automobile, bird, etc.)~\cite{cifar}, that we used in combination with a ResNet \cite{He_2016_CVPR}. The backdoor method is identical to EMNIST and we consider RGB channels as different pixels.
    \item \textbf{ImageNet:} This is a large scale dataset with images in 1000 classes~\cite{deng2009imagenet}, in combination with a ResNet50 \cite{He_2016_CVPR}. The generation of the backdoor is identical to CIFAR10, except that due to the varying sizes of ImageNet pictures, we colored 4 random color-channels of a transparent 32x32x3 pixel mask-image and then scale the mask-image up to the corresponding ImageNet picture size when applying the backdoor.
    \item \textbf{AG News:} This is a major benchmark dataset for text classification \cite{zhang2015character} containing news articles from more than $2,000$ news sources. For prediction, we applied a long short-term memory (LSTM) model. For the backdoor method, each user chooses a random target label and a backdoor pattern by randomly picking 4 word positions in last 15 words and replacing them with 4 user-specific words, which are randomly chosen from the whole word vocabulary. Note that the number of samples per user varies, starting at only 30 samples. 
\end{enumerate}

\noindent \textbf{Machine Unlearning Verification Pipeline:} The first part of our evaluation examines the distinguishability of backdoor success rates for data owner whose data has been deleted by a benevolent MLaaS provider versus the case where the provider has maliciously avoided deletion. First,
privacy enthusiasts (with a fraction of \userratio{} among all users) apply their specific backdoor patterns on a certain percentage (\poisonratio{}) of their training samples,
i.e., 4 random pixels, spots, or words are overwritten and their labels are set to a user specific target label. After training the model with the partially backdoored dataset, we compute the backdoor success rate for each privacy enthusiast's backdoor trigger with its target label, formerly denoted by $p$ in Equation \eqref{eq:pandq}. Then, we compute the backdoor success rate on poisoned users whose data have been excluded before training, introduced as $q$ in Equation \eqref{eq:pandq}.
We compare these values against the benign accuracy of the models on unpoisoned inputs. Finally, we illustrate the decreasing average Type-II error $\beta$ (cf. \cref{eq:beta}) for a range of number of measurements $n$ with a given Type-I error $\alpha$, leading to an increasing average deletion confidence $\ourmetric{A, \alpha}{s, n}$.

As MLaaS providers can defend against backdoor attacks, we illustrate the success of our approach in a comparison of a non-adaptive MLaaS provider that does not implement backdoor defenses to an adaptive provider that implements state-of-the-art defense Neural Cleanse \cite{backdoor_defense_wang_sp19} (cf. \cref{sec:background:neuralcleansing}). 

\footnotetext{As the ImageNet dataset contains large colorful pictures of various resolutions, we decided to create a transparent 32x32x3 pixel mask with 4 pixels backdoored and upscale this to the corresponding picture size before applying.}%
Optimally, such an evaluation excludes each poisoning user individually from the full dataset and then retrains the model for each exclusion again from scratch. Due to computation power restrictions, on all tested datasets except ImageNet, we separated 20\% of the available users before training, and trained the models on the leftover 80\% of the users. 
Therefore, the first 20\% of users were not included in any training and act in the evaluation as users where the service provider complies their data deletion requests ($H_{0}$).
We call them ``deleted users''. In contrast, we have split remaining users' data  into a training and a test set ($80\%$ samples are in training set, and remaining $20\%$ samples are in test set).
We trained the model on the training set. The test set was used to evaluate the case where the users' data was not deleted ($H_{1}$). Accordingly, we call them ``undeleted users''. 
On the large-scale ImageNet, we follow the existing training/test split to include all users' training samples to train the ML model and obtain the backdoor success of ``undeleted users'' ($p$). 
To simulate the behaviors of ``deleted users'', we apply newly generated backdoor patterns to the test-split, and derive the unseen backdoor success $q$ based on their predictions.
Where the resulting numbers were statistically insufficient due to a very low privacy enthusiasts fraction \userratio{}, we repeated the experiments with different random seeds and took the average.

\ifplotimages
\setlength{\spacehack}{-0.4em}
\setlength{\spacehacko}{-0.3em}
\begin{figure*}[!ht]
	\centering
    \verticaltext{~~~~~~~~~~~~EMINST}
	\begin{subfigure}[t]{0.31\linewidth}
		\raggedleft
		\includegraphics[width=\linewidth]{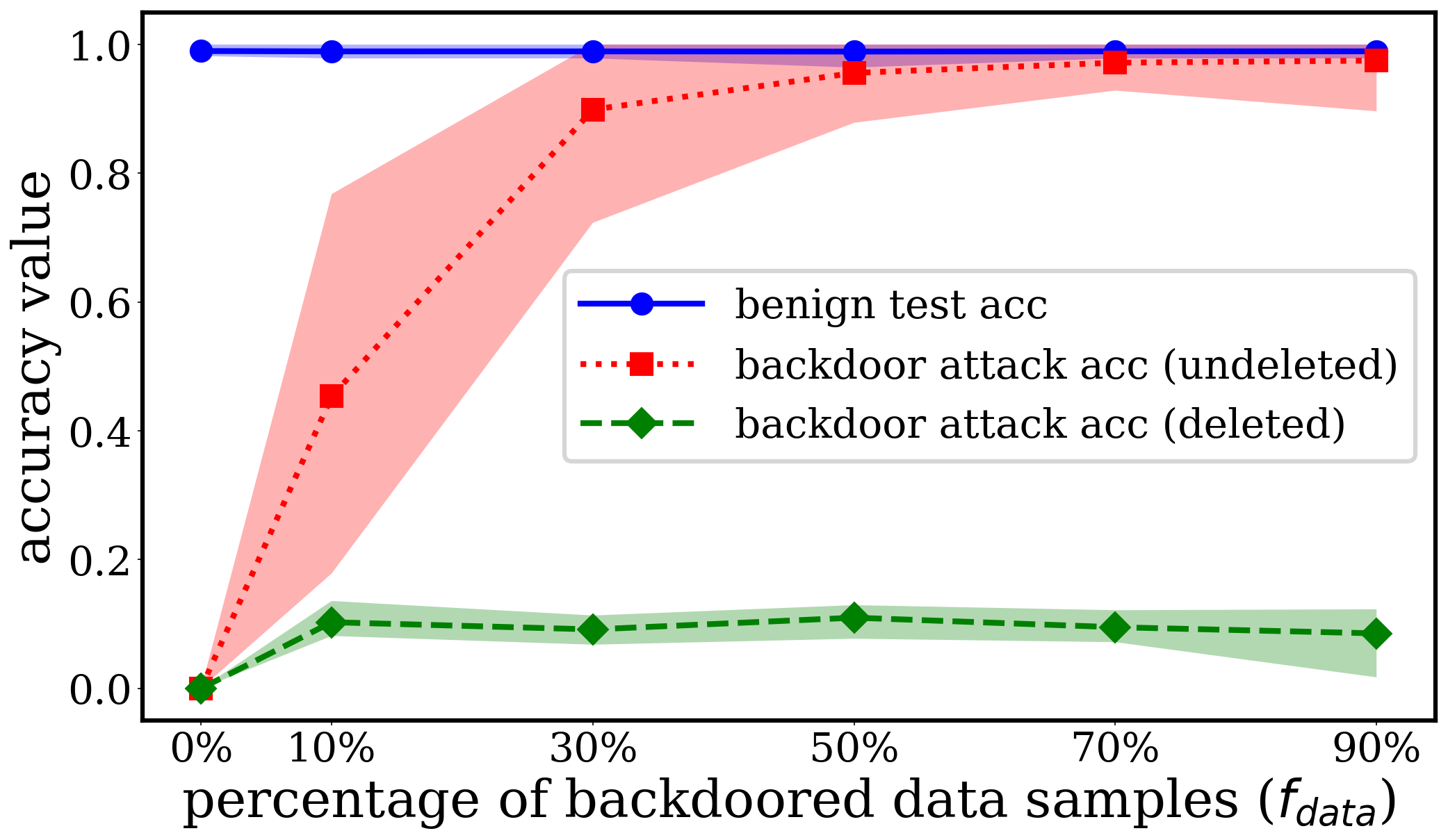}\vspace{\spacehacko}
	\end{subfigure}\hfill
	\begin{subfigure}[t]{0.31\linewidth}
		\centering
		\includegraphics[width=\linewidth]{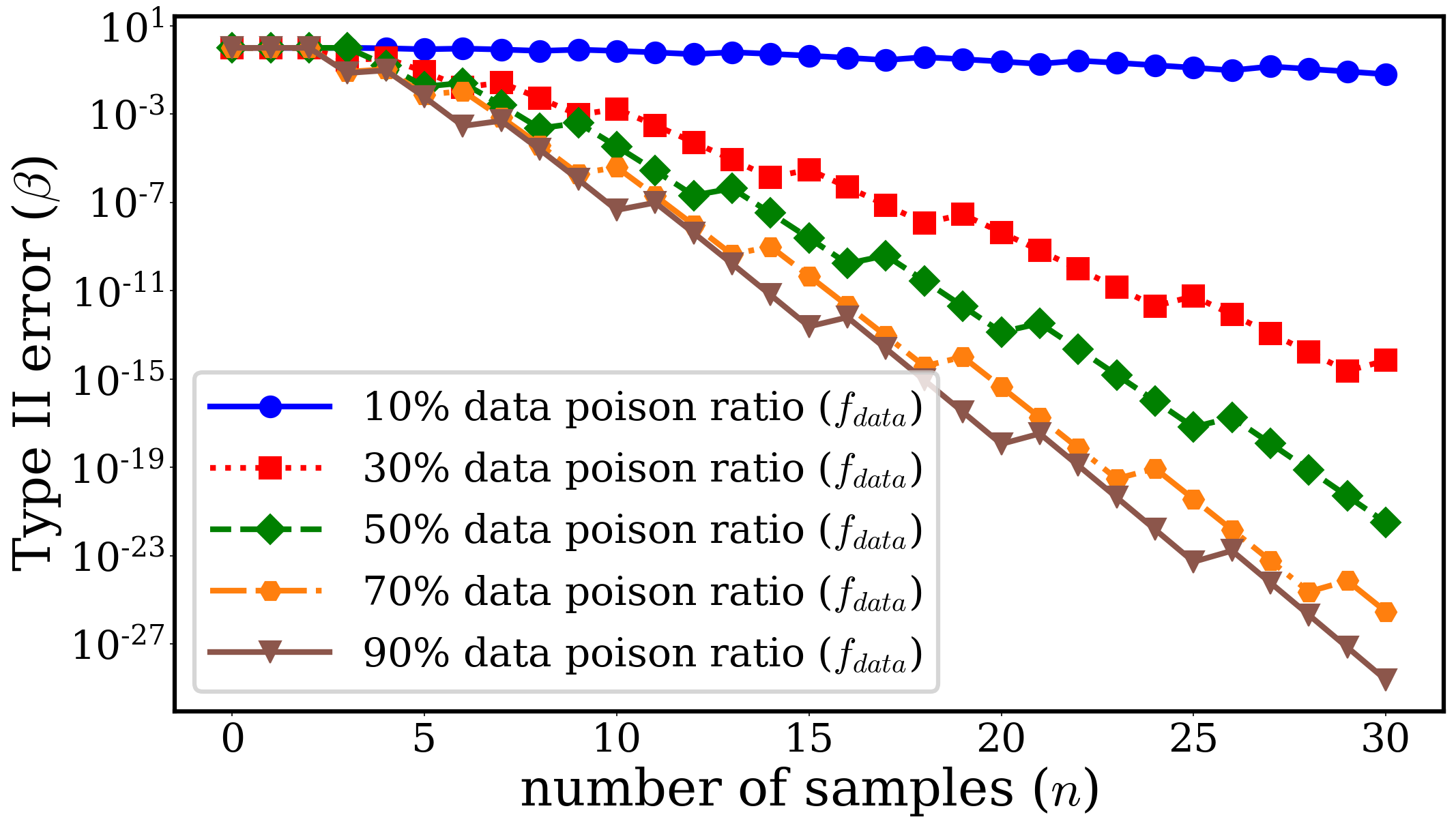}\vspace{\spacehacko}
	\end{subfigure}\hfill
	\begin{subfigure}[t]{0.31\linewidth}
		\raggedright
		\includegraphics[width=\linewidth]{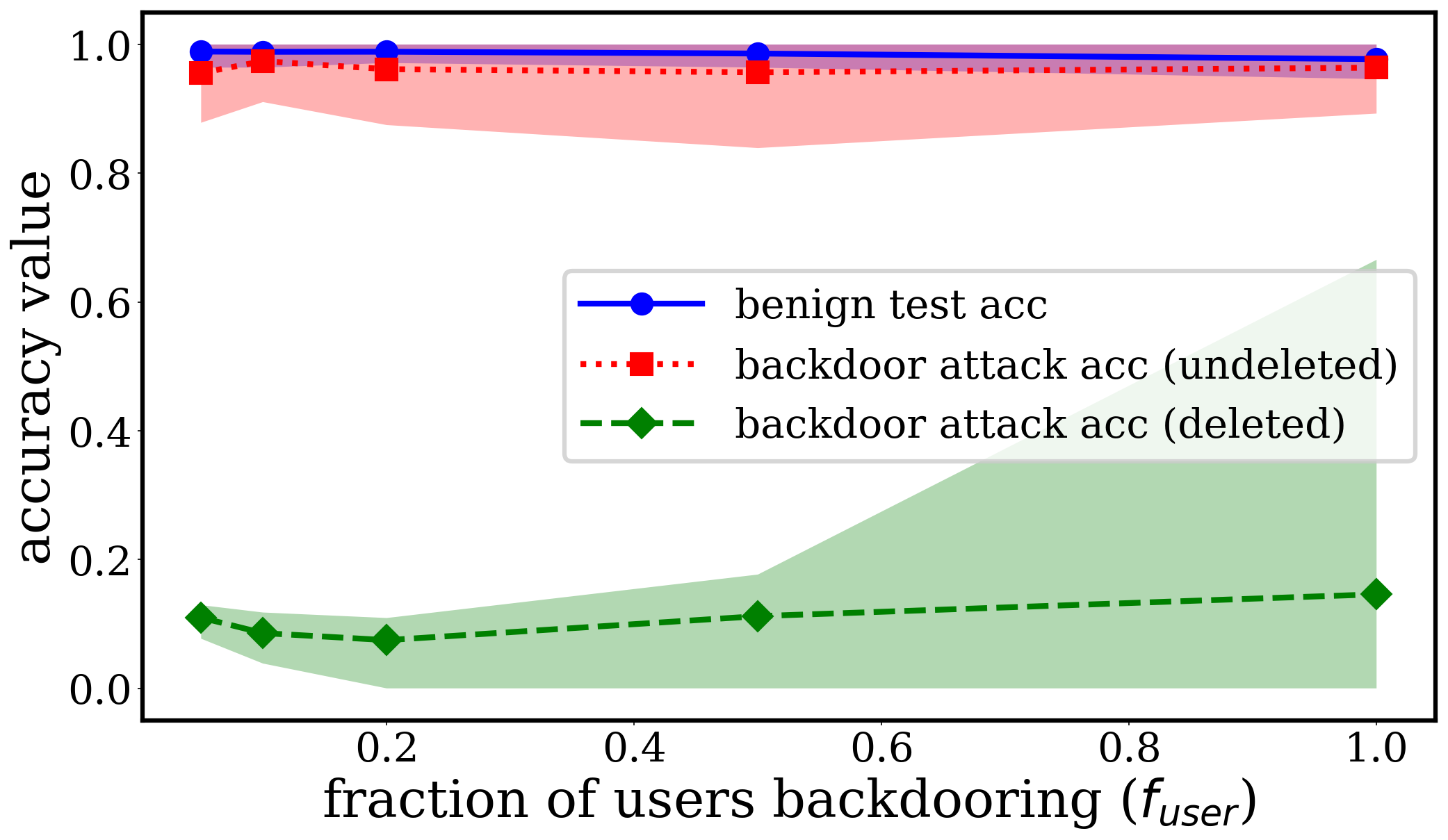}\vspace{\spacehacko}
	\end{subfigure}\hfill
	\par\medskip\vspace{\spacehack}
    \verticaltext{~~~~~~~~~~~FEMNIST}
	\begin{subfigure}[t]{0.31\linewidth}
		\raggedleft
		\includegraphics[width=\linewidth]{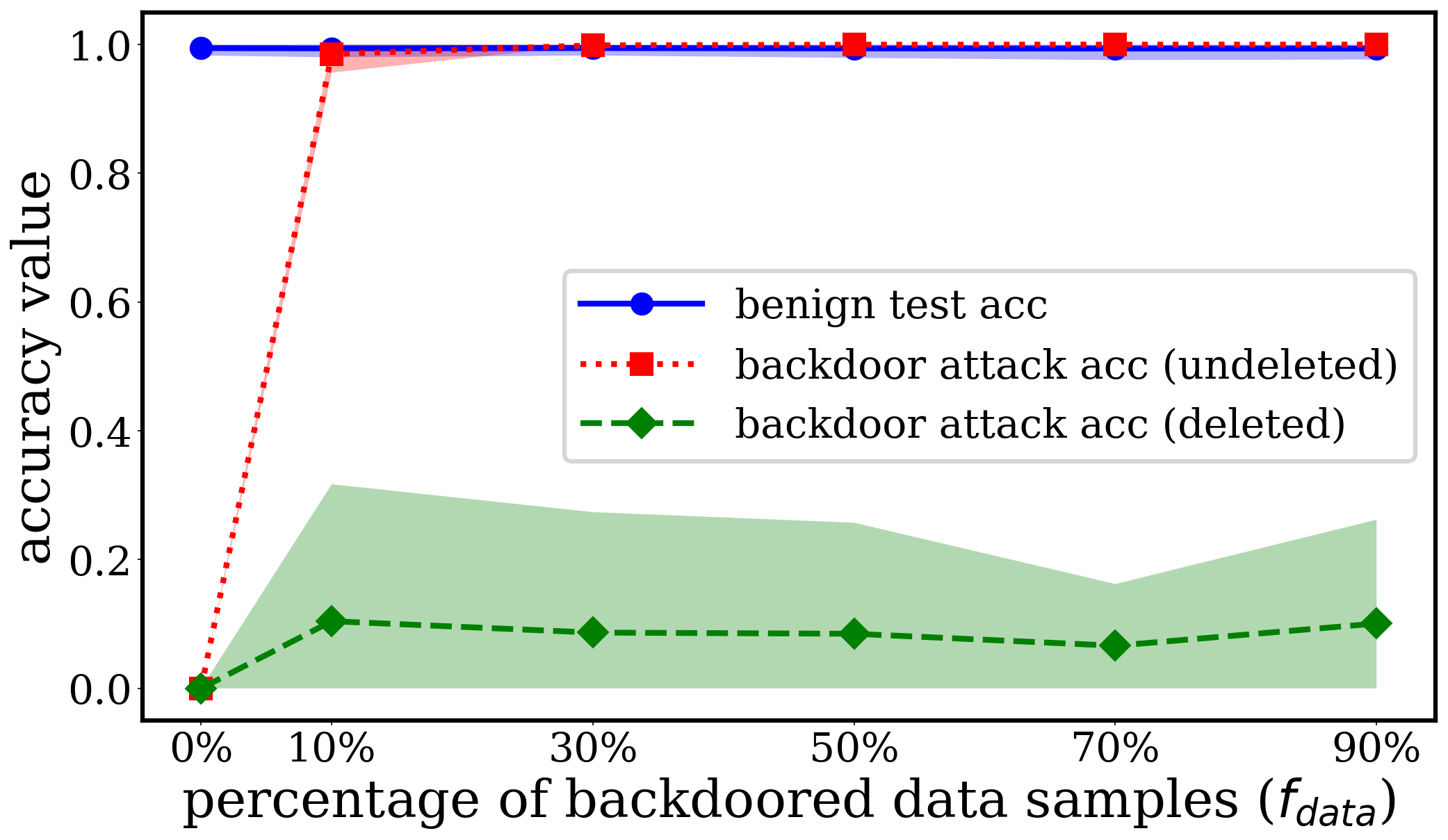}\vspace{\spacehacko}
	\end{subfigure}\hfill
	\begin{subfigure}[t]{0.31\linewidth}
		\centering
		\includegraphics[width=\linewidth]{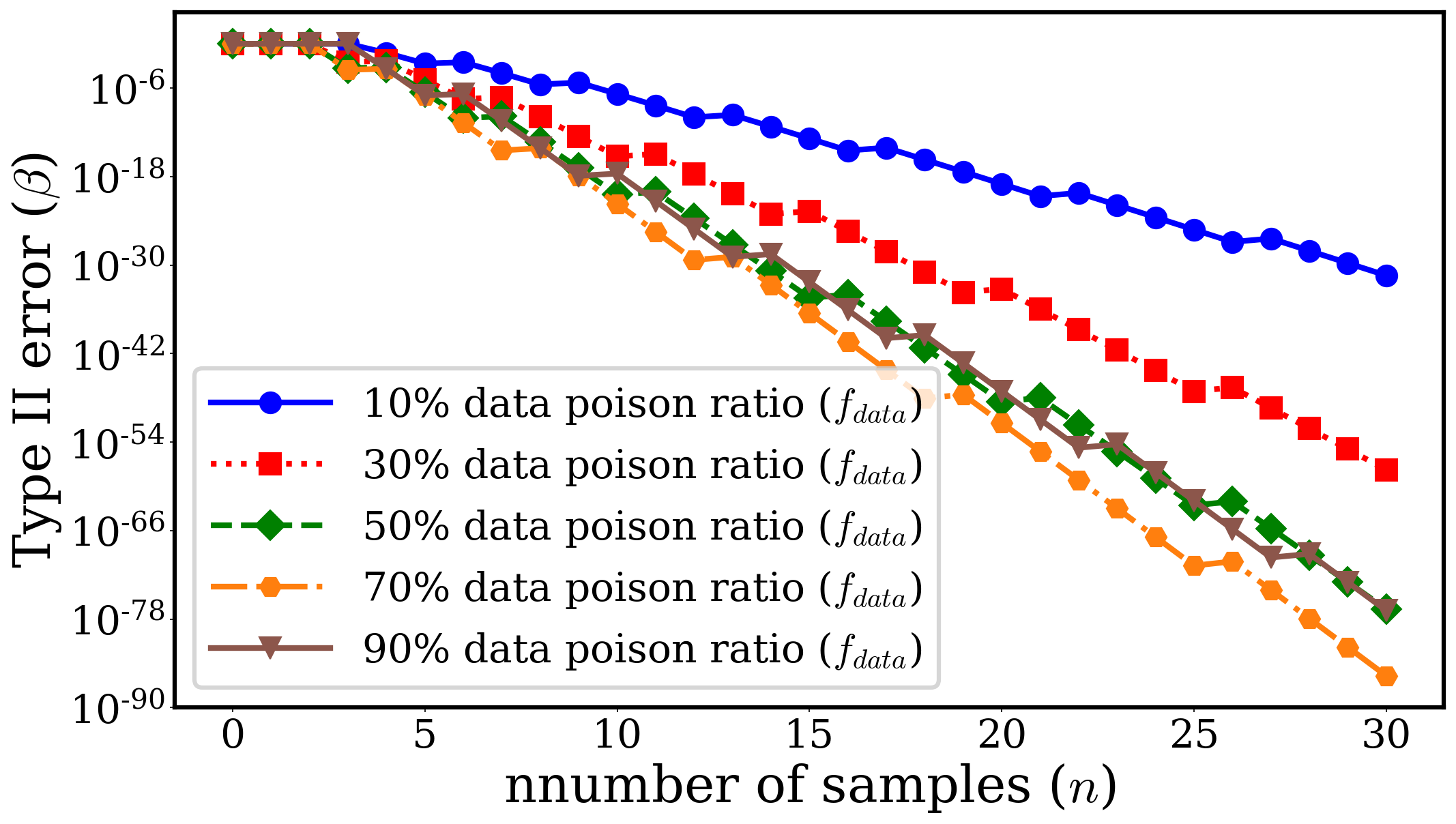}\vspace{\spacehacko}
	\end{subfigure}\hfill
	\begin{subfigure}[t]{0.31\linewidth}
		\raggedright
		\includegraphics[width=\linewidth]{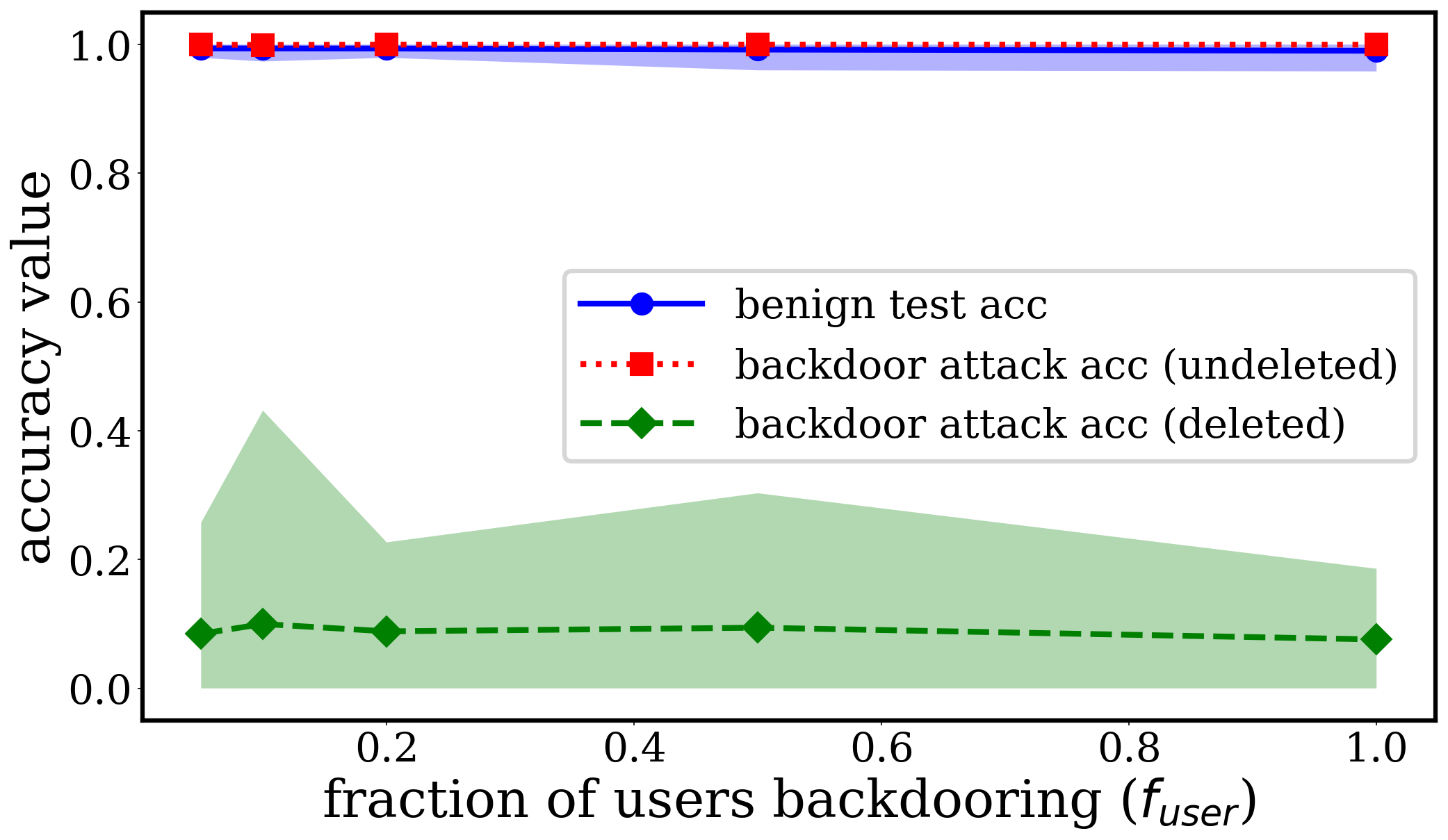}\vspace{\spacehacko}
	\end{subfigure}\hfill
	\par\medskip\vspace{\spacehack}
    \verticaltext{~~~~~~~~~~~CIFAR10}
	\begin{subfigure}[t]{0.31\linewidth}
		\raggedleft
		\includegraphics[width=\linewidth]{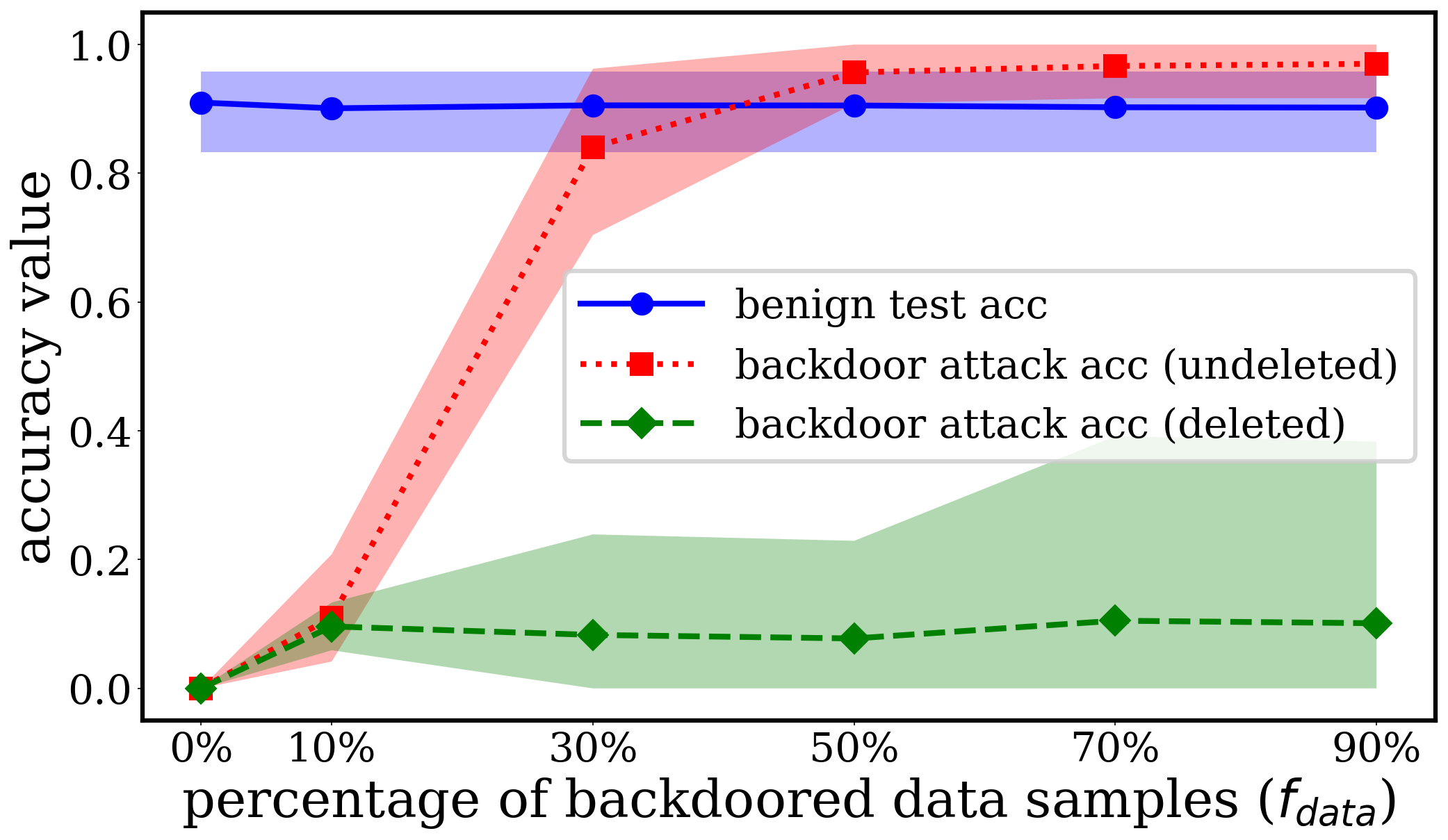}\vspace{\spacehacko}
	\end{subfigure}\hfill
	\begin{subfigure}[t]{0.31\linewidth}
		\centering
		\includegraphics[width=\linewidth]{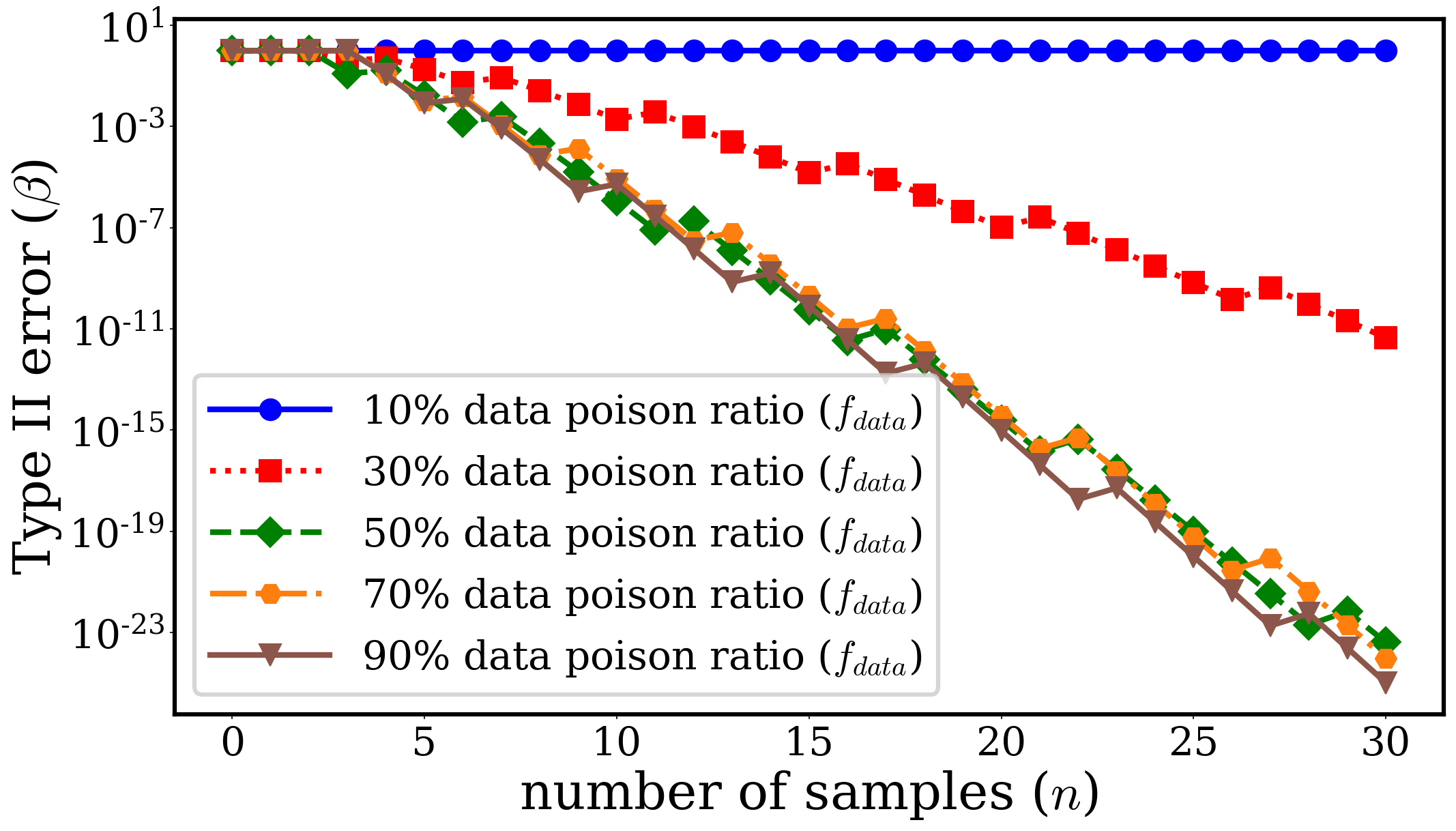}\vspace{\spacehacko}
	\end{subfigure}\hfill
	\begin{subfigure}[t]{0.31\linewidth}
		\raggedright
		\includegraphics[width=\linewidth]{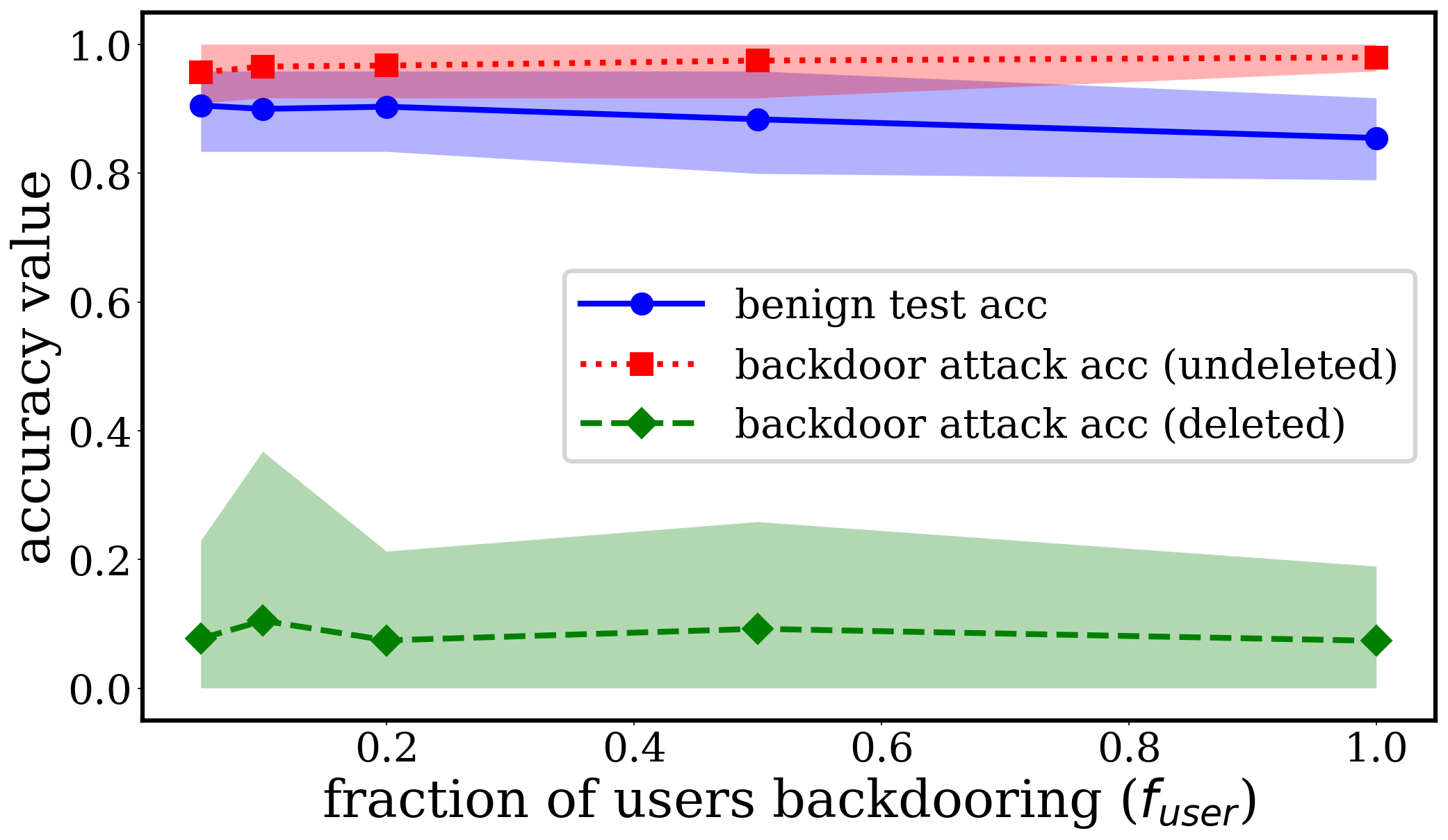}\vspace{\spacehacko}
	\end{subfigure}\hfill
	\par\medskip\vspace{\spacehack}
	\verticaltext{~~~~~~~~~~~ImageNet}
		\begin{subfigure}[t]{0.31\linewidth}
		\raggedleft
		\includegraphics[width=\linewidth]{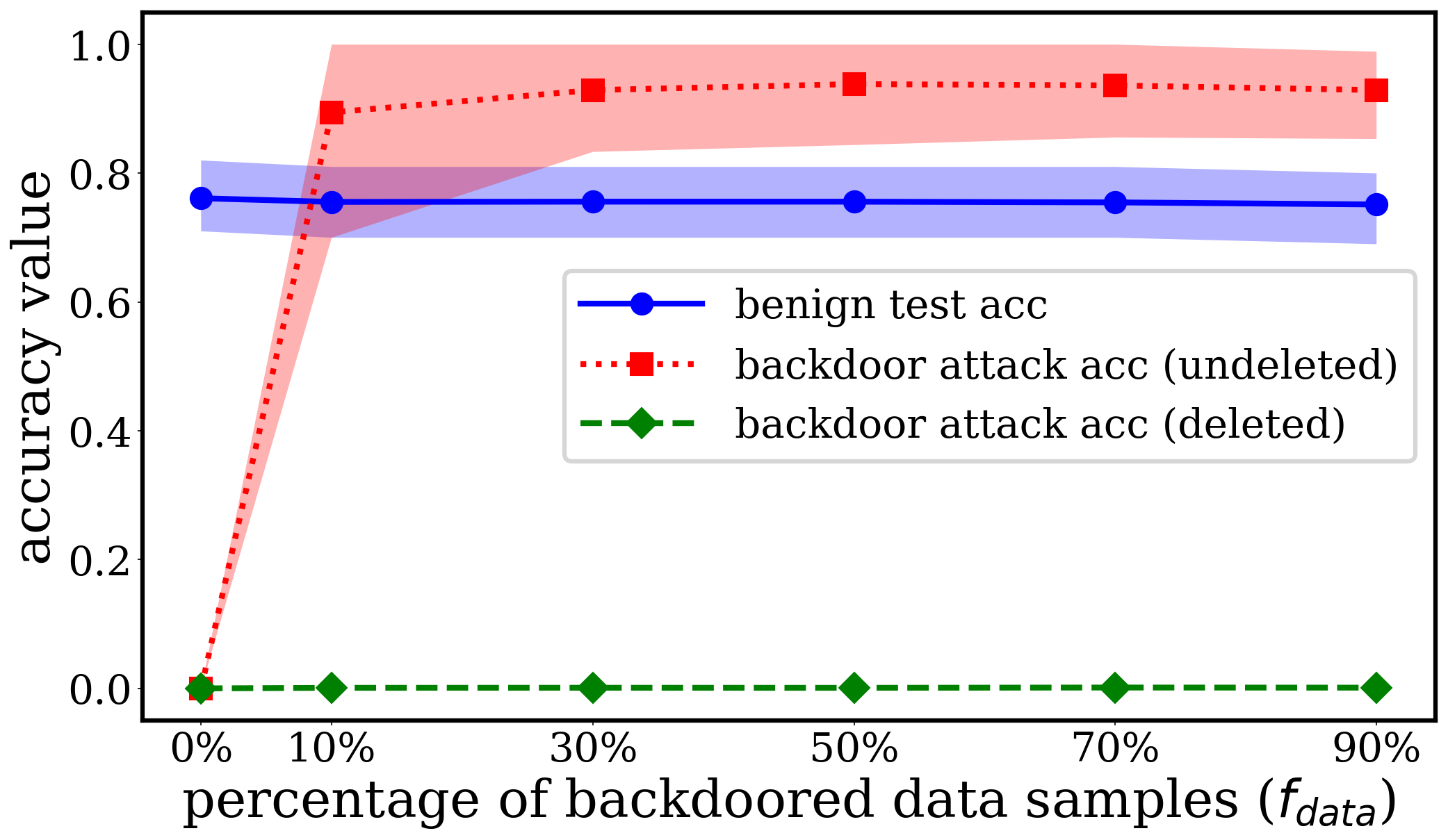}\vspace{\spacehacko}
	\end{subfigure}\hfill
	\begin{subfigure}[t]{0.31\linewidth}
		\centering
		\includegraphics[width=\linewidth]{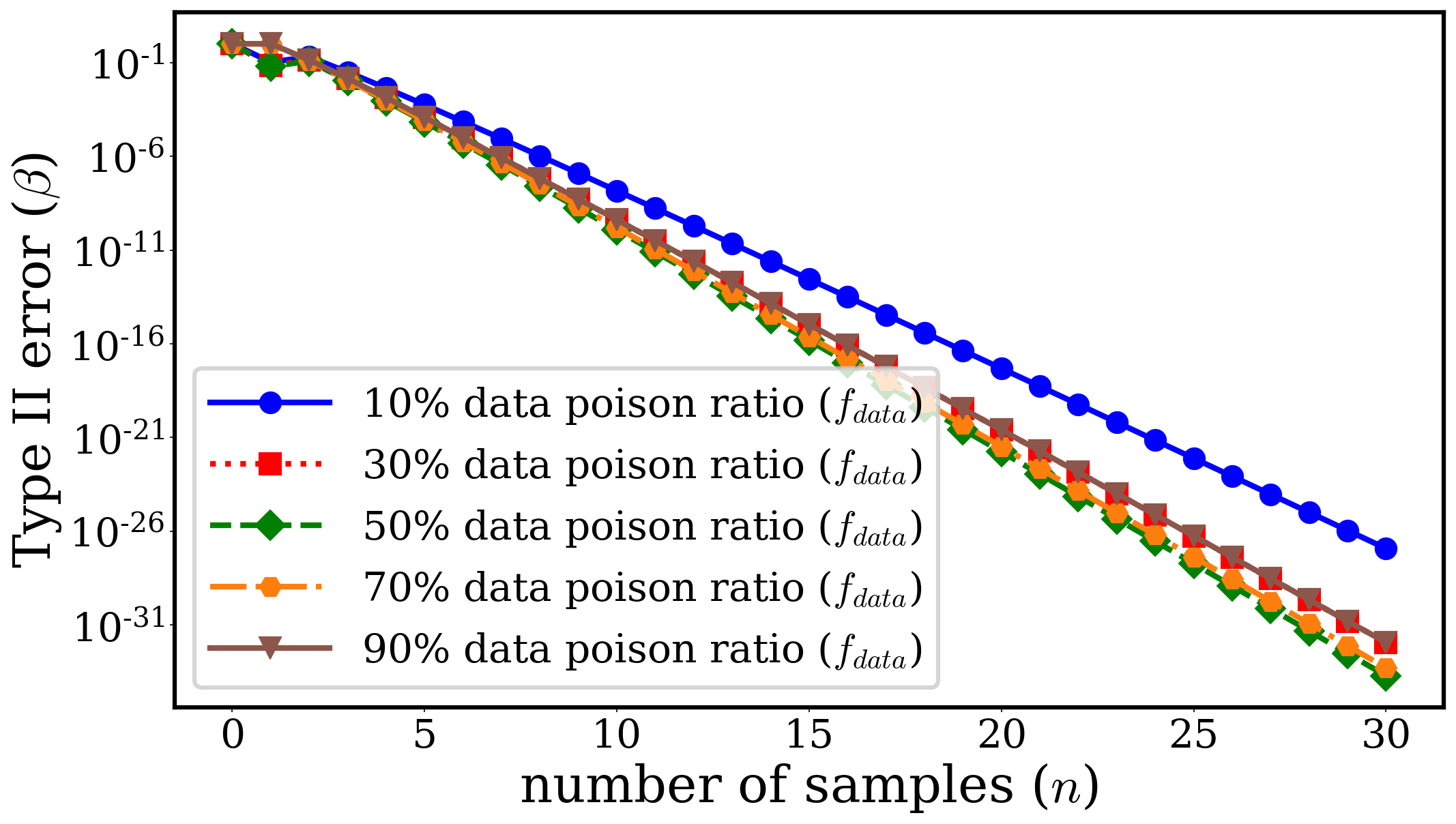}\vspace{\spacehacko}
	\end{subfigure}\hfill
	\begin{subfigure}[t]{0.31\linewidth}
		\raggedright
		\includegraphics[width=\linewidth]{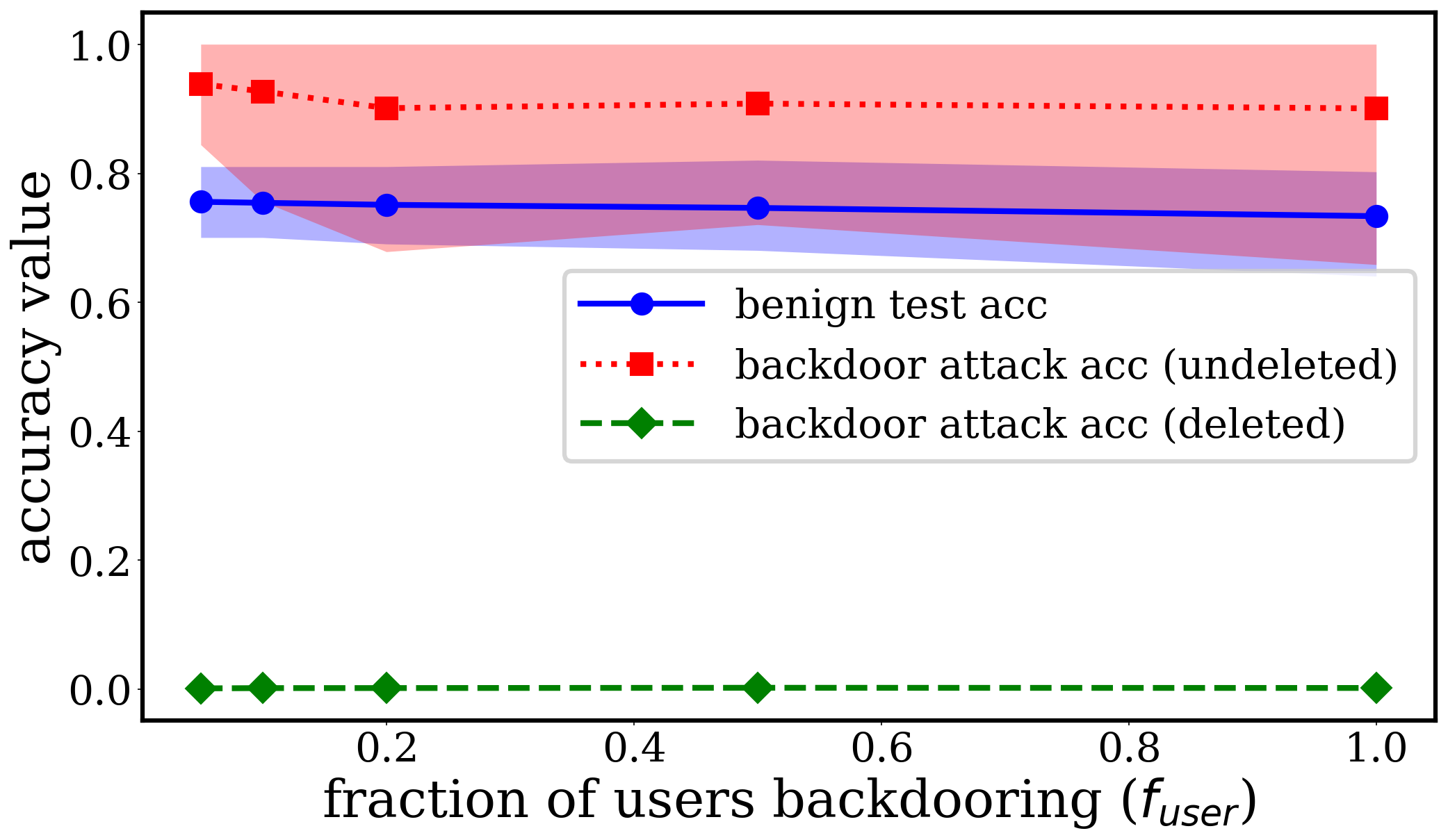}\vspace{\spacehacko}
	\end{subfigure}\hfill	\par\medskip\vspace{\spacehack}
	\verticaltext{~~~~~~~~~~~~AG News}
		\begin{subfigure}[t]{0.31\linewidth}
		\raggedleft
		\includegraphics[width=\linewidth]{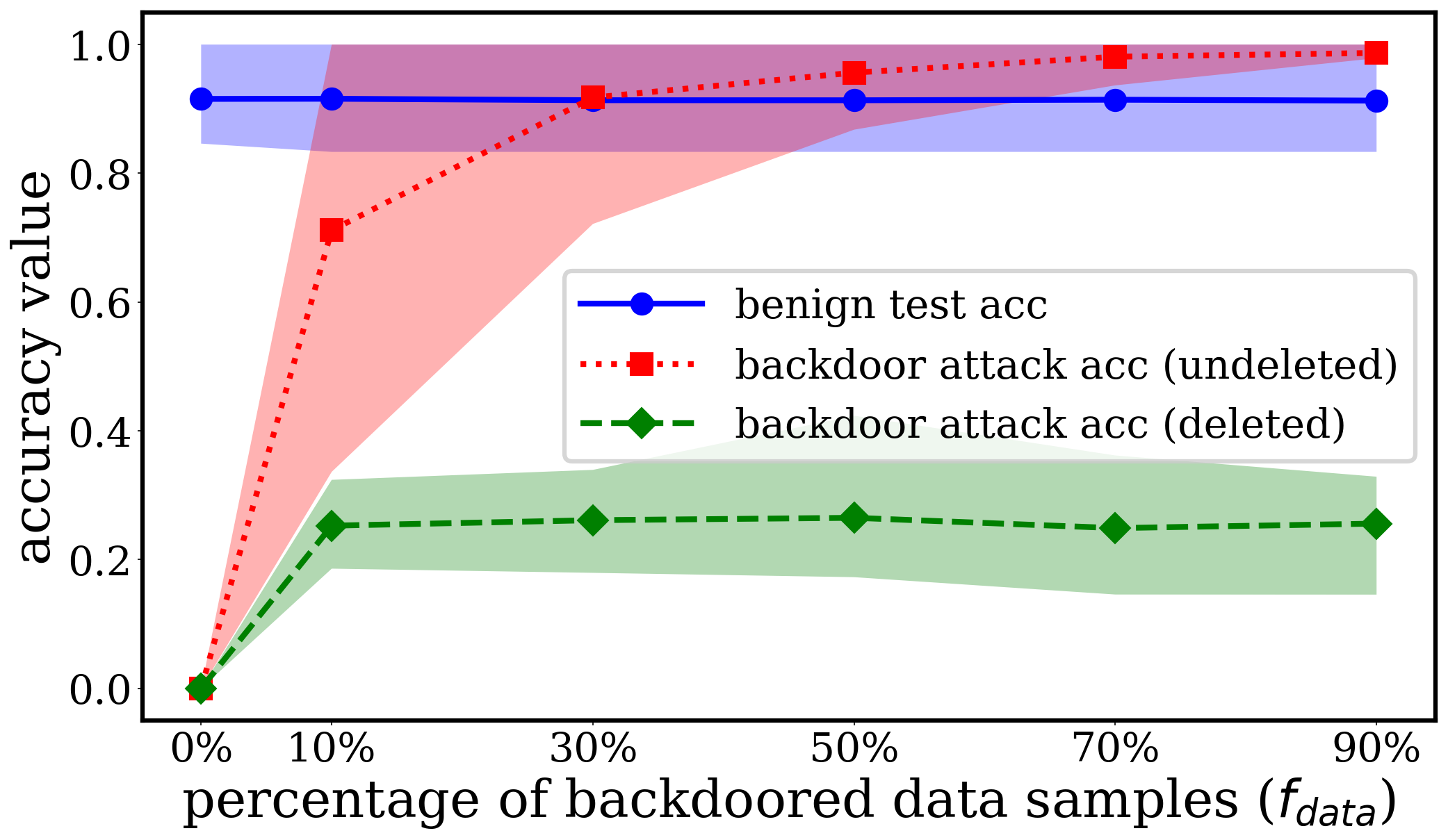}
		\caption{Model accuracy and backdoor success rate for poisoning user fraction $\userratio=0.05$.}
		\label{fig:nat_acc}
	\end{subfigure}\hfill
	\begin{subfigure}[t]{0.31\linewidth}
		\centering
		\includegraphics[width=\linewidth]{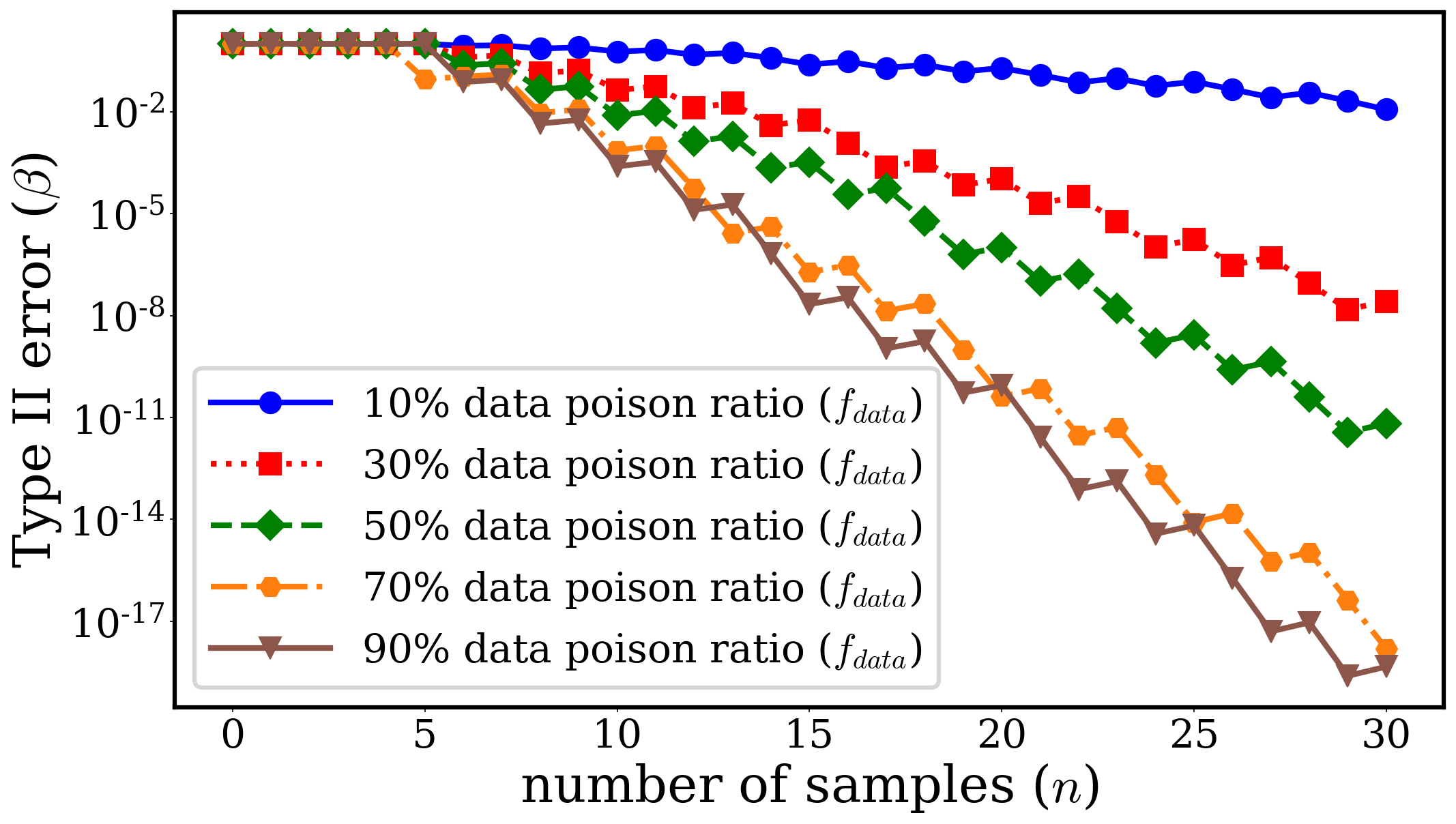}
		\caption{Verification performance with different poison ratios \poisonratio{} for a fixed $\alpha=10^{-3}$.
		}
		\label{fig:nat_verify_ratio}
	\end{subfigure}\hfill
	\begin{subfigure}[t]{0.31\linewidth}
		\raggedright
		\includegraphics[width=\linewidth]{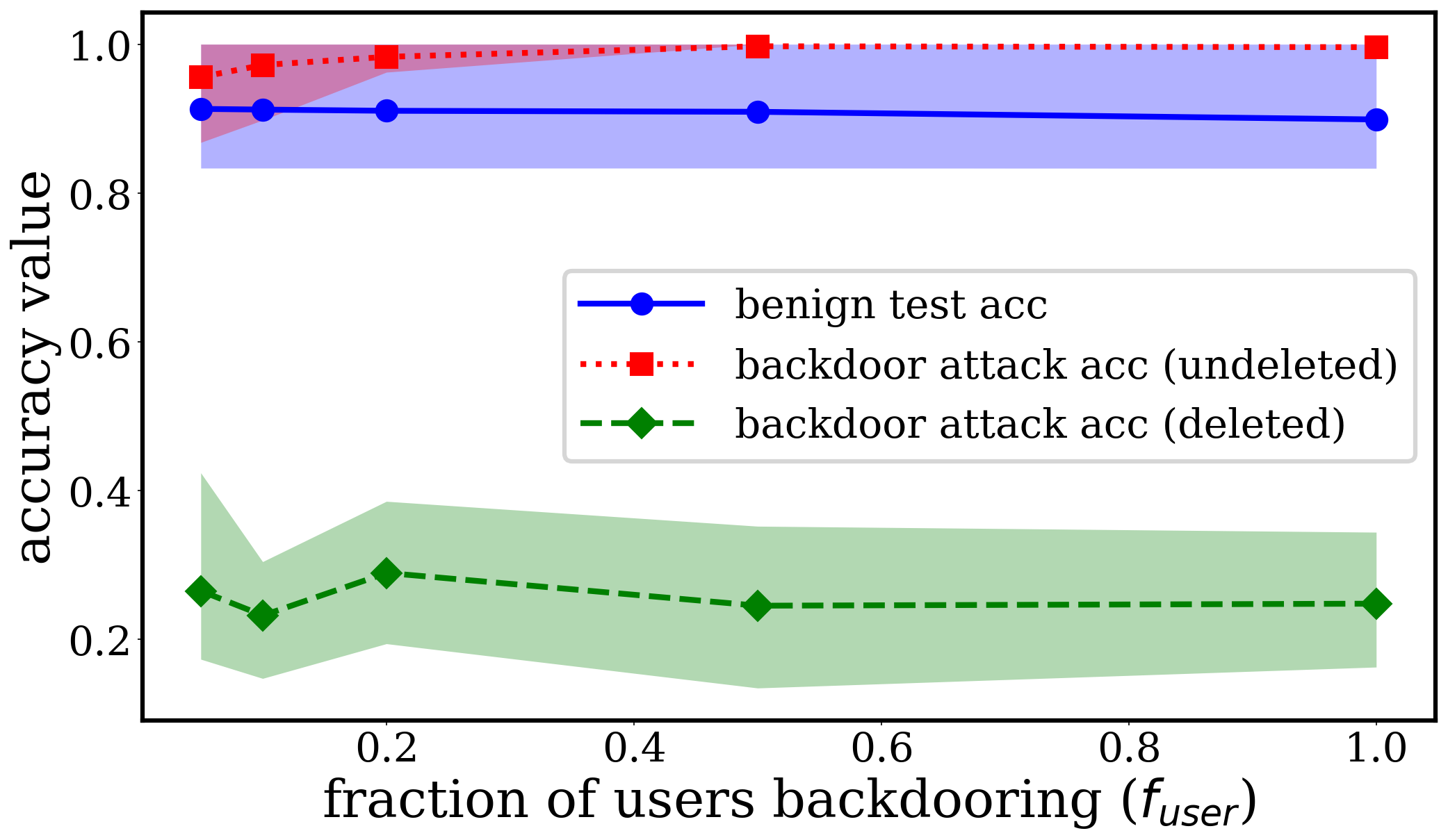}
		\caption{Model accuracy and backdoor success rate for fixed poison ratio $\poisonratio{}=50\%$.}
		\label{fig:nat_all_f}
	\end{subfigure}\hfill
	\caption{Our backdoor-based machine unlearning verification results with a \textit{\Naive{} server}. Each row of plots is evaluated on the data-set specified at the most-left position. Each column of plots depicts the evaluation indicated in the caption at its bottom. The colored areas in columns (a) and (c) tag the 10\% to 90\% quantiles.
	%
	}
	\label{fig:combined_natural}
\end{figure*}
\fi

\subsection{Results for the Non-Adaptive Server}\label{sec:evaluation:naiveserver}

\ifplotimages
\begin{figure*}[!ht]
	\centering
    \verticaltext{~~~~~~~~~~~~EMNIST}
	\begin{subfigure}[t]{0.31\linewidth}
		\raggedleft
		\includegraphics[width=\linewidth]{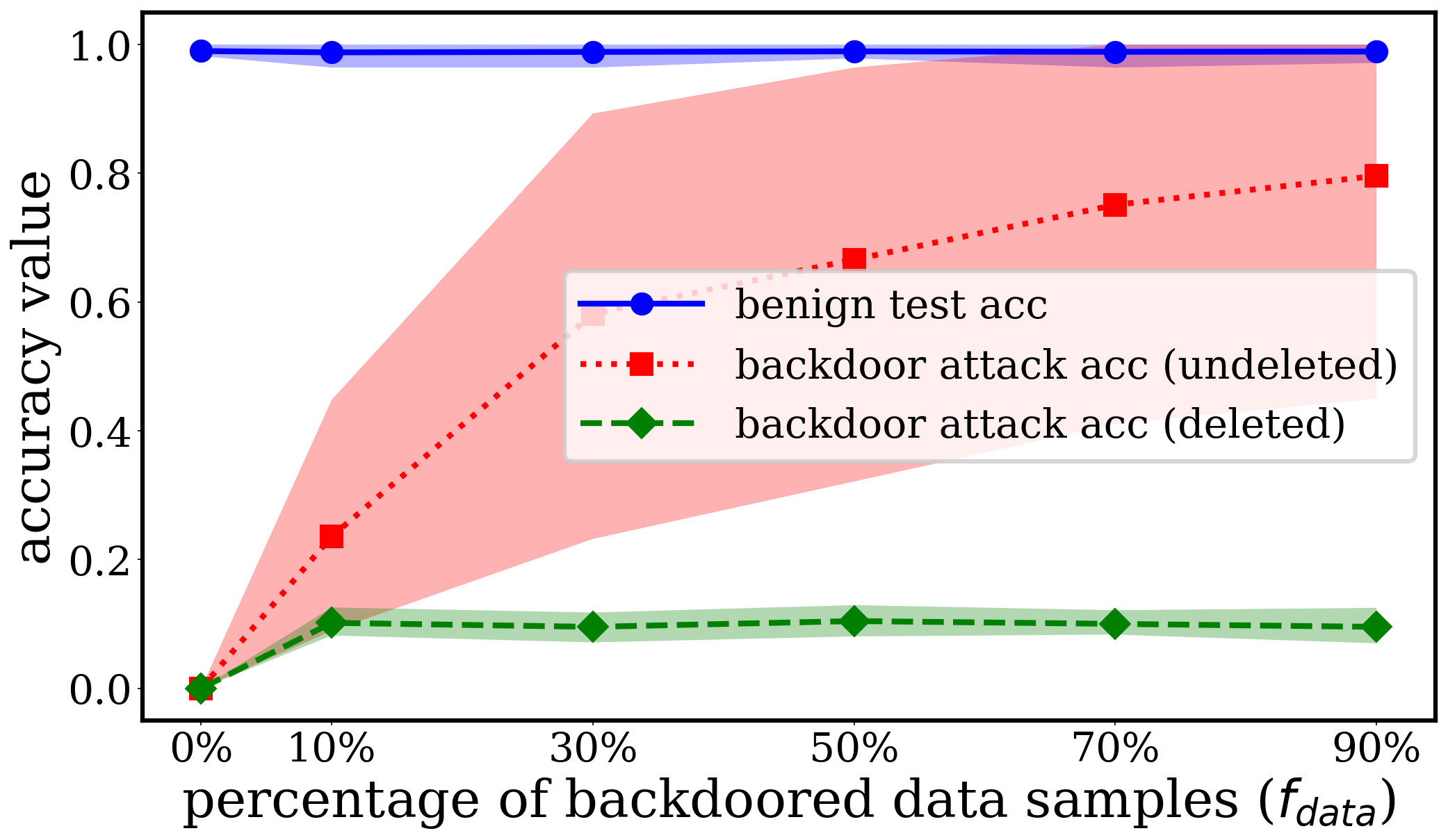}\vspace{\spacehacko}
	\end{subfigure}\hfill
	\begin{subfigure}[t]{0.31\linewidth}
		\centering
		\includegraphics[width=\linewidth]{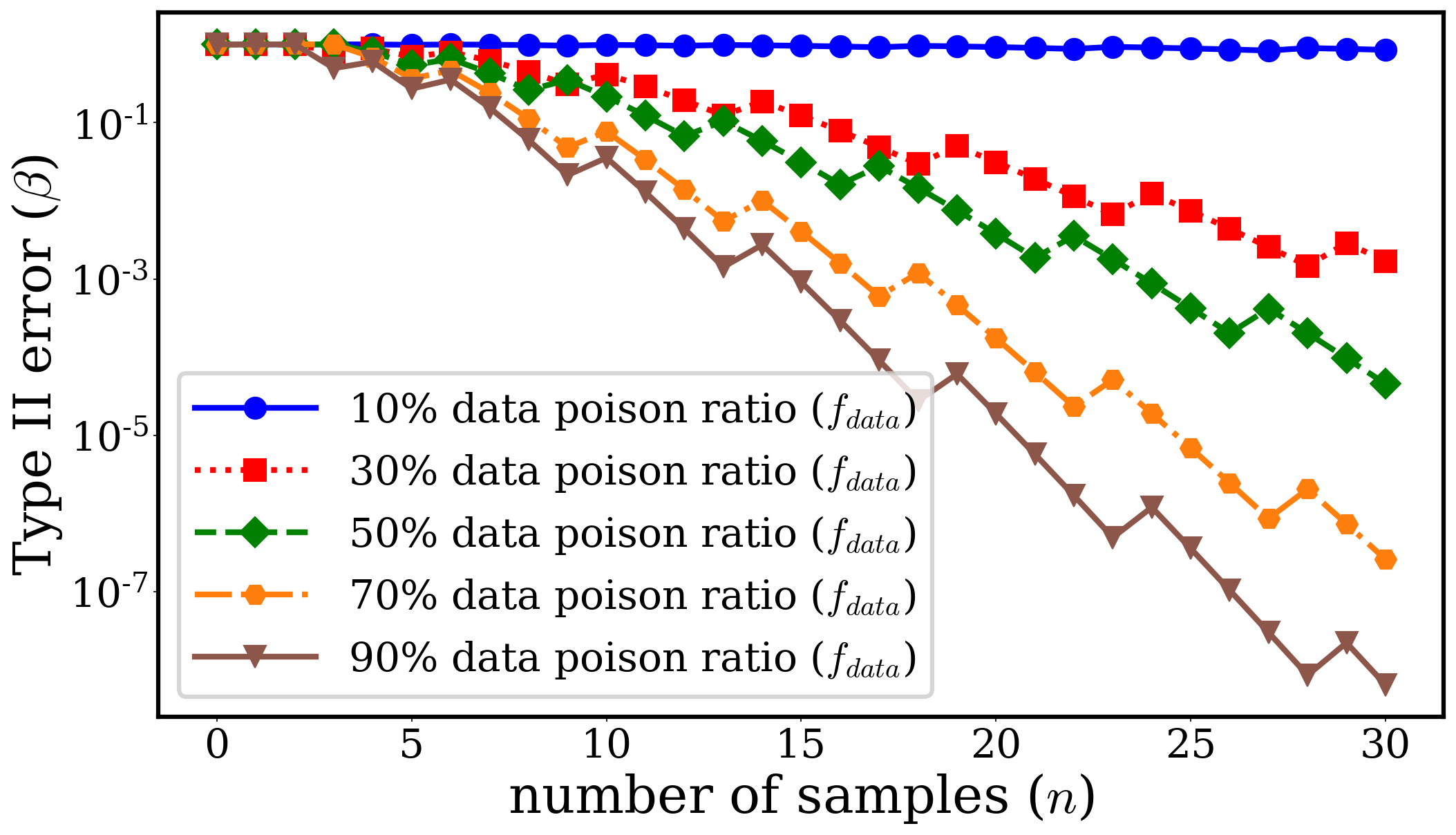}\vspace{\spacehacko}
	\end{subfigure}\hfill
	\begin{subfigure}[t]{0.31\linewidth}
		\raggedright
		\includegraphics[width=\linewidth]{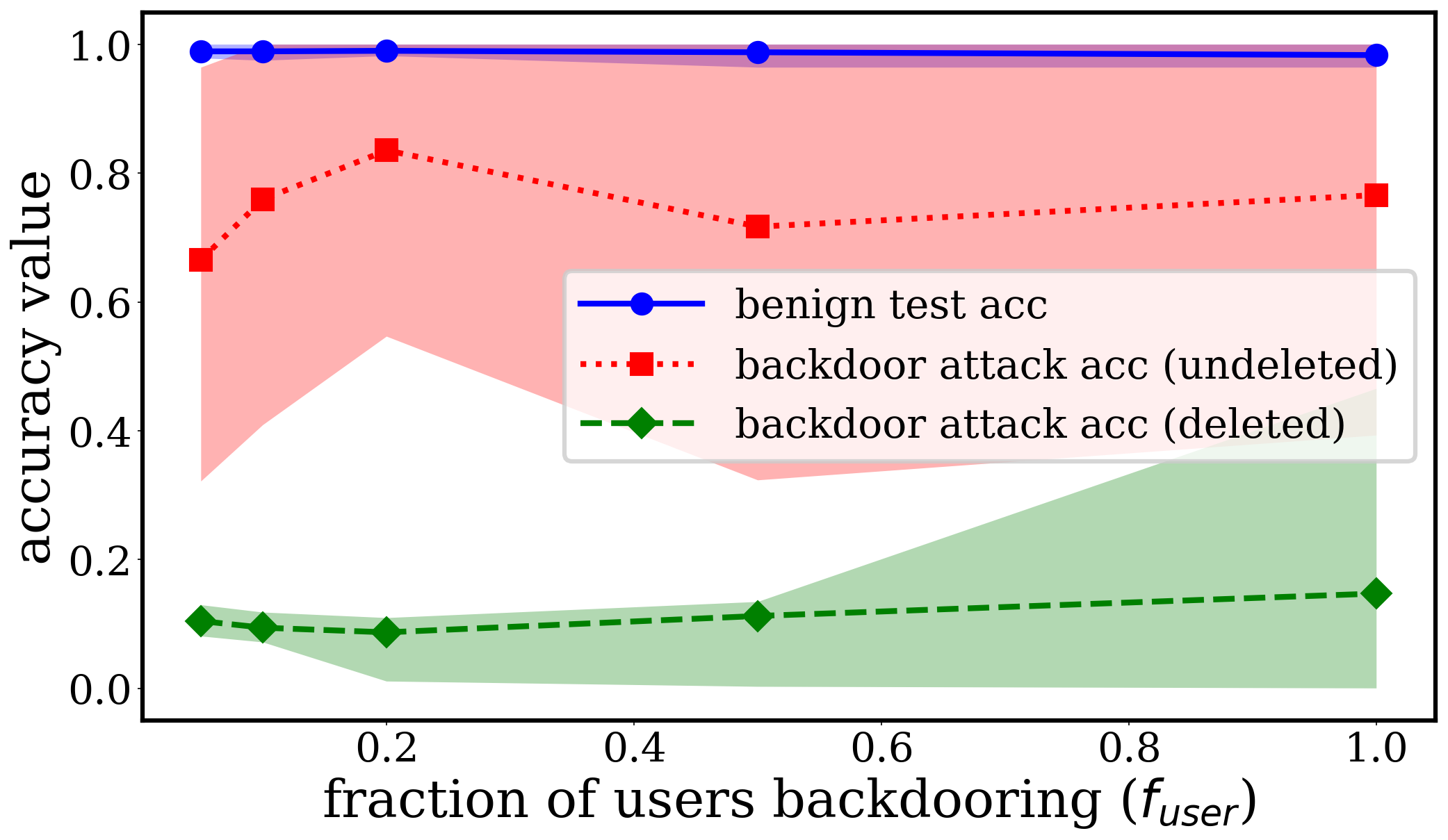}\vspace{\spacehacko}
	\end{subfigure}\hfill
	\par\medskip\vspace{\spacehack}
    \verticaltext{~~~~~~~~~~~FEMNIST}
	\begin{subfigure}[t]{0.31\linewidth}
		\raggedleft
		\includegraphics[width=\linewidth]{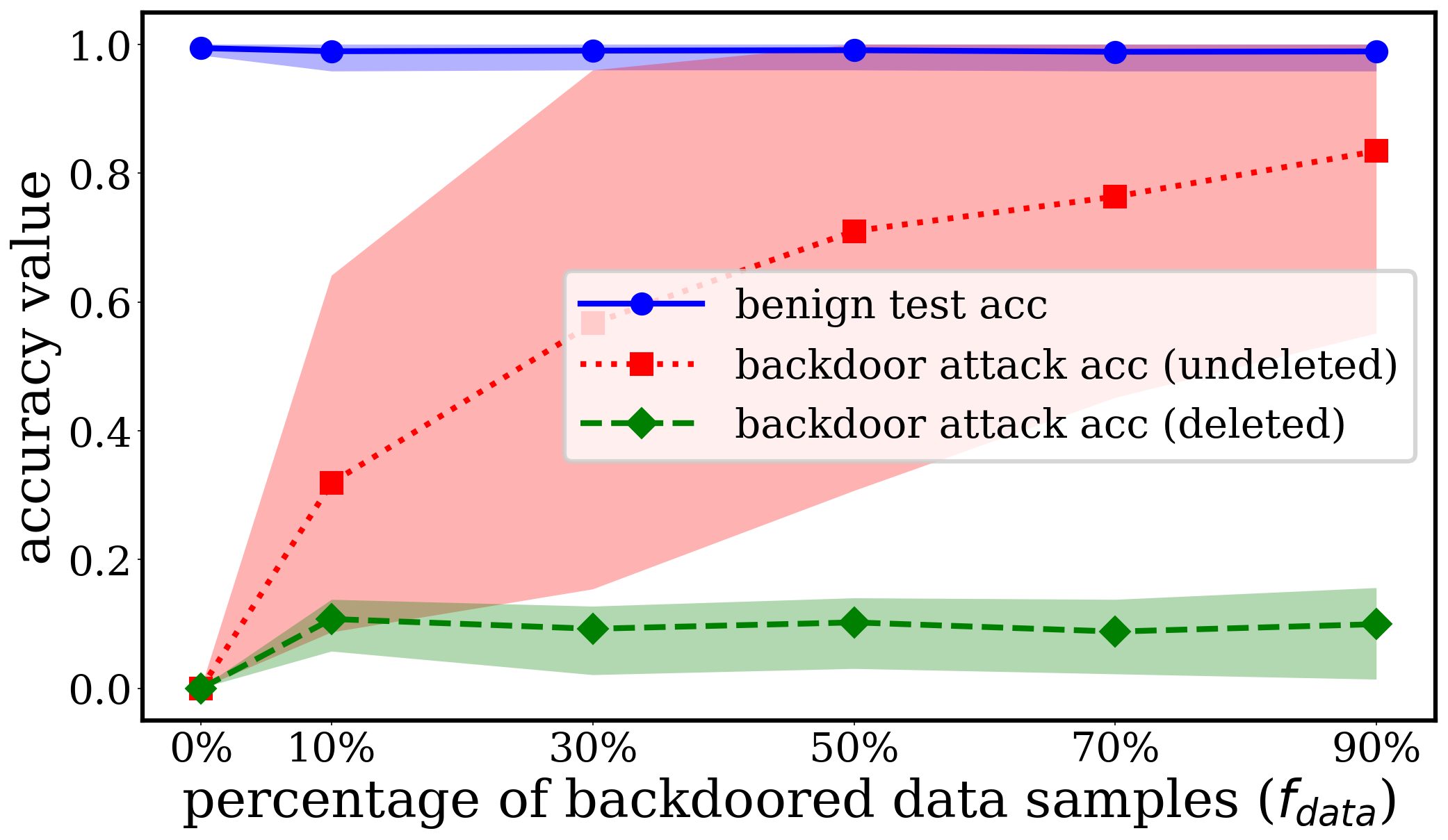}\vspace{\spacehacko}
	\end{subfigure}\hfill
	\begin{subfigure}[t]{0.31\linewidth}
		\centering
		\includegraphics[width=\linewidth]{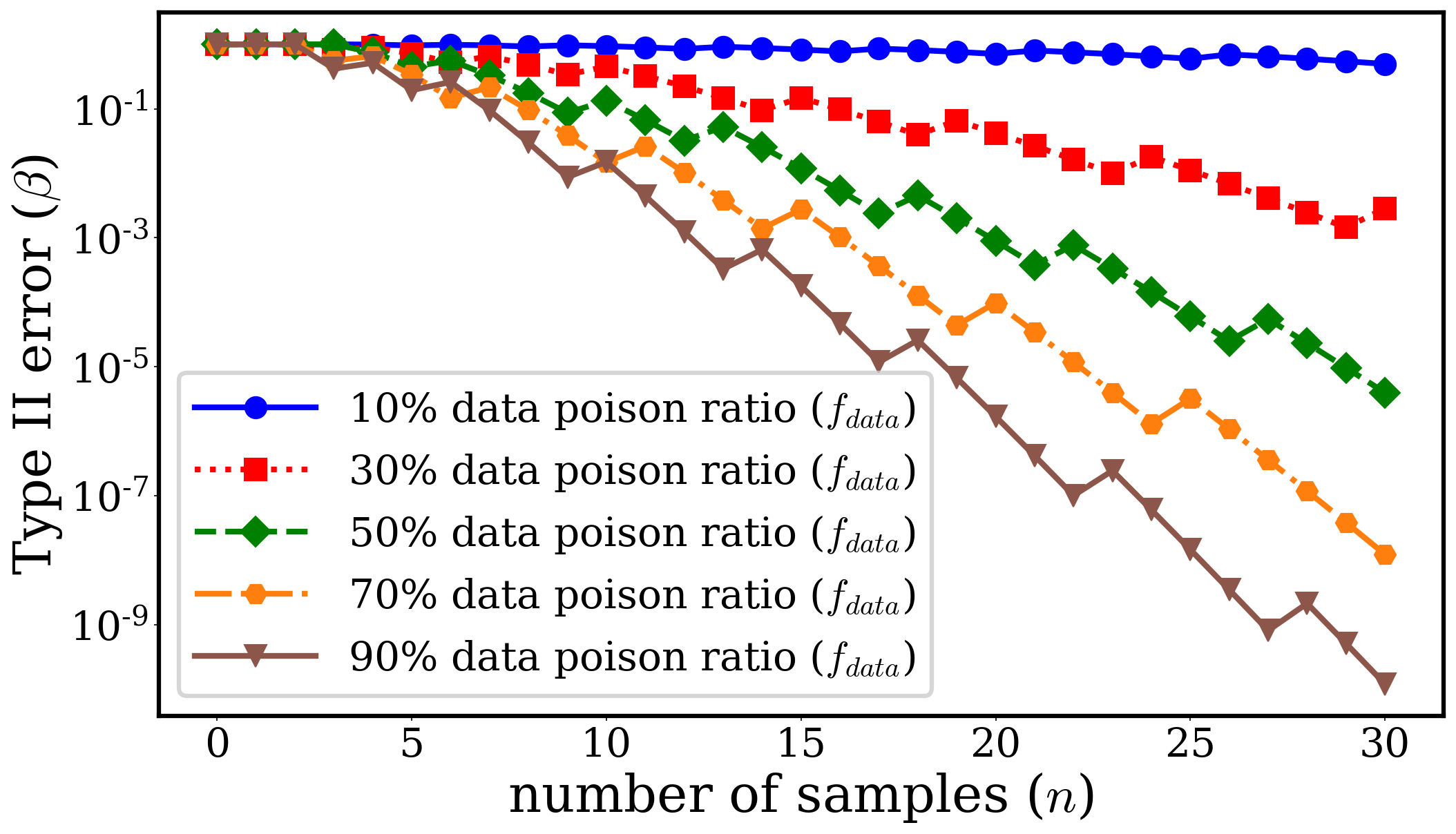}\vspace{\spacehacko}
	\end{subfigure}\hfill
	\begin{subfigure}[t]{0.31\linewidth}
		\raggedright
		\includegraphics[width=\linewidth]{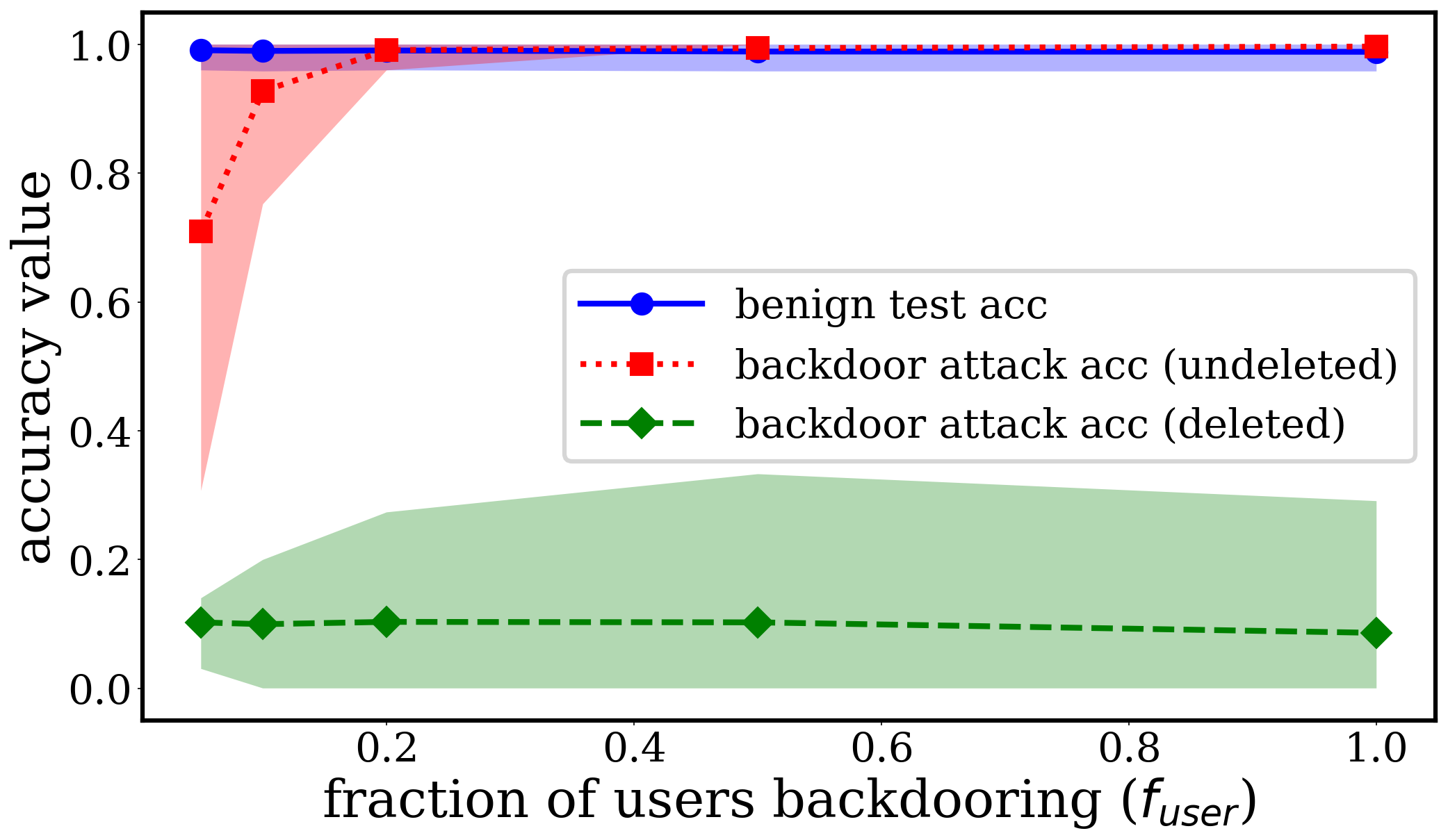}\vspace{\spacehacko}
	\end{subfigure}\hfill
	\par\medskip\vspace{\spacehack}
    \verticaltext{~~~~~~~~~~~CIFAR10}
	\begin{subfigure}[t]{0.31\linewidth}
		\raggedleft
		\includegraphics[width=\linewidth]{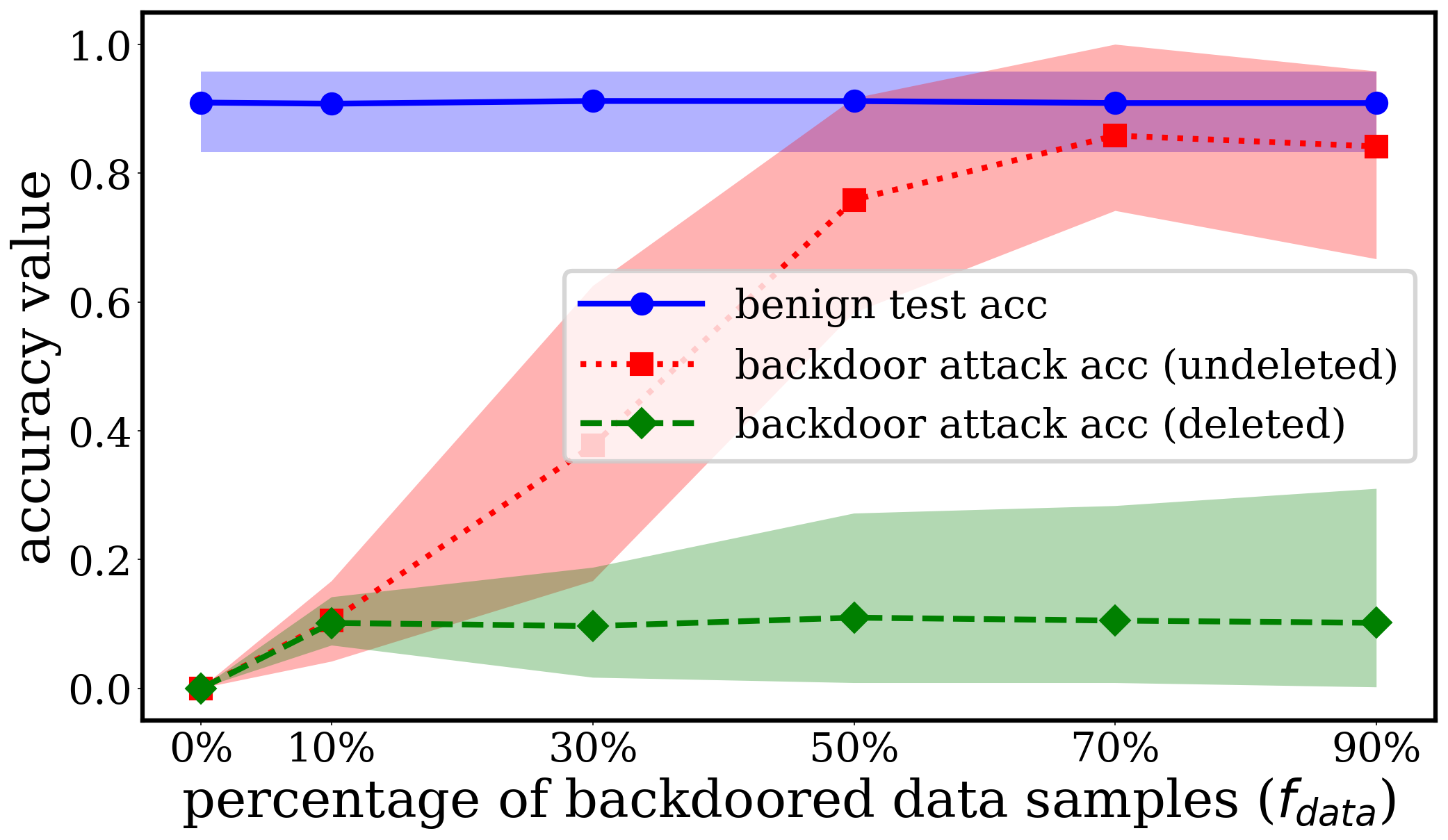}\vspace{\spacehacko}
	\end{subfigure}\hfill
	\begin{subfigure}[t]{0.31\linewidth}
		\centering
		\includegraphics[width=\linewidth]{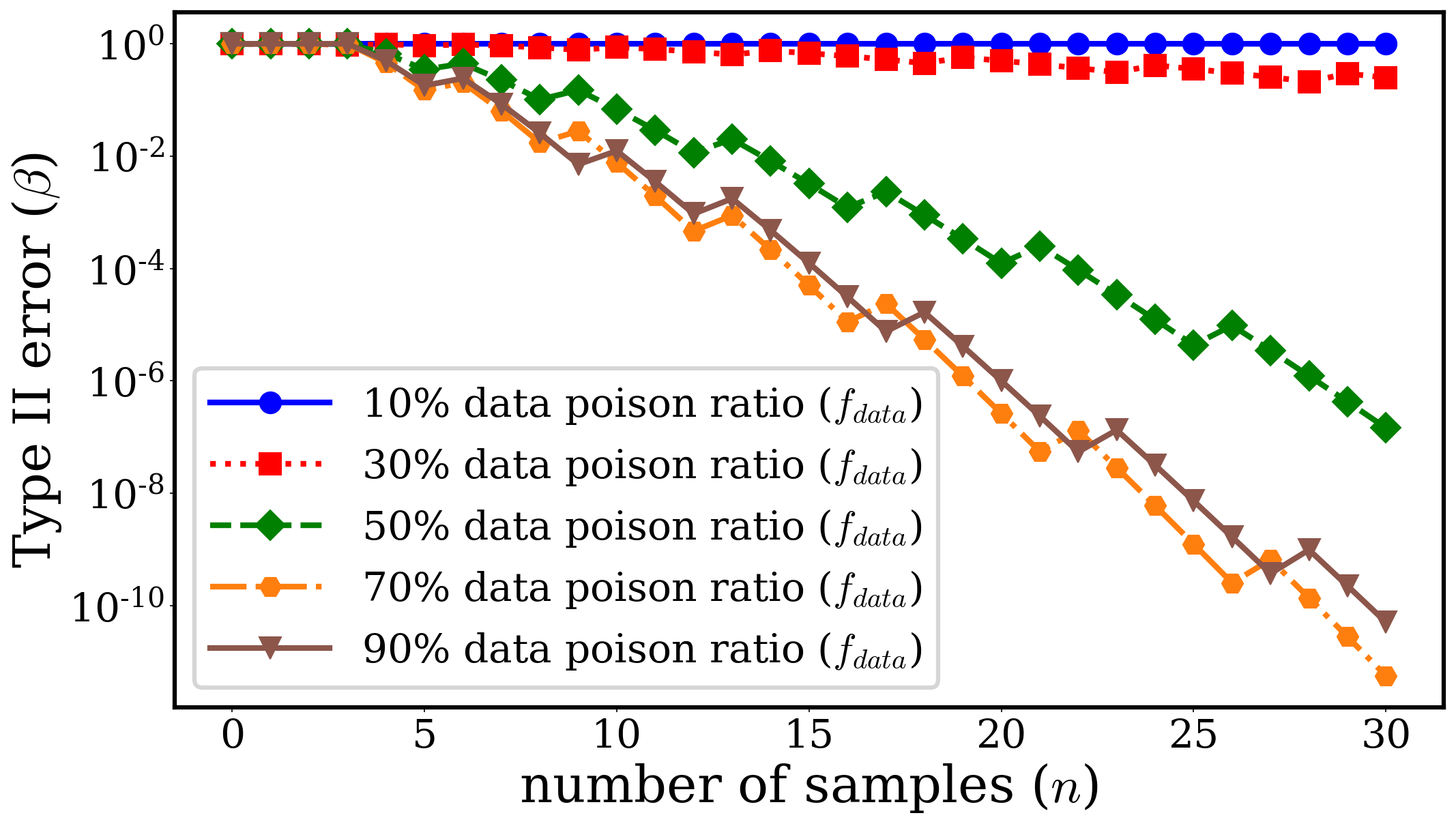}\vspace{\spacehacko}
	\end{subfigure}\hfill
	\begin{subfigure}[t]{0.31\linewidth}
		\raggedright
		\includegraphics[width=\linewidth]{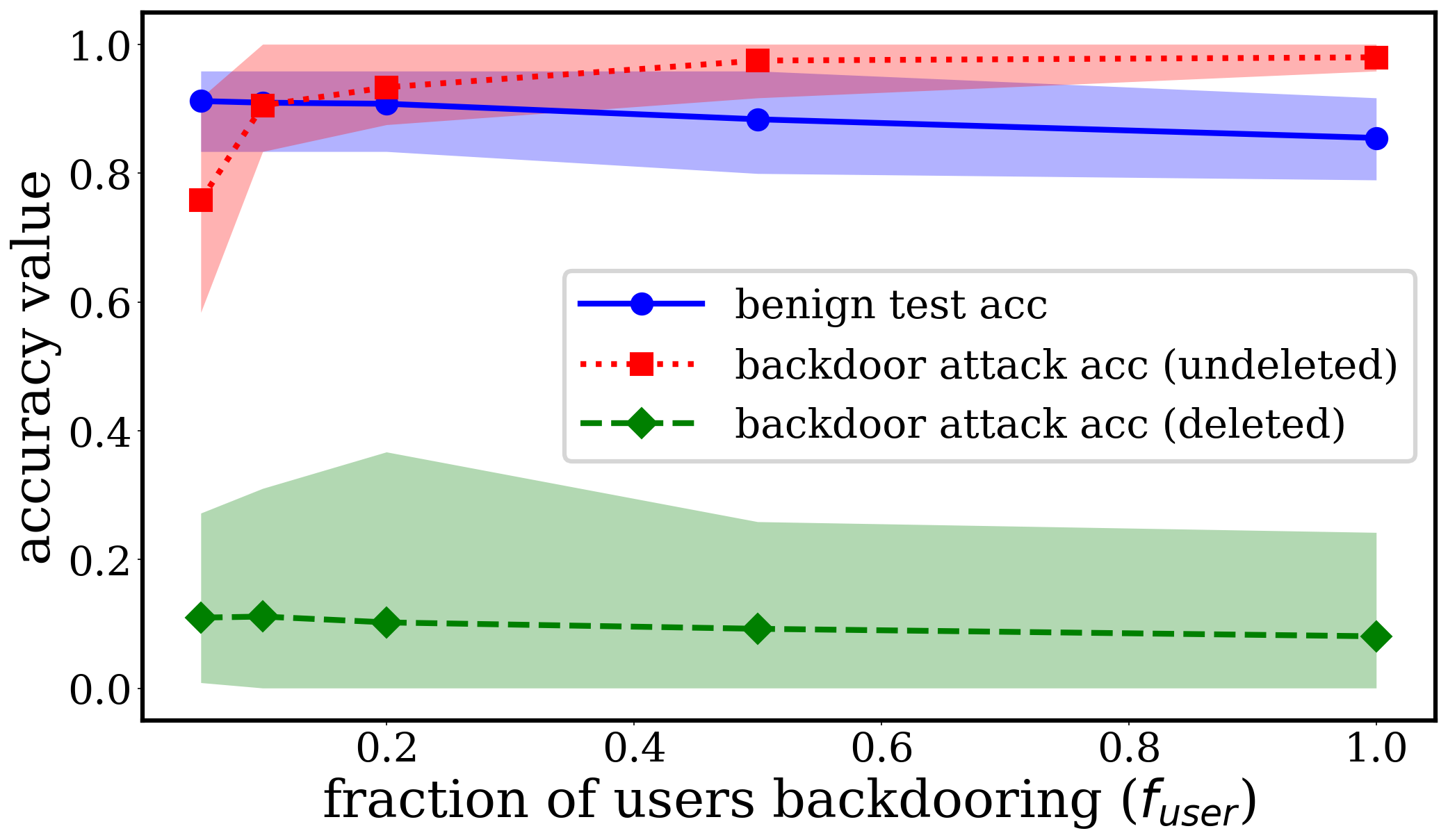}\vspace{\spacehacko}
	\end{subfigure}\hfill
	\par\medskip\vspace{\spacehack}
    \verticaltext{~~~~~~~~~~~ImageNet}
	\begin{subfigure}[t]{0.31\linewidth}
		\raggedleft
		\includegraphics[width=\linewidth]{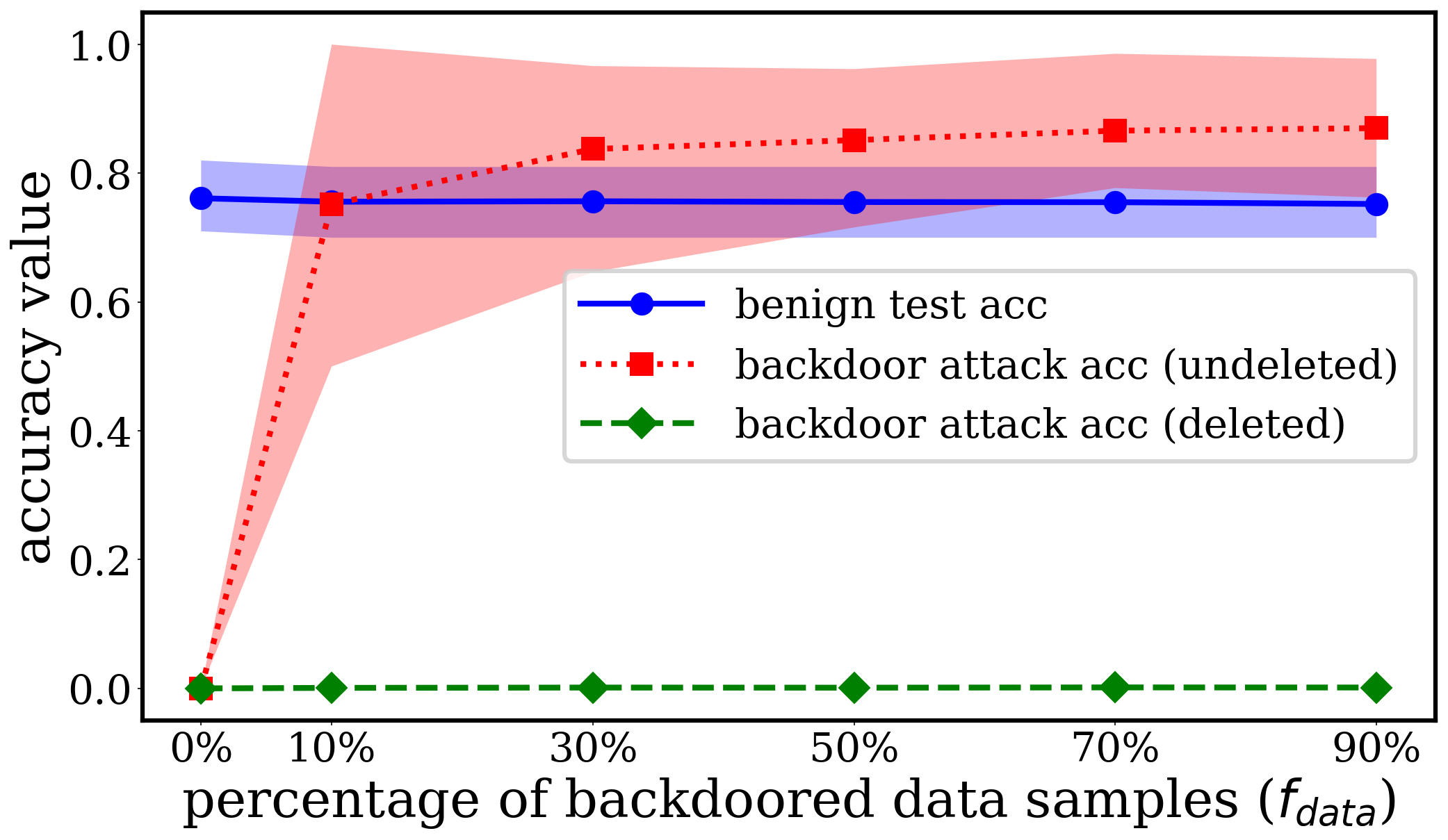}\vspace{\spacehacko}
		\caption{Model accuracy and backdoor success rate for fixed user poison fraction $\userratio=0.05$.}
		\label{fig:def_acc}
	\end{subfigure}\hfill
	\begin{subfigure}[t]{0.31\linewidth}
		\centering
		\includegraphics[width=\linewidth]{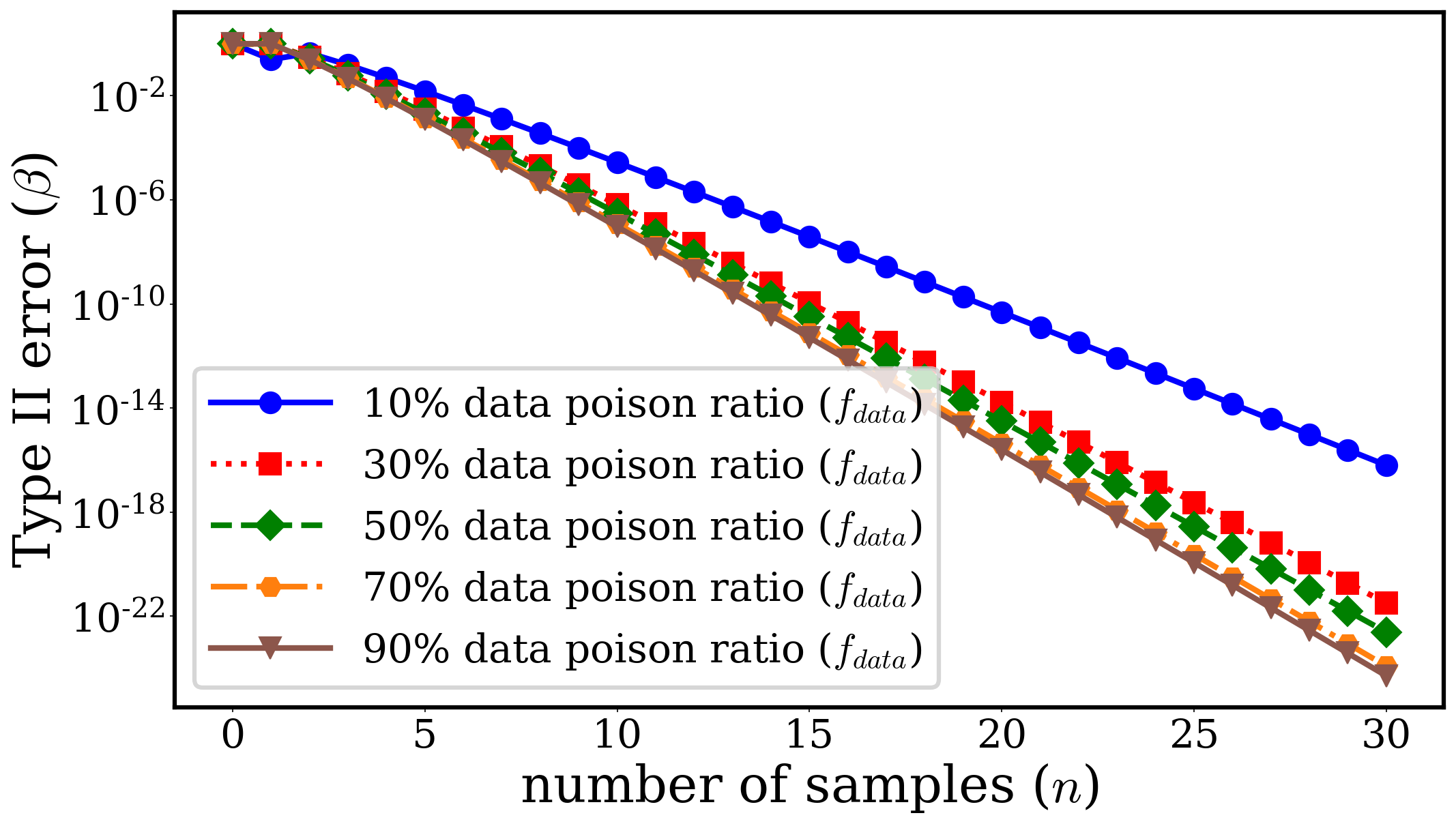}\vspace{\spacehacko}
		\caption{Verification performance with different poison ratios \poisonratio{} for a fixed $\alpha=10^{-3}$.}
		\label{fig:def_verify_ratio}
	\end{subfigure}\hfill
	\begin{subfigure}[t]{0.31\linewidth}
		\raggedright
		\includegraphics[width=\linewidth]{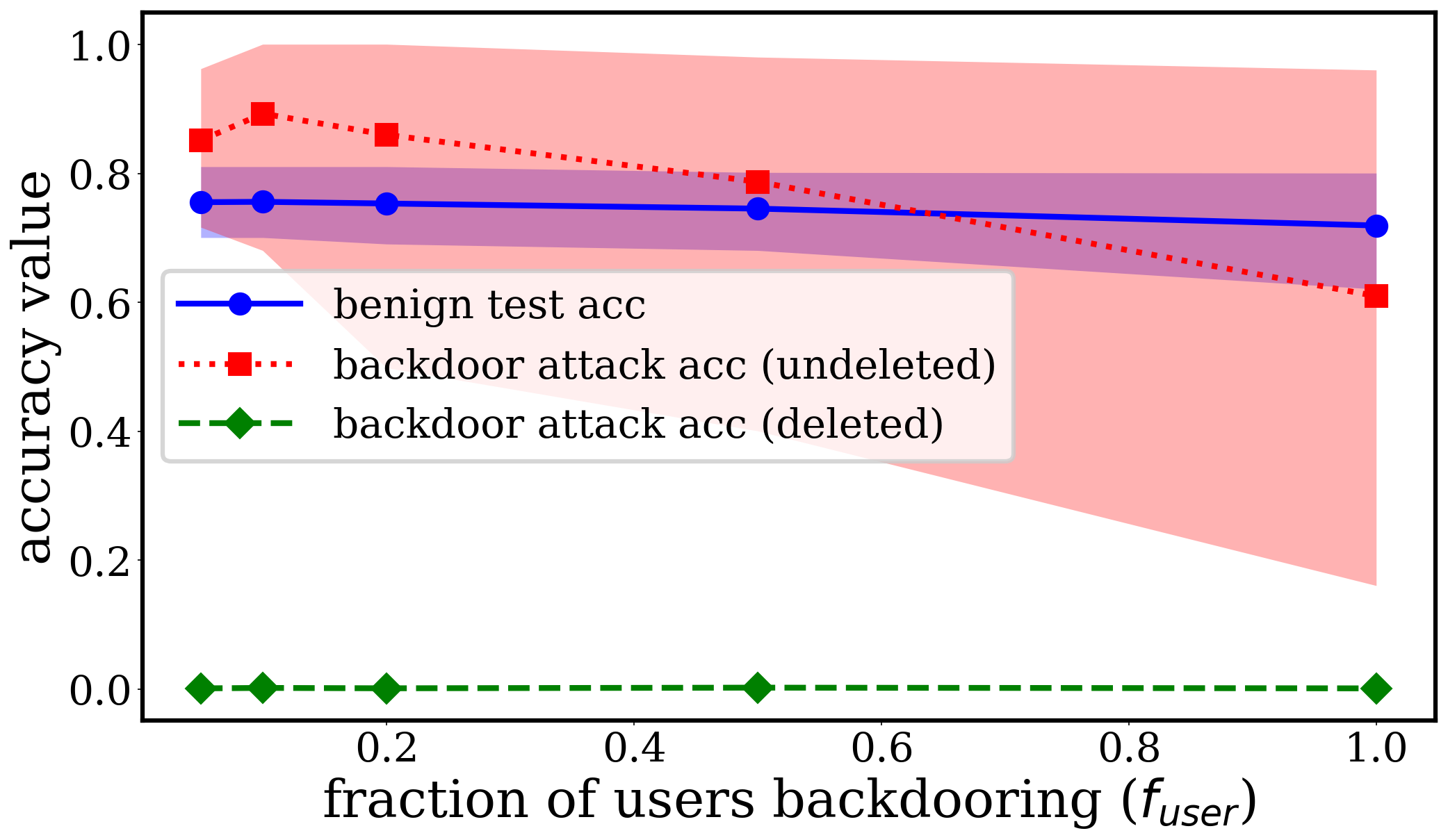}\vspace{\spacehacko}
		\caption{Model accuracy and backdoor success rate for fixed poison ratio $\poisonratio{}=50\%$.}
		\label{fig:def_all_f}
	\end{subfigure}\hfill
	\caption{Our backdoor-based machine unlearning verification results with the \textit{\Advanced{} server}. Each row of plots is evaluated on the data-set specified at the most-left position. Each column of plots depicts the evaluation indicated in the caption at its bottom. The AG News dataset is omitted as Neural Cleanse is not applicable for non-continuous datasets. The colored areas in columns (a) and (c) tag the 10\% to 90\% quantiles.
	}
	\label{fig:combined_defended}\vspace{\spacehack}
\end{figure*}

\fi


We first present the evaluation results for the \Naive{} server, where the server uses the non-adaptive learning algorithm to train the ML model.
On each dataset, we compute the backdoor success rate for each privacy enthusiasts, and compute the undeleted users' average success rate $p$ and deleted users' average success rate as $q$ to evaluate the performance of our machine unlearning verification method with different numbers $n$ of test queries, following Theorem \ref{thm:computerho}. 

First, \textbf{our verification mechanism works well with high confidence on the EMNIST dataset.}
From \cref{fig:nat_acc} for EMNIST, we can see that the attack accuracy for undeleted users ($p$) increases with the poison ratio, while the attack accuracy for deleted users ($q$) stays around $10\%$ (random guess accuracy).
At the same time, poison ratios as high as $90\%$ have negligible impact on model accuracy for clean test samples.
Besides plotting the average accuracy across users, we also show the $10\%$--$90\%$ quantile ranges for individual users' accuracy values, examined in more detail in \cref{sec:individual_evaluation}.
\Cref{fig:nat_verify_ratio} for EMNIST shows that a higher poison ratio also leads to a lower Type-II error in our verification mechanism due to a larger gap between $p$ and $q$, therefore increasing the confidence in distinguishing between $H_{0}$ and $H_{1}$.
For curious readers, we further show the verification performance with different the Type-I error tolerances ($\alpha$) for a fixed data poison ratio \poisonratio{} of $50\%$ in appendix \cref{fig:nat_verify_alpha}.

Given a Type-I error $\alpha$ and a fixed number of test samples $n$, the computation of the type II error $\beta$ is discrete as the corresponding rescaled binomial distribution is discrete and thus the probability mass in the tails usually does not sum up exactly to $\alpha$ (cf. \cref{sec:analysis}). This property leads to jumps when plotting $\beta$ over different $n$ as the probability mass of a discrete event might not be included in the tail with maximal size $\alpha$ anymore, and be added to $\beta$ instead. This is the case for all plots evaluating $\beta$ over different number of test samples, not only for EMNIST. For ImageNet, however, the extreme small $q$ value reduces the effect, leading to almost straight lines.

Second, \textbf{our verification mechanism generalizes to more complex image datasets.}
The accuracy performance and the verification performance for the non-IID FEMNIST dataset, the CIFAR10 dataset, and the much more complex ImageNet dataset is presented in \cref{fig:combined_natural} as well.
Similar to EMNIST, the gap between $p$ and $q$ becomes larger when increasing the poison ratio. \ds{For ImageNet, the backdoor attack accuracy for deleted users and its 80\% quantile are comparatively small as its number of prediction classes, namely 1000 compared to 10 or 4 for the other datasets, reduces the probability for accidental backdoor collision significantly.}

Third, \textbf{our verification mechanism is also applicable to non-image datasets}, illustrated for the AG News dataset from \Cref{fig:nat_acc}.
The undeleted users' backdoor attack accuracy is around $100\%$ with a poison ratio greater than $30\%$, while the deleted users' backdoor attack accuracy stays around $25\%$ (random guess accuracy).

Fourth, \textbf{our mechanism works for arbitrary fraction \userratio{} of privacy enthusiasts testing for deletion verification}, illustrated in \cref{fig:nat_all_f}. Previously, we only considered a user backdooring rate $f$ of 0.05. While such a scenario is more realistic, i.e., when only a few privacy enthusiast test a ML provider for deletion validity, it might happen that more users test. Even when all users are testing, the benign accuracy of the models are barely impacted.
For EMNIST, a larger \userratio{} leads to an increased number of backdoor collisions: While the average of the backdoor attack accuracy for deleted users increases only minimally, a few deleted users measure a high success rate. We discuss mitigation strategies in \cref{sec:individual_evaluation}. 

\subsection{Results for the \Advanced{} Server}\label{sec:evaluation:advancedserver}


We choose the state-of-the-art backdoor defense method, Neural Cleanse \cite{backdoor_defense_wang_sp19}, for the \Advanced{} server.
Proposed by Wang et al. \cite{backdoor_defense_wang_sp19}, Neural Cleanse first reverse engineers the backdoor triggers by searching for minimum input perturbation needed for a target label classification, then uses the reverse engineered backdoor trigger to poison certain samples with correct labels to retrain the model for one step.
We extend this defense method to our setup where different users inject different backdoors into the model.
Specifically, the \Advanced{} server runs the Neural Cleanse defense method for individual user's data and augments the model sequentially with all the user-specific found reversed backdoor patterns.
We provide results only for the four image datasets as Neural Cleanse needs gradient information with regard to input samples for reverse engineering the backdoor which is not available for the discrete AG News inputs. 

First, \textbf{the Neural Cleanse method reduces the backdoor attack success rate.}
As shown in \cref{fig:def_acc}, the undeleted users' backdoor attack accuracy ($p$) is reduced, especially for smaller data poison ratios $\poisonratio{}$.
Nonetheless, a ratio $\poisonratio{} = 50\%$ gives still a significantly higher undeleted users' attack accuracy than its deleted counterpart.  
Thus \textbf{our verification can still have high confidence performance with enough test samples as illustrated in \cref{fig:def_verify_ratio}}.


Second, \textbf{the Neural Cleanse method has limited defense performance on more complex image datasets with more complex model architectures.}
The accuracy performance of CIFAR10 and ImageNet classifiers, shown in  \cref{fig:combined_natural}, are higher than for the simpler FEMNIST and EMNIST classifiers. Note that for CIFAR10, due to statistical variances, the undeleted backdoor accuracy lowers slightly from $\poisonratio{}=70\%$ to $90\%$, leading to the higher Type-II error $\beta$ for $\poisonratio = 90\%$ in \cref{fig:def_verify_ratio}.

Third, \textbf{Neural Cleanse defense weakens with increasing fraction \userratio{} of users testing for deletion verification}, illustrated in \cref{fig:def_all_f}. 
We can see that the performance of Neural Cleanse drops with a larger fraction of poisoning users (\userratio{}).
For ImageNet, Neural Cleanse reduces the average $p$, but at the same time the benign accuracy drops from $76\%$ to $72\%$, compared to the non-adaptive case. As $q\approx 0.001$, our approach still works satisfactory.

\section{Heterogeneity Across Individual Users}\label{sec:individual_evaluation}

\ifplotimages
\begin{figure*}[!ht]
	\centering
	\newsavebox\IBoxA \newlength\IHeightA
    \sbox\IBoxA{\includegraphics[width=0.19\linewidth]{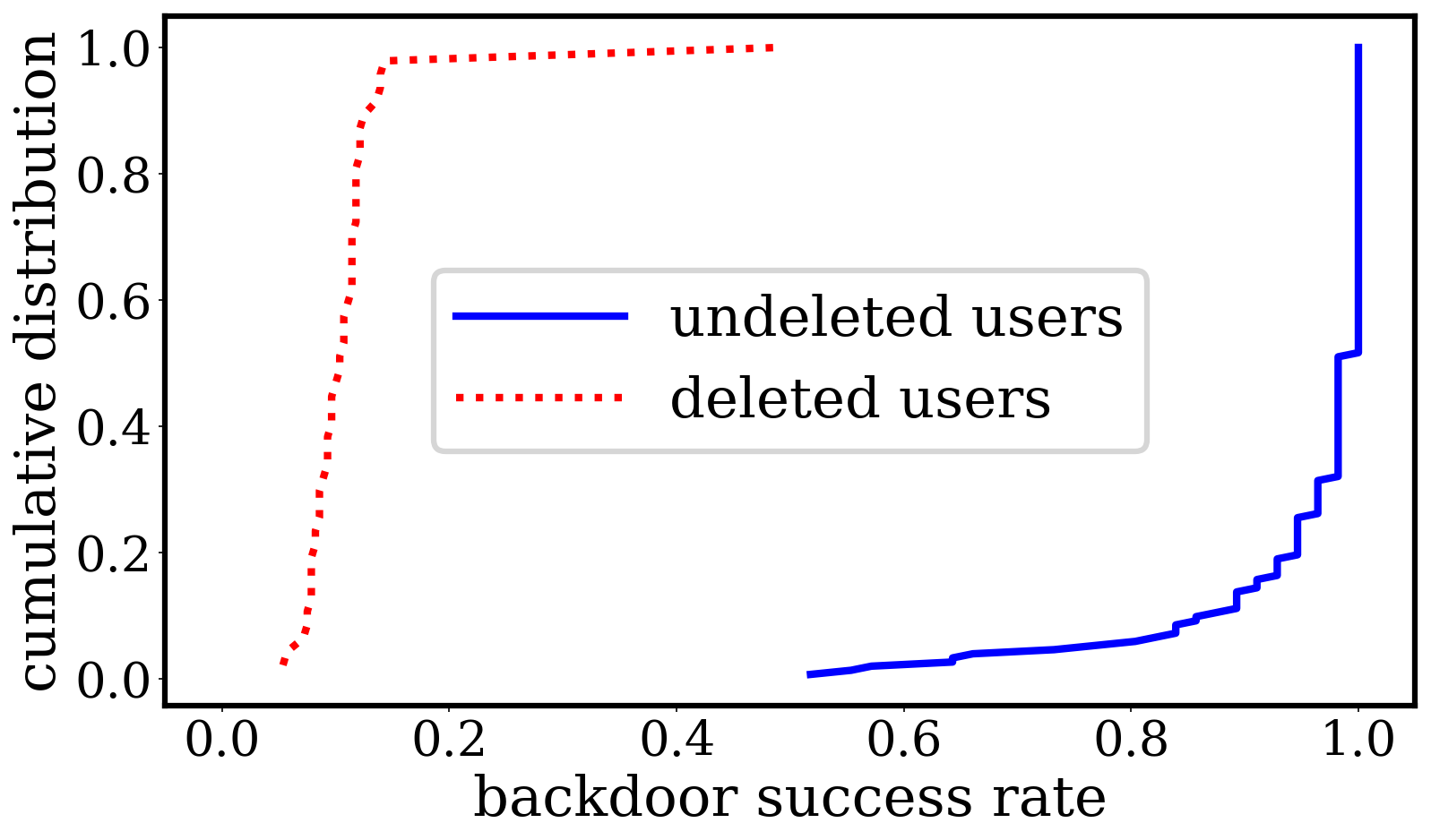}}\setlength\IHeightA{\ht\IBoxA}
	\begin{subfigure}[t]{0.19\linewidth}
		{\small\hspace{0.37\linewidth}EMNIST}
		\includegraphics[width=\linewidth]{Images/experiment_results/author_poison/EMNIST_0.05_author_poison_0.5_poison_acc_cdf.png}
	\end{subfigure}\hfill
	\begin{subfigure}[t]{0.19\linewidth}
		{\small\hspace{0.36\linewidth}FEMNIST}
		\includegraphics[width=\linewidth]{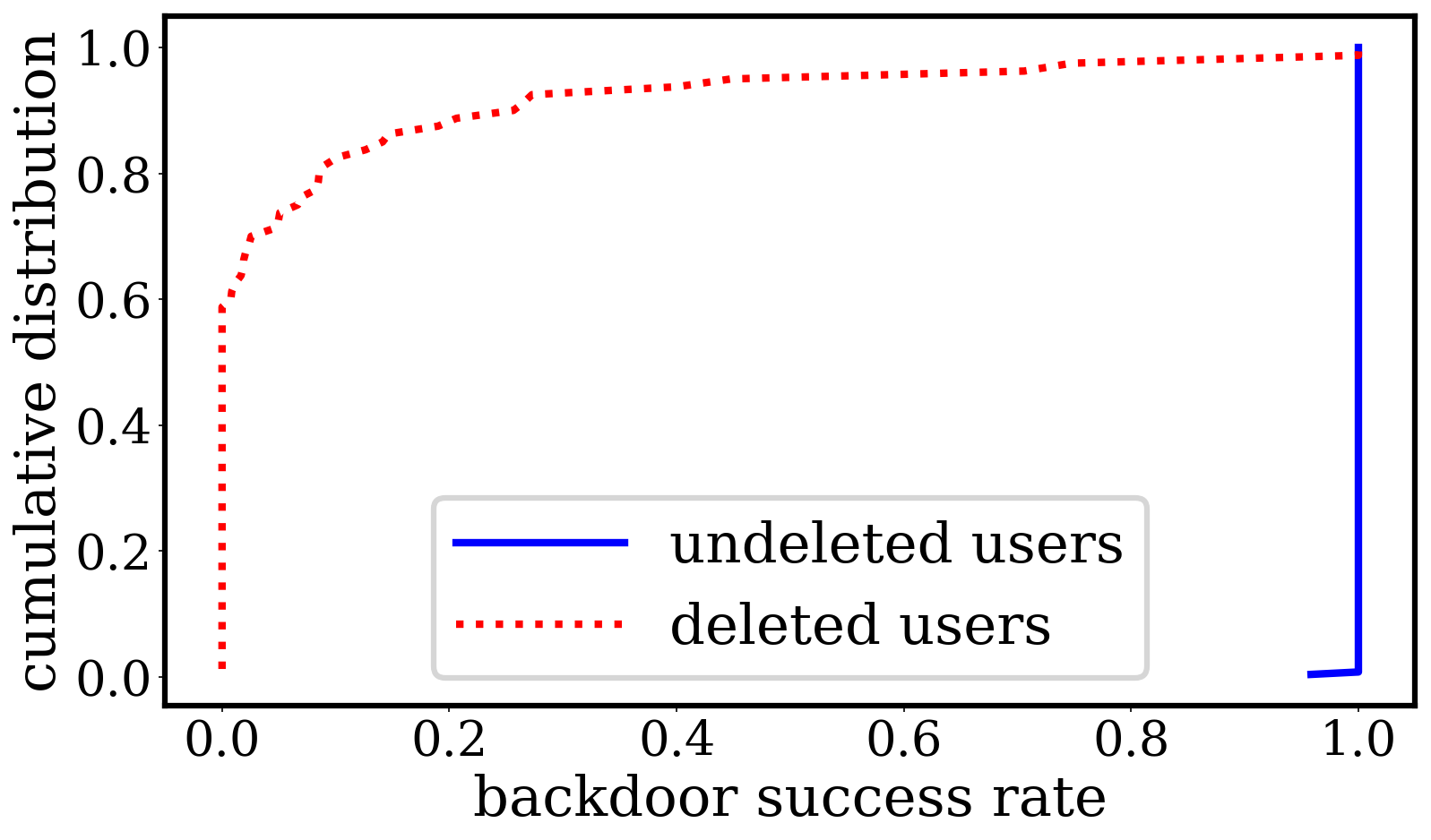}
	\end{subfigure}\hfill
	\begin{subfigure}[t]{0.19\linewidth}
		{\small\hspace{0.36\linewidth}CIFAR10}
		\includegraphics[width=\linewidth]{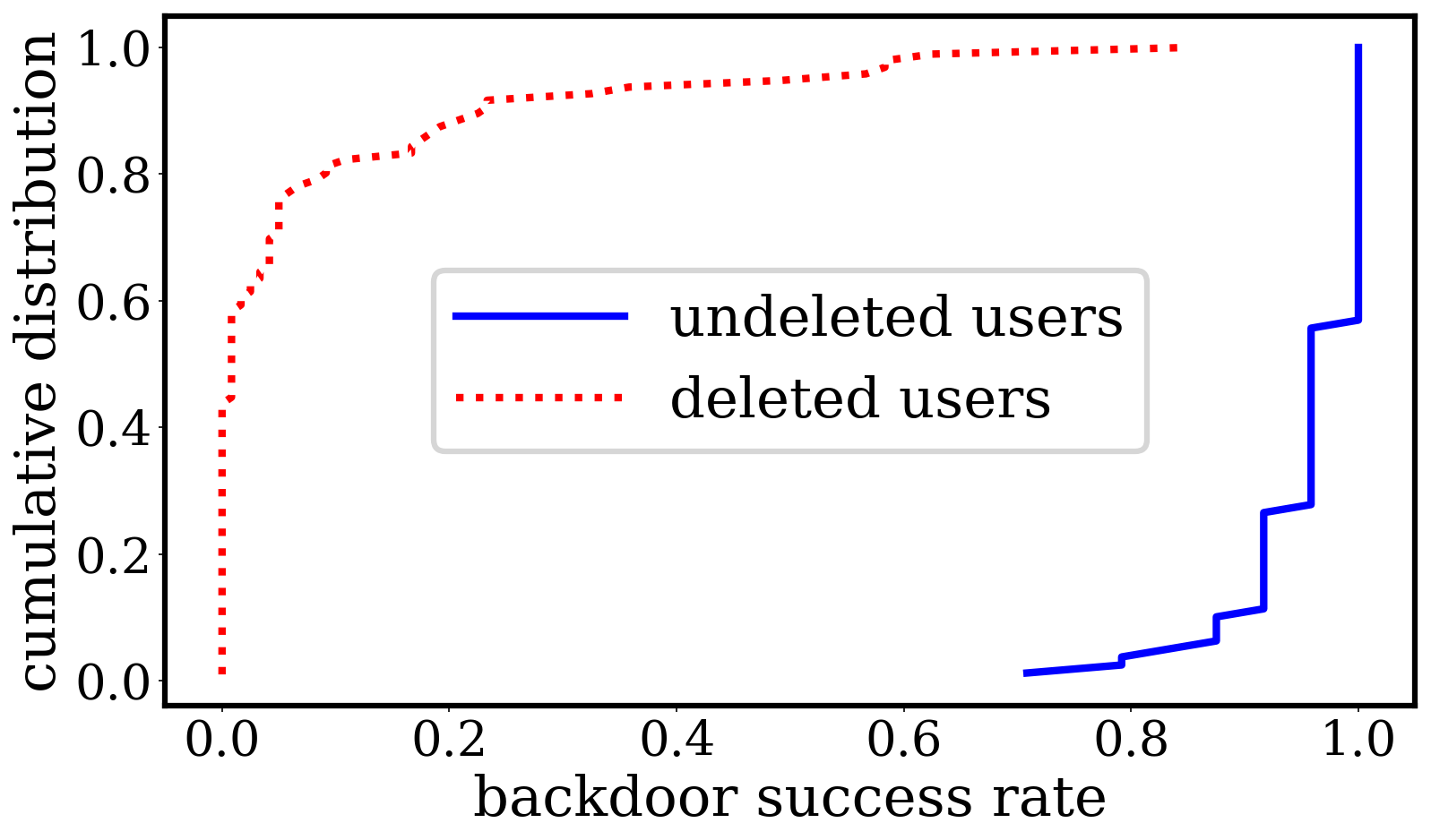}
	\end{subfigure}\hfill
	\begin{subfigure}[t]{0.19\linewidth}
		{\small\hspace{0.36\linewidth}ImageNet}\vspace{-0.15em}
		\includegraphics[width=\linewidth]{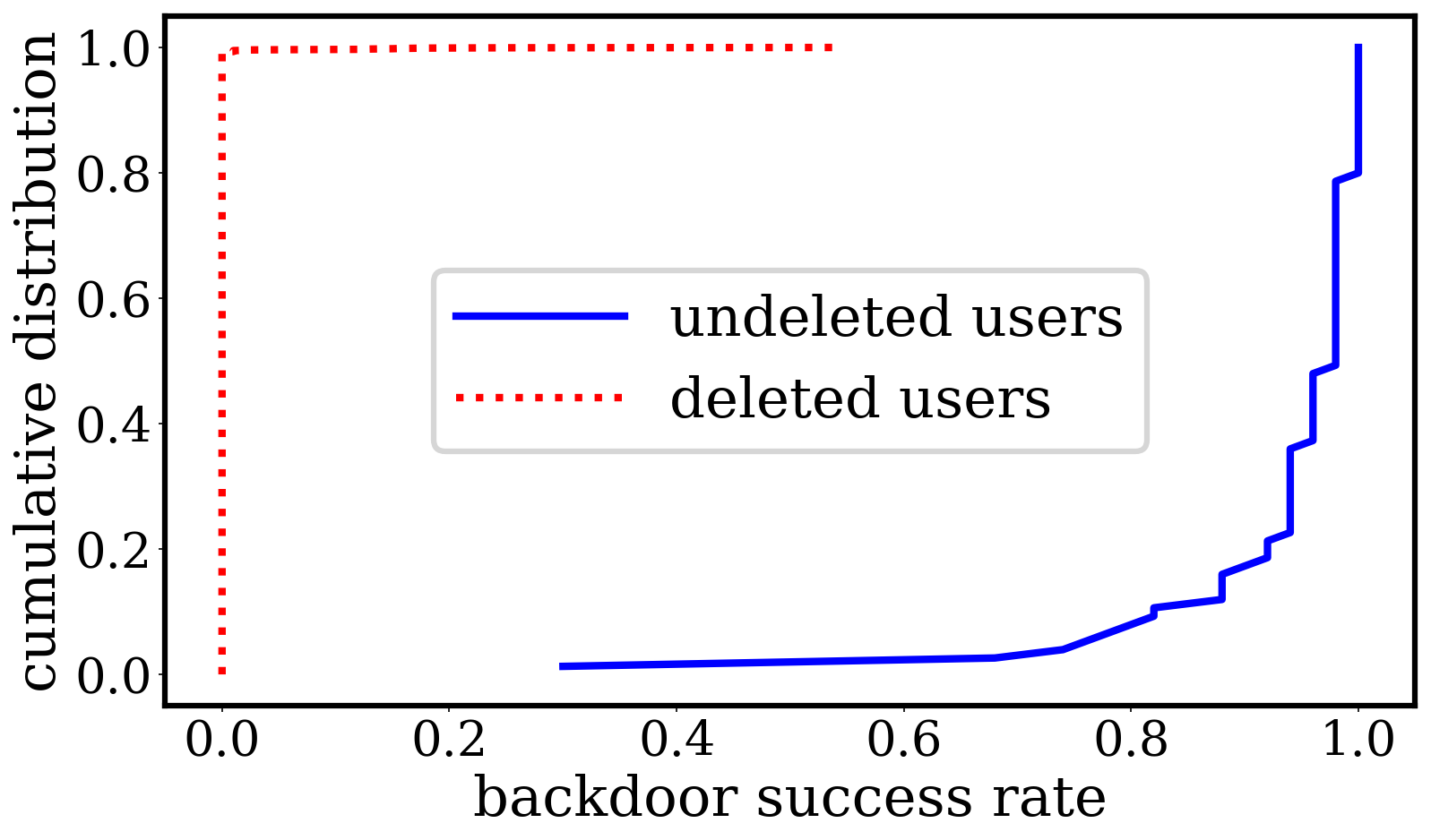}
	\end{subfigure}\hfill
	\begin{subfigure}[t]{0.19\linewidth}
		{\small\hspace{0.37\linewidth}{AG News}\\}
		\includegraphics[width=\linewidth]{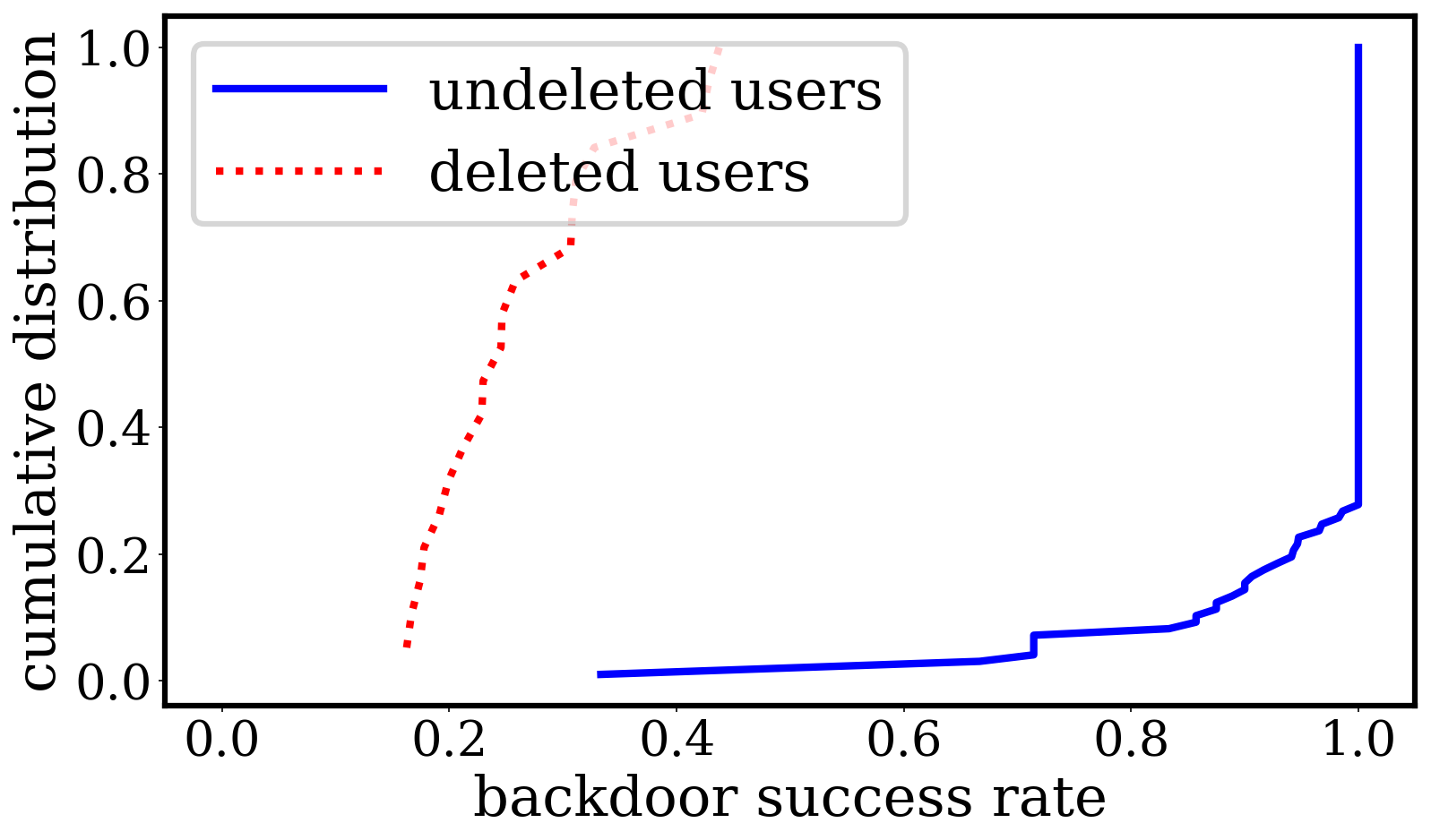}
	\end{subfigure}\hfill
	\caption{The CDFs of backdoor attack accuracy for deleted and undeleted users for different datasets ($\userratio=0.05$, $\poisonratio = 50\%$).}
	\label{fig:nat_model_acc_cdf}
	\vspace{-0.5em}
\end{figure*}
\fi

So far, our analysis on deletion verification performance is evaluated based on the ``average'' backdoor attack accuracy $p$ across all undeleted users and the average backdoor attack accuracy $q$ across all deleted users.
Next, we evaluate the heterogeneity in the 
performance of  
stochastic deletion verification across individual users, to account for the variance in individual users' backdoor attack accuracy values. We find that while most privacy enthusiasts are able to conclude correctly whether their data has been deleted, a small subset of deleted users also have high backdoor attack accuracy although the ML model never trained on their backdoor triggers and target labels.

To quantify this effect, we present the cumulative distribution plots for \Naive{} server over different datasets in \cref{fig:nat_model_acc_cdf}, where we fix the fraction of privacy enthusiasts \userratio{} as $0.05$ and data poison ratio \poisonratio{} as $50\%$. Similar results for the \Advanced{} server are shown in appendix \cref{fig:def_model_acc_cdf}.
As shown in Figure~\ref{fig:nat_model_acc_cdf}, almost all undeleted users have high backdoor attack accuracy close to $100\%$.
However, \textbf{several deleted users indeed have high backdoor attack accuracy.} 
We think the reason is that there are one or more undeleted users with \textit{similar} backdoor triggers and the same target labels as those rare deleted users, resulting in their high backdoor attack accuracy without their data being used in the training set of the ML model.
In fact, popular image classification architectures, such as CNN and ResNet, are trained to behave similarly for images with rotation, translation, and transformation, leading to behave similarly on similar triggers.



\begin{table}

\centering
\resizebox{\linewidth}{!}{
\begin{tabular}{c | c | c | c | c | c }
\toprule
  & EMNIST & FEMNIST & CIFAR10 & ImageNet & AG News \\
\midrule
{1 user} & $2.1\times10^{-2}$ & $2.5\times10^{-2}$ & $3.8\times10^{-2}$  & $4\times10^{-4}$ & $8.1\times10^{-2}$ \\
\midrule
{2 users}     & $1\times10^{-4}$ & $3\times10^{-4}$ & $1\times10^{-3}$ & $4\times10^{-5}$ & $1.3\times10^{-2}$  \\
\midrule
{3 users}     & $< 10^{-5}$ & $<10^{-5}$ & $<10^{-5}$ & $<10^{-5}$ &  $4\times10^{-4}$\\
\bottomrule
\end{tabular}
}
\caption{The likelihood falsely assuming that a non-deleting server is following the deletions requests, i.e., that $\beta \geq 0.001$, given single/multiple users jointly run the hypothesis testing. With just 3 collaborating privacy enthusiasts, our techniques allows extremely low probabilities of false negatives.
We use 30 samples and set $\userratio{}=0.05$, $\poisonratio{}=50\%$, $\alpha=10^{-3}$.
}
\label{tab:multi-users}
\vspace{-1em}
\end{table}

On those deleted users with high backdoor attack accuracy, our verification mechanism is likely to wrongly blame the server for not deleting the data.
To resolve this issue, we propose that multiple users cooperate with each other by sharing their estimated backdoor success rates and thereby achieve high confidence verification.
%
\ds{
Specifically, we decide whether or not to reject the null-hypothesis $H_{0}$ (the server does delete) based on accumulated p and q values. Therefore, $c$ collaborating users compute the mean of their $p$ and $q$, decide for an $\alpha$ (Type-I-error) and a upper bound for $\beta$ (Type-II-error). If the estimated Type-II error is smaller than the bound, the null hypothesis is rejected. Note that if $c$ users share their results, and each user tested $n$ backdoored samples, then the accumulated number of tested samples is $c\cdot n$. In Table~\ref{tab:multi-users}, we show the probability that a server does not fulfil deletion requests, but the null-hypothesis is falsely accepted (false negative), for $\alpha$ as $10^{-3}$ and an upper bound for $\beta$ as $10^{-3}$. For computational efficiency, we applied Monte-Carlo-sampling.
With multi-user cooperation, the probability of false negatives can be greatly reduced.
}


To further make our verification mechanism more reliable, we can use multiple backdoor triggers with multiple target labels for each user and estimate the lowest backdoor success rate among all triggers. As long as the deleted user has at least one trigger leading to low attack accuracy, we can obtain reliable performance from a worst-case perspective. We discuss the number of possible backdoors based on coding theory in the  \cref{appendix:numberofusers}. Another direction is to combine our method with other verification methods, such as user-level membership inference attacks to detect whether a user's data was used to train the ML model or not \cite{Mem_song_kdd19}.
We leave this as future work.

\section{Discussion}\label{sec:discussion}
Here, we elaborate on additional aspects such as usefulness of our mechanism for regulation compliance, effects of future backdoor defenses, and limitations of our system.\vspace{-1em}

\subsection{Server-Side Usefulness of our Mechanism}
Besides leveraging the backdoor attacks for deletion verification at the user side, our approach also provides benefits to an honest server.
First, the server can use our method to \textbf{validate that their data deletion pipeline} is bug-free.
In cases where the MLaaS providers do not want backdoors in their ML models, such backdoor-based verification mechanism can be applied in production by setting the target backdoor labels to a specific ``outlier'' label, which is not used for future prediction.

Second, the server can use our backdoor-based mechanism to \textbf{quantitatively measure the effectiveness of recently proposed deletion approaches without strict deletion guarantees}, such as \cite{deletion_guo_arxiv19, MUL_baumhauer_arxiv20}.
These approaches directly update the model parameters to remove the impact of the deleted data without retraining the model.
Our framework can be an efficient way to evaluate their performance.

\subsection{Other Backdoor Attacks and Defenses}\label{sec:discussion:otherbackdoormethods}
The contribution of our paper is to use backdoor attacks for probabilistic verification of machine unlearning.
Our verification mechanism can be easily extended to other backdoor attack methods \cite{trojan_Liu_ndss18, backdoor_saha_aaai20, backdoor_turner_arxiv19}.
When testing the verification performance under a strategic malicious server, we use the state-of-the-art backdoor defense method, Neural Cleanse \cite{backdoor_defense_wang_sp19}, to train the ML model.
We find that Neural Cleanse only has a limited impact of our verification approach: undeleted users still have much higher backdoor success rate than the deleted users, and our verification mechanism still works well.
Several new defense approaches have also been proposed recently \cite{liu2019abs, veldanda2020nnoculation}. {Veldanda et al. \cite{veldanda2020nnoculation}  showed that the defense method proposed by Liu et al. \cite{liu2019abs} is ineffective against adaptive backdoor attacks.}
However, unless they fully mitigate backdoor issues in the multi-user setting, our verification method is still useful.

If the malicious adversary finds a perfect defense method to fully mitigate the backdoor attacks, then our verification approach will not work, but a user can become aware of this scenario by observing a low backdoor attack accuracy before the deletion request.
In this scenario, the user 
could either find stealthier backdoor attacks \cite{backdoor_turner_arxiv19, backdoor_yao_ccs19}, or use alternative verification methods, such as membership inference attacks~\cite{Mem_song_kdd19}.

\textbf{Differential Privacy}, a rigorous privacy metric, is widely used to limit and blur the impact of individual training samples by clipping and noising the gradients during training~\cite{abadi2016deep,neel2020descent}. However, it does not mitigate the issue for the following reasons: First, it is commonly applied to hide the influence of single training samples. In our scenario, users backdoor more than one of their training samples, increasing the noise required for hiding to a per-user level, and thereby lowering the highly noise-sensitive utility (general accuracy) of the model considerably. The reduction to a per-user level requires an enormous data base and is therefore currently applied only by the largest tech-companies, e.g., Google, Amazon, and Apple. And finally, colluding users can bypass the (single user) impact limiting guarantees given by differential privacy.

\subsection{Limitations of our Approach}
\label{subsec:limitation}

\textbf{Constraints on Data Samples.} We begin by noting that our approach does not work in systems where the privacy enthusiasts (small fraction of users) do not have the ability to modify their data before sending, or if the data is too simple to allow a backdoor without diminishing the benign accuracy. Furthermore, even if privacy enthusiasts are allowed to modify their data, they need at least few tens of samples for our approach to work well in practice. This limitation can be addressed in a few different ways (1) privacy enthusiasts can aggregate their hypothesis testing to provide an overall high-confidence verification despite each individual verification not yielding high performance. This is shown in Table~\ref{tab:multi-users} (2) privacy enthusiasts can simply generate more samples for the purposes of the verification (3) finally, given the ability of the privacy enthusiasts to know beforehand if their approach works or not, they might switch to other methods, such as membership inference attacks~\cite{Mem_song_kdd19}. 

\textbf{Conflicting Backdoor Patterns.} When backdoors conflict with each other, which can happen when backdoors are similar, our approach might fail for some users. However, our method crucially allows the detection of this by having close or overlapping measured values of $p$ and $q$. This could be caused by two factors (1) time-related unlearning effects, i.e. other data samples overwrite a backdoor pattern in continuous learning, in which case we recommend to verify the deletion close to the corresponding request (cf. \cref{sec:analysis:bounds}) (2) too many users in the system, in which case we can cap the number of privacy enthusiasts or increase the space of permissible backdoors as discussed in Appendix~\ref{appendix:numberofusers}.

\textbf{Other (Future) Backdoor Defense Methods.} While we showcase the effectiveness of our approach with state-of-the-art Neural Cleanse method, we acknowledge that new defenses could potentially be detrimental to our approach. However, reliable backdoor defense methods are a widely known open problem in machine learning and we consider our formulation of machine unlearning a rigorous mathematical foundation to study the tussle between such attacks and defenses. For non-continuous cases, like the AG News dataset, there are no known satisfactory defense methods currently available. Finally, in regards to our approach, privacy enthusiasts can detect the presence of such defenses and adapt their strategy accordingly (cf. \cref{sec:discussion:otherbackdoormethods}).

\section{Related Work}\label{sec:relatedwork}

\noindent \textbf{Existing Machine Unlearning Approaches.}
The simple approach 
of just deleting data as requested by users and retraining the model from scratch is inefficient in case of large datasets and models with high complexities. 
Therefore, Cao and Yang \cite{MUL_cao_SP15} applied ideas from statistical query learning \cite{kearns1998efficient} and train conventional models based on intermediate summations of training data. Upon deletion request, they update the summations which is significantly more efficient. 
Ginart et al. \cite{deletion_ginart_NIPS19} proposed two provably efficient mechanisms for k-means clustering by either quantizing centroids at each iteration or using a divide-and-conquer algorithm to recursively partition the training set into subsets. Concurrently, Garg et al.~\cite{garg2020} explore the related but orthogonal problem of what it means to delete user data from a theoretical viewpoint. 

Bourtoule et al. \cite{MUL_bourtoule_arxiv19} proposed to split the training set into disjoint shards, train local models separately on every shard, and aggregate outputs from all local models to obtain the final prediction. 
By splitting data into disjoint shards, only one or few local models need to be retrained for deletion requests.

Other methods aim to purge the model parameters from the impact of individual samples. 
\ds{Guo et al. \cite{deletion_guo_arxiv19} defined data deletion as an indistinguishability problem:
the updated model should be difficult to distinguish from the model retrained from scratch without the deleted samples. They leverage the influence function of the deleted training point to apply a one-step Newton update on model parameters. Neel et al.\cite{neel2020descent} achieved such indistinguishability by noising the model parameters after updating.}
Baumhauer et al. \cite{MUL_baumhauer_arxiv20} focused on the setting where the removal of an entire class is requested by designing a linear transformation layer appended to the model.
However, there is no guarantee that no information of the deleted class is left inside the updated model.

\noindent \textbf{Verifying Machine Unlearning.}
One approach is the use of \emph{verifiable computation}. Such techniques can enable data-owners to attest the MLaaS provider's processing steps and thus verify that the data is truly being deleted. Possible techniques include the use of secure processors~\cite{gruss2017strong}, trusted platform modules~\cite{ohrimenko2016oblivious, allen2019algorithmic}, and zero-knowledge proofs~\cite{ghosh2015fully, ghosh2015authenticated}. However, such techniques require assumptions that limit their practicality -- server side computation, the use of trusted parties, computational overhead along with frequent attacks on the security of such systems~\cite{meltdown, spectre}. Moreover, as these schemes require detailed insight into the computation process, frequently, the service provider cannot keep the model a secret, which is a serious limitation.

Shokri et al. \cite{shokri_membership_SP17} investigated \emph{membership inference attacks} in machine learning, where the adversary goal is to guess whether a sample is in the target model's training set. 
They train shadow models on auxiliary data to mimic the target model and then train a classifier for inference attacks. 
Song et al. \cite{Mem_song_kdd19} extended the record-level membership inference to \emph{user-level membership inference attacks} that determines whether a user's data was used to train the target model.
\ds{Chen et al.~\cite{chen2020machine} apply such methods to infer the privacy loss resulting from performing unlearning.}
To apply these methods to our setup, each user needs to train shadow models on an auxiliary datasets similar to the target model, including knowledge of the target model's architecture and computation capability.  
In comparison, our backdoor-based machine unlearning verification approach does not require those strong assumptions and obtains extreme good verification performance.

Recently, Sablayrolles et al. \cite{data_tracing_arxiv20} proposed a method to detect whether a particular image dataset has been used to train a model by adding well-designed perturbations that alters their extracted features, i.e., watermarking the model. Instead of tracing an entire dataset, our approach considers a multi-user setting where each user adds a personal backdoor.
Also, they only consider image datasets.
Finally, Adi et al. used backdoor attacks to watermark deep learning models~ \cite{backdoor_watermark_adi_usenix18}.

\section{Conclusion}
The right to be forgotten addresses an increasingly pressing concern in the digital age. While there are several regulations and interpretations of the legal status of this right, there are few concrete approaches to verify data deletion. In this paper, we formally examine probabilistic verification of machine unlearning and provide concrete quantitative measures to study this from an individual user perspective. Based on backdoor attacks, we propose a mechanism by which users can verify, with high confidence, if the service provider is compliant of their right to be forgotten. We provide an extensive evaluation over a range of network architectures and datasets. Overall, this work provides a mathematical foundation for a quantitative verification of machine unlearning.

\ifFINAL
\section*{Acknowledgements}
This work was supported in part by the National Science Foundation under grants CNS-1553437 and CNS-1704105, the ARL's Army Artificial Intelligence Innovation Institute (A2I2), the Office of Naval Research Young Investigator Award, the Army Research Office Young Investigator Prize, Faculty research award from Facebook, and by Schmidt DataX award.
\fi


\bibliographystyle{IEEEtran}

\bibliography{main}

\begin{thebibliography}{10}
\providecommand{\url}[1]{#1}
\csname url@samestyle\endcsname
\providecommand{\newblock}{\relax}
\providecommand{\bibinfo}[2]{#2}
\providecommand{\BIBentrySTDinterwordspacing}{\spaceskip=0pt\relax}
\providecommand{\BIBentryALTinterwordstretchfactor}{4}
\providecommand{\BIBentryALTinterwordspacing}{\spaceskip=\fontdimen2\font plus
\BIBentryALTinterwordstretchfactor\fontdimen3\font minus
  \fontdimen4\font\relax}
\providecommand{\BIBforeignlanguage}[2]{{%
\expandafter\ifx\csname l@#1\endcsname\relax
\typeout{** WARNING: IEEEtran.bst: No hyphenation pattern has been}%
\typeout{** loaded for the language `#1'. Using the pattern for}%
\typeout{** the default language instead.}%
\else
\language=\csname l@#1\endcsname
\fi
#2}}
\providecommand{\BIBdecl}{\relax}
\BIBdecl

\bibitem{fredrikson_inversion_CCS15}
M.~Fredrikson, S.~Jha, and T.~Ristenpart, ``Model inversion attacks that
  exploit confidence information and basic countermeasures,'' in \emph{ACM
  Conference on Computer and Communications Security (CCS)}, 2015, pp.
  1322--1333.

\bibitem{shokri_membership_SP17}
R.~Shokri, M.~Stronati, C.~Song, and V.~Shmatikov, ``Membership inference
  attacks against machine learning models,'' in \emph{IEEE Symposium on
  Security and Privacy (S\&P)}, 2017, pp. 3--18.

\bibitem{ganju_property_privacy_CCS18}
K.~Ganju, Q.~Wang, W.~Yang, C.~A. Gunter, and N.~Borisov, ``Property inference
  attacks on fully connected neural networks using permutation invariant
  representations,'' in \emph{ACM Conference on Computer and Communications
  Security (CCS)}, 2018, pp. 619--633.

\bibitem{carlini_memorization_usenix19}
N.~Carlini, C.~Liu, {\'U}.~Erlingsson, J.~Kos, and D.~Song, ``The secret
  sharer: Evaluating and testing unintended memorization in neural networks,''
  in \emph{USENIX Security Symposium}, 2019, pp. 267--284.

\bibitem{GDPR_2016}
G.~D.~P. Regulation, ``Regulation (eu) 2016/679 of the european parliament and
  of the council of 27 april 2016 on the protection of natural persons with
  regard to the processing of personal data and on the free movement of such
  data, and repealing directive 95/46,'' \emph{Official Journal of the European
  Union (OJ)}, vol.~59, no. 1-88, p. 294, 2016.

\bibitem{ccpa}
C.~L. Information, ``Ab-375 privacy: personal information: businesses,''
  \url{https://leginfo.legislature.ca.gov/faces/billTextClient.xhtml?bill_id=201720180AB375},
  accessed: March 2, 2020.

\bibitem{MUL_cao_SP15}
Y.~Cao and J.~Yang, ``Towards making systems forget with machine unlearning,''
  in \emph{2015 IEEE Symposium on Security and Privacy}.\hskip 1em plus 0.5em
  minus 0.4em\relax IEEE, 2015, pp. 463--480.

\bibitem{deletion_ginart_NIPS19}
A.~Ginart, M.~Guan, G.~Valiant, and J.~Y. Zou, ``Making {AI} forget you: Data
  deletion in machine learning,'' in \emph{Advances in Neural Information
  Processing Systems}, 2019, pp. 3513--3526.

\bibitem{MUL_bourtoule_arxiv19}
L.~Bourtoule, V.~Chandrasekaran, C.~Choquette-Choo, H.~Jia, A.~Travers,
  B.~Zhang, D.~Lie, and N.~Papernot, ``Machine unlearning,'' \emph{arXiv
  preprint arXiv:1912.03817}, 2019.

\bibitem{deletion_guo_arxiv19}
C.~Guo, T.~Goldstein, A.~Hannun, and L.~van~der Maaten, ``Certified data
  removal from machine learning models,'' \emph{arXiv preprint
  arXiv:1911.03030}, 2019.

\bibitem{MUL_baumhauer_arxiv20}
T.~Baumhauer, P.~Sch{\"o}ttle, and M.~Zeppelzauer, ``Machine unlearning: Linear
  filtration for logit-based classifiers,'' \emph{arXiv preprint
  arXiv:2002.02730}, 2020.

\bibitem{clearviewai}
``{Clearview AI},'' \url{https://en.wikipedia.org/wiki/Clearview_AI}.

\bibitem{twittercease}
``Twitter demands {AI} company stops `collecting faces',''
  \url{https://www.bbc.com/news/technology-51220654}, 23 Jan, 2020.

\bibitem{lehmann2006testing}
E.~L. Lehmann and J.~P. Romano, \emph{Testing statistical hypotheses}.\hskip
  1em plus 0.5em minus 0.4em\relax Springer Science \& Business Media, 2006.

\bibitem{Mem_song_kdd19}
C.~Song and V.~Shmatikov, ``Auditing data provenance in text-generation
  models,'' in \emph{Proceedings of the 25th ACM SIGKDD International
  Conference on Knowledge Discovery \& Data Mining}, 2019, pp. 196--206.

\bibitem{chen2020machine}
M.~Chen, Z.~Zhang, T.~Wang, M.~Backes, M.~Humbert, and Y.~Zhang, ``When machine
  unlearning jeopardizes privacy,'' \emph{arXiv preprint arXiv:2005.02205},
  2020.

\bibitem{badnets_gu_arxiv17}
T.~{Gu}, K.~{Liu}, B.~{Dolan-Gavitt}, and S.~{Garg}, ``Badnets: Evaluating
  backdooring attacks on deep neural networks,'' \emph{IEEE Access}, vol.~7,
  pp. 47\,230--47\,244, 2019.

\bibitem{trojan_Liu_ndss18}
Y.~Liu, S.~Ma, Y.~Aafer, W.-C. Lee, J.~Zhai, W.~Wang, and X.~Zhang, ``Trojaning
  attack on neural networks,'' in \emph{NDSS}, 2018.

\bibitem{sgd}
\BIBentryALTinterwordspacing
H.~Robbins and S.~Monro, ``A stochastic approximation method,'' \emph{The
  Annals of Mathematical Statistics}, vol.~22, no.~3, pp. 400--407, 1951.
  [Online]. Available: \url{http://www.jstor.org/stable/2236626}
\BIBentrySTDinterwordspacing

\bibitem{kingma2014adam}
D.~P. Kingma and J.~Ba, ``Adam: A method for stochastic optimization,'' in
  \emph{International Conference on Learning Representations (ICLR)}, 2015.

\bibitem{MLP1961Rosenblatt}
F.~Rosenblatt, ``Principles of neurodynamics. perceptrons and the theory of
  brain mechanisms,'' Cornell Aeronautical Lab Inc Buffalo NY, Tech. Rep.,
  1961.

\bibitem{lecun_CNN_98}
Y.~LeCun, L.~Bottou, Y.~Bengio, and P.~Haffner, ``Gradient-based learning
  applied to document recognition,'' \emph{Proceedings of the IEEE}, vol.~86,
  no.~11, pp. 2278--2324, 1998.

\bibitem{he_ResNet_CVPR16}
K.~He, X.~Zhang, S.~Ren, and J.~Sun, ``Deep residual learning for image
  recognition,'' in \emph{IEEE conference on computer vision and pattern
  recognition (CVPR)}, 2016, pp. 770--778.

\bibitem{hochreiter1997lstm}
S.~Hochreiter and J.~Schmidhuber, ``Long short-term memory,'' \emph{Neural
  computation}, vol.~9, no.~8, pp. 1735--1780, 1997.

\bibitem{backdoor_chen_arxiv17}
X.~Chen, C.~Liu, B.~Li, K.~Lu, and D.~Song, ``Targeted backdoor attacks on deep
  learning systems using data poisoning,'' \emph{arXiv preprint
  arXiv:1712.05526}, 2017.

\bibitem{salem2020dynamic}
A.~Salem, R.~Wen, M.~Backes, S.~Ma, and Y.~Zhang, ``Dynamic backdoor attacks
  against machine learning models,'' \emph{arXiv preprint arXiv:2003.03675},
  2020.

\bibitem{backdoor_yao_ccs19}
Y.~Yao, H.~Li, H.~Zheng, and B.~Y. Zhao, ``Latent backdoor attacks on deep
  neural networks,'' in \emph{Proceedings of the 2019 ACM SIGSAC Conference on
  Computer and Communications Security}, 2019, pp. 2041--2055.

\bibitem{backdoor_saha_aaai20}
A.~Saha, A.~Subramanya, and H.~Pirsiavash, ``Hidden trigger backdoor attacks,''
  \emph{AAAI}, 2020.

\bibitem{backdoor_turner_arxiv19}
A.~Turner, D.~Tsipras, and A.~Madry, ``Label-consistent backdoor attacks,''
  \emph{arXiv preprint arXiv:1912.02771}, 2019.

\bibitem{backdoor_defense_wang_sp19}
B.~Wang, Y.~Yao, S.~Shan, H.~Li, B.~Viswanath, H.~Zheng, and B.~Y. Zhao,
  ``Neural cleanse: Identifying and mitigating backdoor attacks in neural
  networks,'' in \emph{2019 IEEE Symposium on Security and Privacy (SP)}.\hskip
  1em plus 0.5em minus 0.4em\relax IEEE, 2019, pp. 707--723.

\bibitem{tan2019bypassing}
T.~J.~L. Tan and R.~Shokri, ``Bypassing backdoor detection algorithms in deep
  learning,'' \emph{arXiv preprint arXiv:1905.13409}, 2019.

\bibitem{liu2018fine}
K.~Liu, B.~Dolan-Gavitt, and S.~Garg, ``Fine-pruning: Defending against
  backdooring attacks on deep neural networks,'' in \emph{International
  Symposium on Research in Attacks, Intrusions, and Defenses}.\hskip 1em plus
  0.5em minus 0.4em\relax Springer, 2018, pp. 273--294.

\bibitem{chen2018detecting}
B.~Chen, W.~Carvalho, N.~Baracaldo, H.~Ludwig, B.~Edwards, T.~Lee, I.~Molloy,
  and B.~Srivastava, ``Detecting backdoor attacks on deep neural networks by
  activation clustering,'' \emph{arXiv preprint arXiv:1811.03728}, 2018.

\bibitem{liu2019abs}
Y.~Liu, W.-C. Lee, G.~Tao, S.~Ma, Y.~Aafer, and X.~Zhang, ``{ABS}: Scanning
  neural networks for back-doors by artificial brain stimulation,'' in
  \emph{Proceedings of the 2019 ACM SIGSAC Conference on Computer and
  Communications Security}, 2019, pp. 1265--1282.

\bibitem{backdoor_defense_tran_nips18}
B.~Tran, J.~Li, and A.~Madry, ``Spectral signatures in backdoor attacks,'' in
  \emph{Advances in Neural Information Processing Systems}, 2018, pp.
  8000--8010.

\bibitem{veldanda2020nnoculation}
A.~K. Veldanda, K.~Liu, B.~Tan, P.~Krishnamurthy, F.~Khorrami, R.~Karri,
  B.~Dolan-Gavitt, and S.~Garg, ``{NNoculation}: Broad spectrum and targeted
  treatment of backdoored {DNNs},'' \emph{arXiv preprint arXiv:2002.08313},
  2020.

\bibitem{tor}
``{Tor Project},'' \url{https://www.torproject.org}, accessed: March 2, 2020.

\bibitem{vuvuzela}
J.~van~den Hooff, D.~Lazar, M.~Zaharia, and N.~Zeldovich, ``{V}uvuzela:
  Scalable private messaging resistant to traffic analysis,'' in \emph{ACM
  Symposium on Operating Systems Principles (SOSP)}, 2015.

\bibitem{grother1995nist}
P.~J. Grother, ``{NIST} special database 19 handprinted forms and characters
  database,'' \emph{National Institute of Standards and Technology}, 1995.

\bibitem{Caldas2018LEAFAB}
S.~Caldas, P.~Wu, T.~Li, J.~Konecn{\'y}, H.~B. McMahan, V.~Smith, and
  A.~Talwalkar, ``Leaf: A benchmark for federated settings,'' \emph{ArXiv},
  vol. abs/1812.01097, 2018.

\bibitem{cifar}
A.~Krizhevsky, V.~Nair, and G.~Hinton, ``The {CIFAR-10} dataset,'' \emph{URL:
  http://www.cs.toronto.edu/kriz/cifar.html}, 2014.

\bibitem{He_2016_CVPR}
K.~He, X.~Zhang, S.~Ren, and J.~Sun, ``Deep residual learning for image
  recognition,'' in \emph{Proceedings of the IEEE Conference on Computer Vision
  and Pattern Recognition (CVPR)}, June 2016.

\bibitem{deng2009imagenet}
J.~Deng, W.~Dong, R.~Socher, L.-J. Li, K.~Li, and L.~Fei-Fei, ``Imagenet: A
  large-scale hierarchical image database,'' in \emph{2009 IEEE conference on
  computer vision and pattern recognition}.\hskip 1em plus 0.5em minus
  0.4em\relax Ieee, 2009, pp. 248--255.

\bibitem{zhang2015character}
X.~Zhang, J.~Zhao, and Y.~LeCun, ``Character-level convolutional networks for
  text classification,'' in \emph{Advances in neural information processing
  systems}, 2015, pp. 649--657.

\bibitem{abadi2016deep}
M.~Abadi, A.~Chu, I.~Goodfellow, H.~B. McMahan, I.~Mironov, K.~Talwar, and
  L.~Zhang, ``Deep learning with differential privacy,'' in \emph{Proceedings
  of the 2016 ACM SIGSAC Conference on Computer and Communications Security},
  2016, pp. 308--318.

\bibitem{neel2020descent}
S.~Neel, A.~Roth, and S.~Sharifi-Malvajerdi, ``Descent-to-delete:
  Gradient-based methods for machine unlearning,'' \emph{arXiv preprint
  arXiv:2007.02923}, 2020.

\bibitem{kearns1998efficient}
M.~Kearns, ``Efficient noise-tolerant learning from statistical queries,''
  \emph{Journal of the ACM (JACM)}, vol.~45, no.~6, pp. 983--1006, 1998.

\bibitem{garg2020}
S.~Garg, S.~Goldwasser, and P.~N. Vasudevan, ``Formalizing data deletion in the
  context of the right to be forgotten,'' Cryptology ePrint Archive, Report
  2020/254, 2020, \url{https://eprint.iacr.org/2020/254}.

\bibitem{gruss2017strong}
D.~Gruss, J.~Lettner, F.~Schuster, O.~Ohrimenko, I.~Haller, and M.~Costa,
  ``Strong and efficient cache side-channel protection using hardware
  transactional memory,'' in \emph{USENIX Security Symposium}, 2017.

\bibitem{ohrimenko2016oblivious}
O.~Ohrimenko, F.~Schuster, C.~Fournet, A.~Mehta, S.~Nowozin, K.~Vaswani, and
  M.~Costa, ``Oblivious multi-party machine learning on trusted processors,''
  in \emph{USENIX Security Symposium}, 2016.

\bibitem{allen2019algorithmic}
J.~Allen, B.~Ding, J.~Kulkarni, H.~Nori, O.~Ohrimenko, and S.~Yekhanin, ``An
  algorithmic framework for differentially private data analysis on trusted
  processors,'' in \emph{Advances in Neural Information Processing Systems},
  2019, pp. 13\,635--13\,646.

\bibitem{ghosh2015fully}
E.~Ghosh, M.~T. Goodrich, O.~Ohrimenko, and R.~Tamassia, ``Fully-dynamic
  verifiable zero-knowledge order queries for network data.'' \emph{IACR
  Cryptology ePrint Archive}, vol. 2015, p. 283, 2015.

\bibitem{ghosh2015authenticated}
E.~Ghosh, O.~Ohrimenko, and R.~Tamassia, ``Authenticated range \& closest point
  queries in zero-knowledge.'' \emph{IACR Cryptology ePrint Archive}, vol.
  2015, p. 1183, 2015.

\bibitem{meltdown}
M.~Lipp, M.~Schwarz, D.~Gruss, T.~Prescher, W.~Haas, A.~Fogh, J.~Horn,
  S.~Mangard, P.~Kocher, D.~Genkin, Y.~Yarom, and M.~Hamburg, ``Meltdown:
  Reading kernel memory from user space,'' in \emph{USENIX Security Symposium},
  2018.

\bibitem{spectre}
P.~Kocher, J.~Horn, A.~Fogh, , D.~Genkin, D.~Gruss, W.~Haas, M.~Hamburg,
  M.~Lipp, S.~Mangard, T.~Prescher, M.~Schwarz, and Y.~Yarom, ``Spectre
  attacks: Exploiting speculative execution,'' in \emph{IEEE Symposium on
  Security and Privacy (S\&P)}, 2019.

\bibitem{data_tracing_arxiv20}
A.~Sablayrolles, M.~Douze, C.~Schmid, and H.~Jégou, ``Radioactive data:
  tracing through training,'' \emph{arXiv preprint arXiv:2002.00937}, 2020.

\bibitem{backdoor_watermark_adi_usenix18}
Y.~Adi, C.~Baum, M.~Cisse, B.~Pinkas, and J.~Keshet, ``Turning your weakness
  into a strength: Watermarking deep neural networks by backdooring,'' in
  \emph{27th {USENIX} Security Symposium ({USENIX} Security 18)}, 2018, pp.
  1615--1631.

\bibitem{emnist}
G.~{Cohen}, S.~{Afshar}, J.~{Tapson}, and A.~{van Schaik}, ``{EMNIST}:
  Extending {MNIST} to handwritten letters,'' in \emph{2017 International Joint
  Conference on Neural Networks (IJCNN)}, May 2017, pp. 2921--2926.

\bibitem{torch}
R.~Collobert, K.~Kavukcuoglu, and C.~Farabet, ``Torch7: A matlab-like
  environment for machine learning,'' in \emph{BigLearn, NIPS Workshop}, 2011.

\bibitem{AGnews}
``{AG's corpus of news articles},''
  \url{http://groups.di.unipi.it/~gulli/AG_corpus_of_news_articles.html},
  accessed: 2020-03-05.

\bibitem{macwilliams1977theory}
\BIBentryALTinterwordspacing
F.~MacWilliams and N.~Sloane, \emph{The Theory of Error-Correcting Codes}, ser.
  Mathematical Studies.\hskip 1em plus 0.5em minus 0.4em\relax Elsevier
  Science, 1977. [Online]. Available:
  \url{https://books.google.com/books?id=LuomAQAAIAAJ}
\BIBentrySTDinterwordspacing

\end{thebibliography}


\appendix
\section{Proof of Theorem~\ref{thm:computerho}}\label{app:proof}

We break down the procedure to compute the metric $\ourmetric{A, \alpha}{s,n}$ for a given Type I error $\alpha$ into the following steps: 

\begin{enumerate}[noitemsep]
    \item Compute the optimal value of the threshold $t$ for a given value of $\alpha$, the Type I error
    \item Compute the value of $\beta$, the Type II error, for the given optimal threshold $t$
    \item Compute $\ourmetric{A, \alpha}{s,n}$ from the previously computed $\beta$
\end{enumerate}

The proof relies on the independence of prediction order. We define a test of the backdoor success of $n$ consecutive samples as follows: 
\begin{definition}
Given oracle access to the predictions on $n$ samples $\{\mathsf{sample_i}\}_{i=1}^n$, for $r\in[0,1]$, we define $\testacc{n}{r}$ as a random variable that returns a value in $\{0,1\}^n$ where each entry is 1 with probability $r$ and 0 with probability $1-r$ assuming the order of the predictions is immaterial and that they are processed independently.
\end{definition}

If $r$ is set to the backdoor success probability $p$ or $q$, then the above defined $\testacc{n}{r}$ mimics the output of the corresponding ML-mechanism as it effectively measures the ratio of cases where a backdoor was able to change the prediction of its sample to a target label. Hence, for the hypothesis test, it is sufficient to compare the backdoor success ratio $p$ where the backdoor works (data not deleted) to the case where it does not work (data deleted) with ratio $q$. Next, we prove that the random variable $\hat{r}$ follows a rescaled binomial distribution. 

\begin{lemma}[Measured backdoor success rate]\label{lemma:measured}
Let $n \in \mathbb{N}$. Let $o \in \{0,1\}^n$ be a random draw from $\testacc{n}{r}$ with $r\in[0,1]$, the following statements hold: 
\begin{enumerate}
    \item The random variable $\hat{r} = \frac{1}{n} \sum_{j=1}^n o_j$ follows a binomial distribution with abscissa scaled to $[0,1]$ with draws $n$ and success probability $r$ where $o_j$ is the $j^{\tiny{\mathrm{th}}}$ draw output of $o$. 
    We call $\hat{r}$ the discrete success rate probability.
    \item The standard-deviation of $\hat{r}$ shrinks as $O(\frac{1}{\sqrt{n}}$)
    \item The tail probability mass of $\hat{r}$ can be computed for $x\in [0,1]$ using the following relation (and a symmetric relation for $\hat{r} \leq x$):
        \begin{equation}\label{eq:tailbounds}
          \Pr [ \hat{r} \geq x ] = \sum_{k \geq  n\cdot x }^n \binom{n}{k} r^k(1-r)^{n-k}
        \end{equation}
\end{enumerate}
\end{lemma}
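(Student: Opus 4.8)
The plan is to reduce all three claims to the elementary theory of the binomial distribution, once we have justified that the $n$ coordinates of $o$ are independent and identically distributed Bernoulli$(r)$ variables. By the definition of $\testacc{n}{r}$, each entry $o_j$ equals $1$ with probability $r$ and $0$ with probability $1-r$, and the entries are \emph{processed independently}; hence $o_1,\dots,o_n$ are i.i.d.\ with common mean $r$ and variance $r(1-r)$. The stipulation that the prediction order is immaterial is precisely what licenses treating the draws as independent, so no subtlety arises from the querying order itself.

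For claim (1), I would set $S=\sum_{j=1}^n o_j$ and invoke the standard fact that a sum of $n$ independent Bernoulli$(r)$ variables is Binomial$(n,r)$, i.e.\ $\Pr[S=k]=\binom{n}{k}r^k(1-r)^{n-k}$ for $k\in\{0,\dots,n\}$. Since $\hat{r}=S/n$ is just the deterministic rescaling of $S$ by the factor $1/n$, it takes the values $k/n$ with exactly these probabilities; this is the binomial law with its abscissa rescaled from $\{0,\dots,n\}$ to $\{0,1/n,\dots,1\}\subset[0,1]$, which is what the claim asserts.

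For claim (2), I would use the fact that independence makes variances additive, so $\mathrm{Var}(S)=nr(1-r)$, and then apply the scaling rule $\mathrm{Var}(\hat{r})=\mathrm{Var}(S/n)=\mathrm{Var}(S)/n^2=r(1-r)/n$. Hence the standard deviation equals $\sqrt{r(1-r)/n}=O(1/\sqrt{n})$, and since $r(1-r)\le 1/4$ the constant is even uniform in $r$. Claim (3) then follows immediately from claim (1): the event $\{\hat{r}\ge x\}$ coincides with $\{S\ge nx\}$, whose probability is the sum of the point masses $\binom{n}{k}r^k(1-r)^{n-k}$ over all integers $k\ge nx$, giving Eq.~\eqref{eq:tailbounds}; the complementary tail $\{\hat{r}\le x\}$ is handled by the symmetric argument.

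There is no genuinely hard step here; the entire content is the i.i.d.\ Bernoulli reduction, after which everything is a textbook computation. The only point I would take care with is the rounding convention in the summation index of Eq.~\eqref{eq:tailbounds}: because $S$ is integer-valued, the event $\{S\ge nx\}$ is really $\{S\ge\lceil nx\rceil\}$, so I would state the range as $k\ge nx$ with the understanding that $k$ runs over integers, matching the paper's notation and keeping the tail expression exact rather than an approximation.
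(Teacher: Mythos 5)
Your proposal is correct and follows essentially the same route as the paper: both reduce the lemma to the i.i.d.\ Bernoulli$(r)$ structure of the coordinates of $o$, obtain the rescaled binomial law for $\hat{r}$ (the paper via an explicit count of the equally likely outcomes with $k$ ones, you by citing the standard Bernoulli-sum fact, which is the same computation), and then read off the variance scaling and the tail sum exactly as the paper does. Your added remark about interpreting the summation range $k \geq nx$ over integers, i.e.\ as $k \geq \lceil nx \rceil$, is a fair clarification of the paper's notation rather than a deviation.
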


\begin{proof}
As we assumed the independence of prediction order,
the output of $\testacc{n}{r}$ follows a binomial distribution $\binomdistr{n}{r}$ where $n$ is the number of draws and $r$ is the success probability.
\begin{enumerate}
    \item For $k\in\{0,\ldots,n\}$, let $\mathcal C_k$ be the set of all possible outputs $c$ of $\testacc{n}{r}$ with $\sum_{j=0}^n c_j = k$. Note that all outputs are equally likely. Then, the occurrence probability for $\hat{r} = \frac{k}{n}$ is given by:
        \begin{equation}\label{eq:binomdistr}
        \begin{aligned}
            \Pr[\hat{r} = \frac{k}{n}] 
            &= \sum_{\forall c \in \mathcal C_k} r^k(1-r)^{n-k}\\
            &= \binom{n}{k} r^k(1-r)^{n-k} \\
            &= \Pr_\binomdistrlabel[ k = k | n,r] 
        \end{aligned}
        \end{equation}

    \item The variance of a binomial distribution is $\sigma_n^2 = nr(1-r)$. With scaled abscissa by $1/n$, the standard deviation becomes $\sigma = \sqrt{\sigma^2_n / {n^2}} = {\sqrt{r(1-r)}}/{\sqrt{n}}$.

    \item The mass in the tail is the sum over the probabilities for the corresponding discrete events. Hence \cref{eq:tailbounds} directly follows from summing \cref{eq:binomdistr} for $k\geq n\cdot x$
\end{enumerate}
\end{proof}

\begin{proof}[Proof of \Cref{thm:computerho}]
\Cref{lemma:measured} can be directly applied to prove the results in \Cref{thm:computerho}. In particular, the hypothesis test consists of distinguishing two scaled binomial distributions with $r=q$ in case $H_0$ (the data has been deleted) and $r=p$ for $H_1$ (data has not been deleted). \Cref{fig:numberofsamples} graphically illustrates this.
As seen in \Cref{lemma:measured}, the scaled distributions concentrate around the mean, thus reducing the probability in the overlapped areas, which in effect reduces the Type I and Type II error probabilities.

By \Cref{lemma:measured}, the shape of the hypothesis distributions depends on $q$ for $H_0$ and $p$ for $H_1$. Therefore, for a given a threshold $t \in [0,1]$, the Type I error $\alpha_t$ and the Type II error $\beta_t$ for the hypothesis test depend on $p$ and $q$ respectively.    
\begin{subequations}\label{eq:alphaandbetat}
  \begin{align}
    \begin{split}
            \alpha_q^t &= \Pr[\hat{r} > t | H_0, n ] \\
                        &= \sum_{k >  n\cdot t }^n \binom{n}{k} q^k(1-q)^{n-k}
    \end{split}
    \label{eq:alphat} 
    \\
    \begin{split}
            \beta_p^t  &= \Pr[\hat{r} \leq t | H_1, n] \\
                        &= \sum_{k = 0 }^{n\cdot t} \binom{n}{k} p^k(1-p)^{n-k}
    \end{split}
    \label{eq:betat} 
  \end{align}
\end{subequations}

Given that $\alpha$ is set by systemic constraints, we invert \cref{eq:alphat} to get the optimal value of the threshold $t$ and then plug that into \cref{eq:betat}. Consider the following equality defining $t_\alpha$ given $\alpha$:
\begin{equation}
     \heavyside{\frac{k}{n} \leq t_\alpha} := \heavyside{\Pr\left[\hat{r} \leq \frac{k}{n} \Big|H_0, n\right] \leq 1 - \alpha }
\end{equation}

We can then use this implicit definition of threshold $t_\alpha$ to determine the Type II error $\beta$ given a value of $p$: 
\begin{align*}
    \beta_{p,q}^\alpha\! 
    &= \!\sum_{k=0}^n \Pr\left[\hat{r} = \frac{k}{n} \Big| H_1, n\right]\! \cdot \heavyside{\frac{k}{n} \leq t_\alpha} \\
    &=  \!\sum_{k=0}^n \Pr\left[\hat{r} = \frac{k}{n} \Big| H_1, n\right]\! \cdot \heavyside{\Pr\left[\hat{r} \leq \frac{k}{n} \Big|H_0, n\right] \leq 1\! -\! \alpha} \\
     &= \!\sum_{k=0}^n \binom{n}{k} p^k(1\!-\!p)^{n\!-\!k}\! \cdot \heavyside{\sum_{l = 0}^k\!\binom{n}{l} q^l(1\!-\!q)^{n\!-\!l} \leq 1\! -\! \alpha }
\end{align*}
Finally, to connect this value with $\rho_{A, \alpha}(s, n)$, we use \cref{eq:test} from \cref{sec:formulation} and $s=(p,q)$:
\begin{equation}
\begin{aligned}
    \rho_{A, \alpha}(s, n) &= 1- \beta_{p,q}^\alpha \\ 
                           &= 1 - \sum_{k=0}^n \binom{n}{k} p^k(1-p)^{n-k} \cdot \\
    &~~~~~~~~~~ \heavyside{\sum_{l = 0}^k \binom{n}{l} q^l(1-q)^{n-l} \leq 1- \alpha }
\end{aligned}
\end{equation}
which gives us an analytic expression for the confidence that we are in case $H_1$, i.e. that our data has not been deleted as requested. If this value is high, the user has high confidence that the server does not follow deletion request. 
\end{proof}

\section{Evaluation Datasets and Architectures}\label{appendix:datasets}
Following paragraphs describe the datasets we use for evaluation, the ML architectures, and backdoor methods, summarized in \Cref{tab:dataset_summary}.\\

\noindent \textbf{Extended MNIST (EMNIST).}
The dataset is composed of handwritten character digits derived from the NIST Special Database 19 \cite{grother1995nist}.
The input images are in black-and-white with a size of $28 \times 28$ pixels.
In our experiments, we use the digits form of EMNIST, which has 10 class labels, each with 280,000 samples \cite{emnist}.
We split the dataset into 1,000 users in an independent and identically distributed (IID) manner, with 280 samples per user.
For the model architecture, we use a multi-layer perceptron (MLP), which contains three layers with 512, 512, and 10 neurons.
Using the Adam optimizer \cite{kingma2014adam}, we train the model with 20 epochs and a batch size of 128.
On a clean dataset, the model achieves $99.84\%$ training accuracy and $98.99\%$ test accuracy.
For the backdoor method, each user chooses a random target label and a backdoor trigger by randomly selecting 4 pixels and setting their values as 1.\\

\noindent \textbf{Federated Extended MNIST (FEMNIST).}
The dataset augments Extended MNIST by providing a writer ID \cite{Caldas2018LEAFAB}.
We also use the digits, containing 10 class labels and 382,705 samples from 3,383 users, rendering it IID due to the unique writing style of each person.
Same as EMNIST, the input image is in black-and-white with $28 \times 28$ pixels.
Different from EMNIST, this dataset does not include additional preprocessing, such as size-normalization and centering.
Also, the pixel value is inverse: the value of 1.0 corresponds to the background, and 0.0 corresponds to the digits. Therefore, we use the same backdoor method as for EMNIST, but setting the pixels to 0 instead of 1.
For the model architecture, we use a convolutional neural network (CNN), containing two convolutional layers with $3 \times 3$ kernel size and filter numbers of 32 and 64, followed by two fully connected layers, containing 512 and 10 neurons.
We use the Adam optimizer \cite{kingma2014adam} to train the model with 20 epochs and batch size of 128.
Without backdooring, the model achieves $99.72\%$ training accuracy and $99.45\%$ test accuracy.\\

\noindent \textbf{CIFAR10.}
Providing $32 \times 32 \times 3$ color images  in 10 classes, 
with 6,000 samples per class, we split this dataset in  into 500 users in an IID manner, 120 samples per user. Applying a residual network (ResNet) \cite{he_ResNet_CVPR16}, containing 3 groups of residual layers with number of filters set to (16, 32, 64), and 3 residual units for each group, and using Adam~\cite{kingma2014adam} for training with with 200 epochs and batch size of 32, this model achieves $98.98\%$ training accuracy and $91.03\%$ test accuracy on a clean dataset without backdoors.
We use standard data augmentation methods (e.g., shift, flip) for good accuracy performance.
The backdoor method is identical to EMNIST.
Note that we consider RGB channels as different pixels.\\

\noindent \textbf{ImageNet.}
This widely used dataset contains $1331168$ differently sized images used for object recognition tasks, assigned to 1000 distinct labels~\cite{deng2009imagenet}. We removed 23 pictures due to incompatible jpeg-color schemes, and trained a ResNet50 model~\cite{He_2016_CVPR}. Due to computation power restrictions, we applied transfer-learning with a pre-trained model provided by the torch framework~\cite{torch}. Without backdooring, the pretrained model achieves $87.43\%$ training accuracy and $76.13\%$ test accuracy.
The generation of the backdoor is identical to CIFAR10, except that due to the varying sizes of ImageNet pictures, we colored 4 random color-channels of a transparent 32x32x3 pixel mask-image and then scale the mask-image up to the corresponding ImageNet picture size when applying the backdoor.\\

\noindent \textbf{AG News.}
This is a major benchmark dataset for text classification \cite{zhang2015character}.
The raw dataset~\cite{AGnews} contains $1,281,104$ news articles from more than $2,000$ news sources (users).
Similar to Zhang et al. \cite{zhang2015character}, we choose the 4 largest news categories (Business, Sci/Tech, Sports, World) as class labels and use the title and description fields of the news to predict its category.
We filter out samples with less than 15 words and only keep users with more than 30 samples to improve statistical evaluation reliability.
The final dataset has $549,714$ samples from $580$ users.
For the model architecture, we use a long short-term memory (LSTM) network \cite{hochreiter1997lstm}.
The model first turn words into 100-dimension vectors, and then uses a LSTM layer with 100 units to learn feature representations, and finally a linear layer with 4 neurons for classification.
We use the Adam optimizer \cite{kingma2014adam} to train the model with 5 epochs and batch size of 64.
Without backdooring, the model achieves $96.87\%$ training accuracy and $91.03\%$ test accuracy.
For the backdoor method, each user chooses a random target label and a backdoor pattern by randomly picking 4 word positions in last 15 words and replacing them with 4 user-specific words, which are randomly chosen from the whole word vocabulary.

\begin{figure}
\centering
  \centering
  \includegraphics[width=\linewidth]{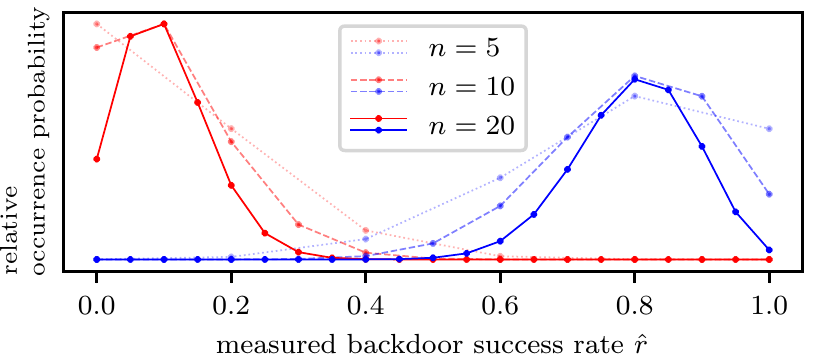}
  \caption{This figure intuitively shows how the confidence in distinguishing between $H_0$ and $H_1$ improves with the number of samples. With additional number of measured samples $n$, the distributions concentrate around the mean, thus simultaneously decreasing both $\alpha$ and $\beta$. Height of curves are adjusted for demonstration purposes. $q=0.1, p=0.8$}
  \label{fig:multiplesample}
\end{figure}

\setlength{\spacehack}{-0.4em}
\setlength{\spacehacko}{-0.3em}

\newlength{\alphaswidth}

\begin{figure*}[!ht]
    \setlength{\alphaswidth}{0.35\linewidth}
	\centering
    \verticaltext{~~~~~~~~~~~~~~~~EMNIST}
	\begin{subfigure}[t]{\alphaswidth}
		\raggedleft
		\includegraphics[width=\linewidth]{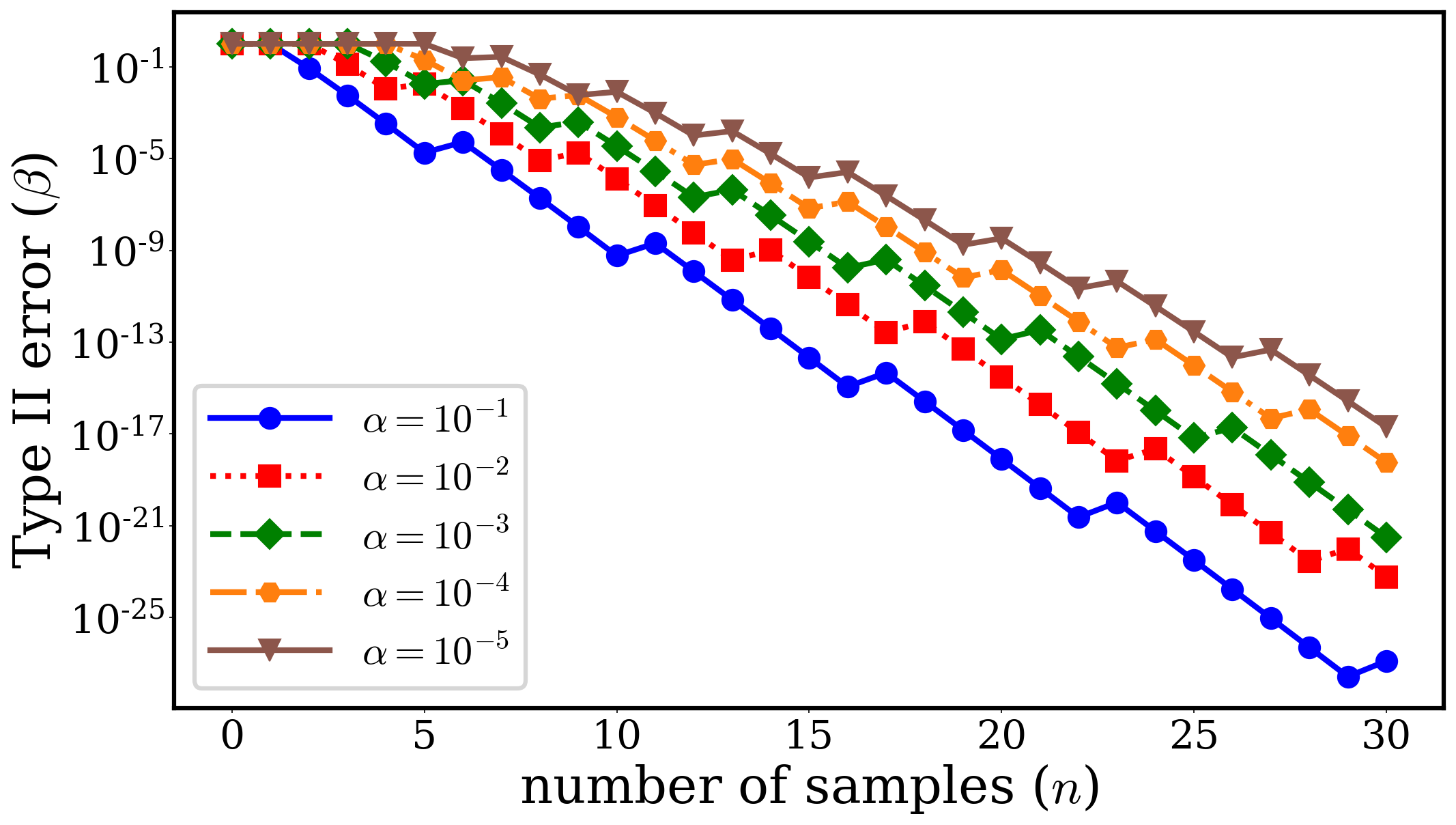}\vspace{\spacehacko}
	\end{subfigure}\hfill
	\begin{subfigure}[t]{\alphaswidth}
		\raggedright
		\includegraphics[width=\linewidth]{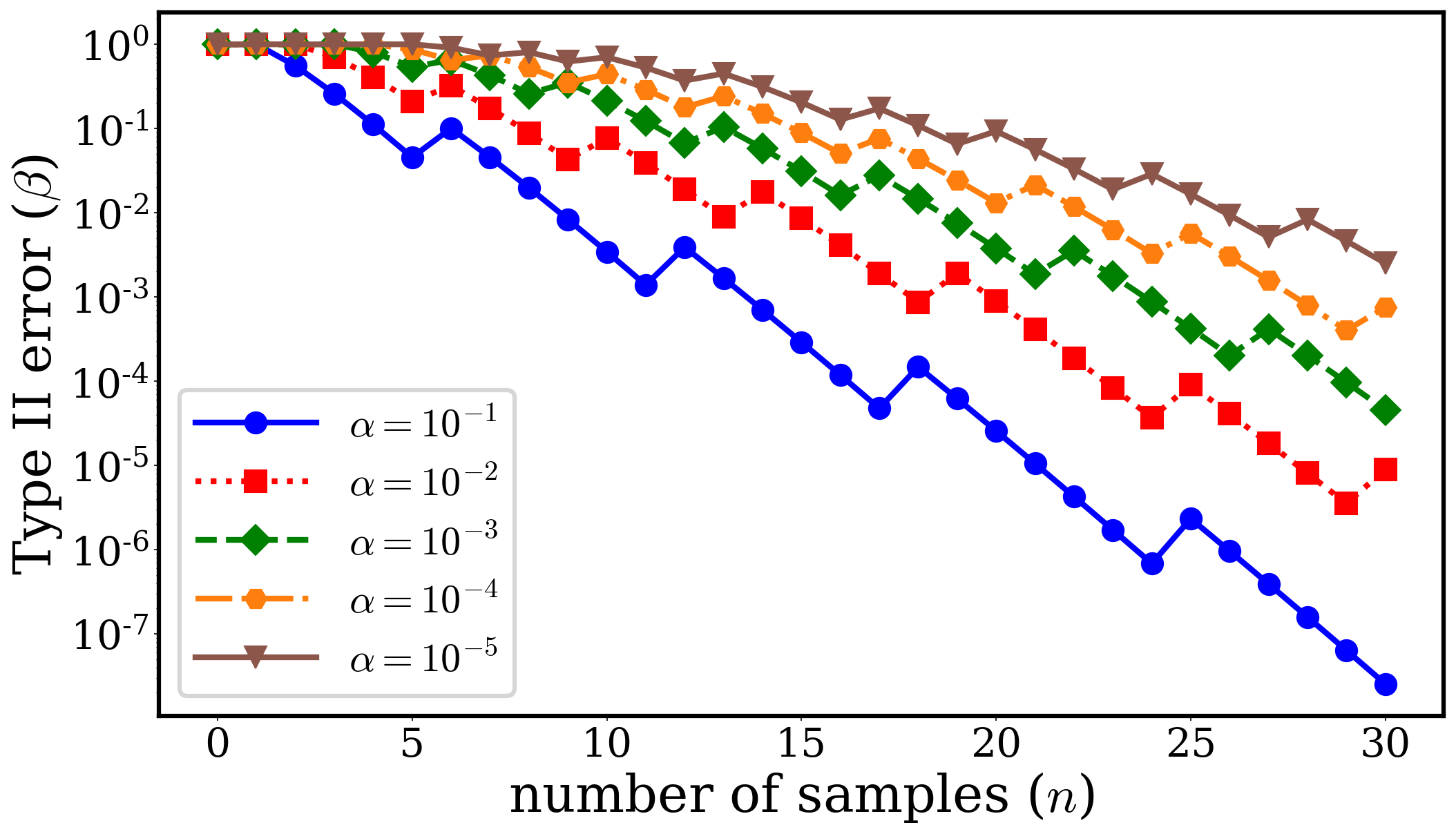}\vspace{\spacehacko}
	\end{subfigure}\hfill
	\par\medskip\vspace{\spacehack}
    \verticaltext{~~~~~~~~~~~~~FEMNIST}
	\begin{subfigure}[t]{\alphaswidth}
		\raggedleft
		\includegraphics[width=\linewidth]{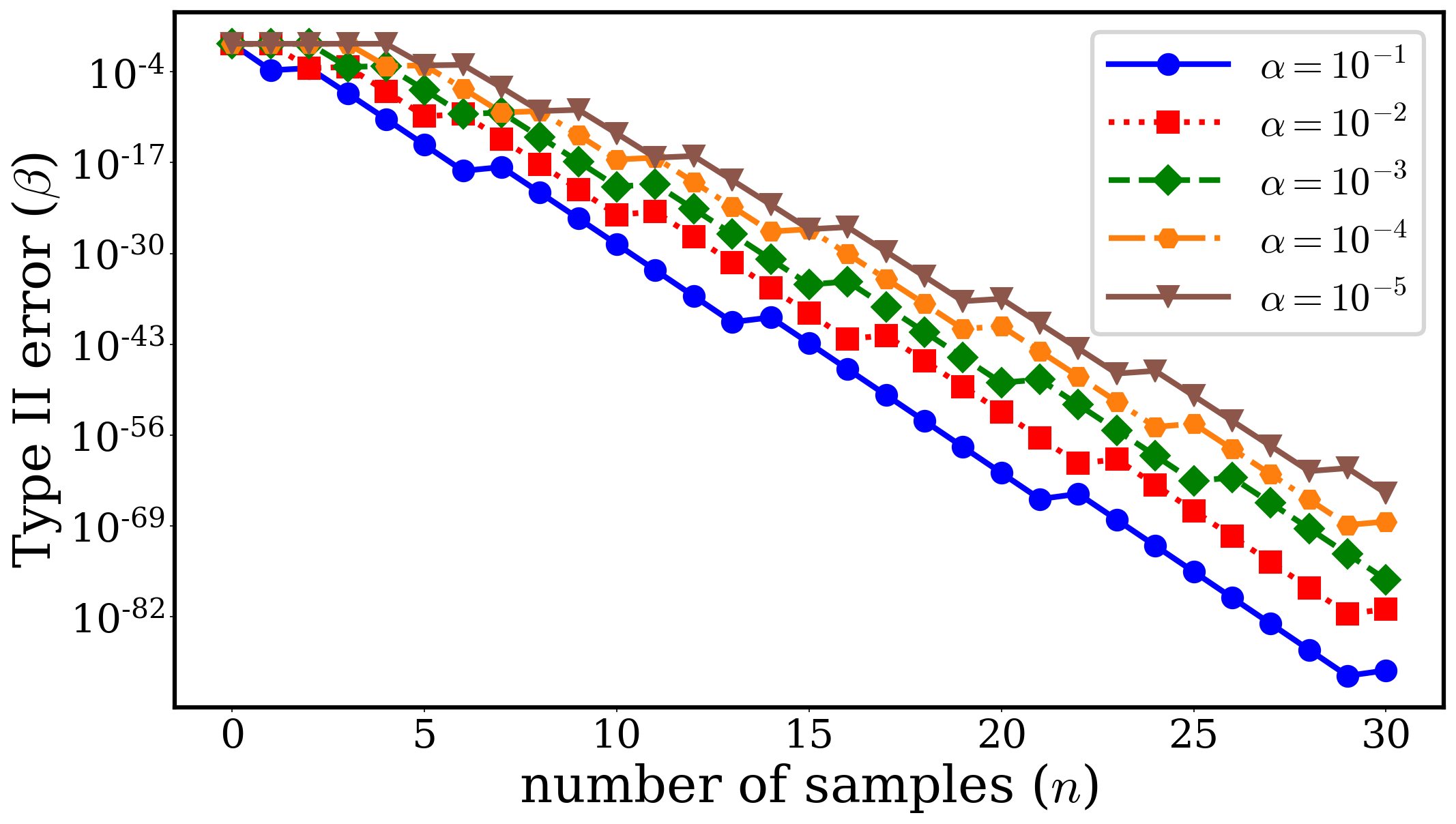}\vspace{\spacehacko}
	\end{subfigure}\hfill
	\begin{subfigure}[t]{\alphaswidth}
		\raggedright
		\includegraphics[width=\linewidth]{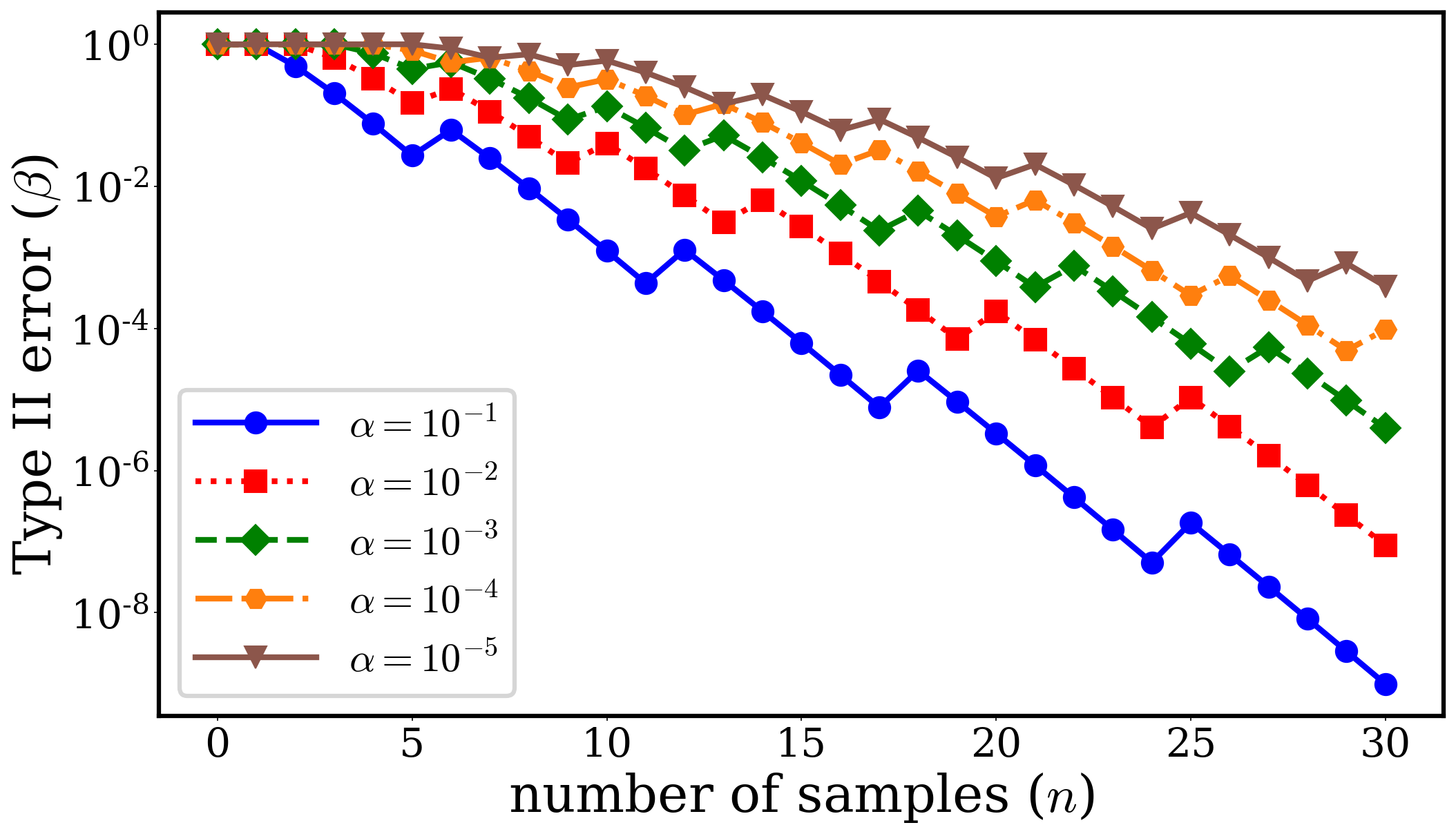}\vspace{\spacehacko}
	\end{subfigure}\hfill
	\par\medskip\vspace{\spacehack}
    \verticaltext{~~~~~~~~~~~~~CIFAR10}
	\begin{subfigure}[t]{\alphaswidth}
		\raggedleft
		\includegraphics[width=\linewidth]{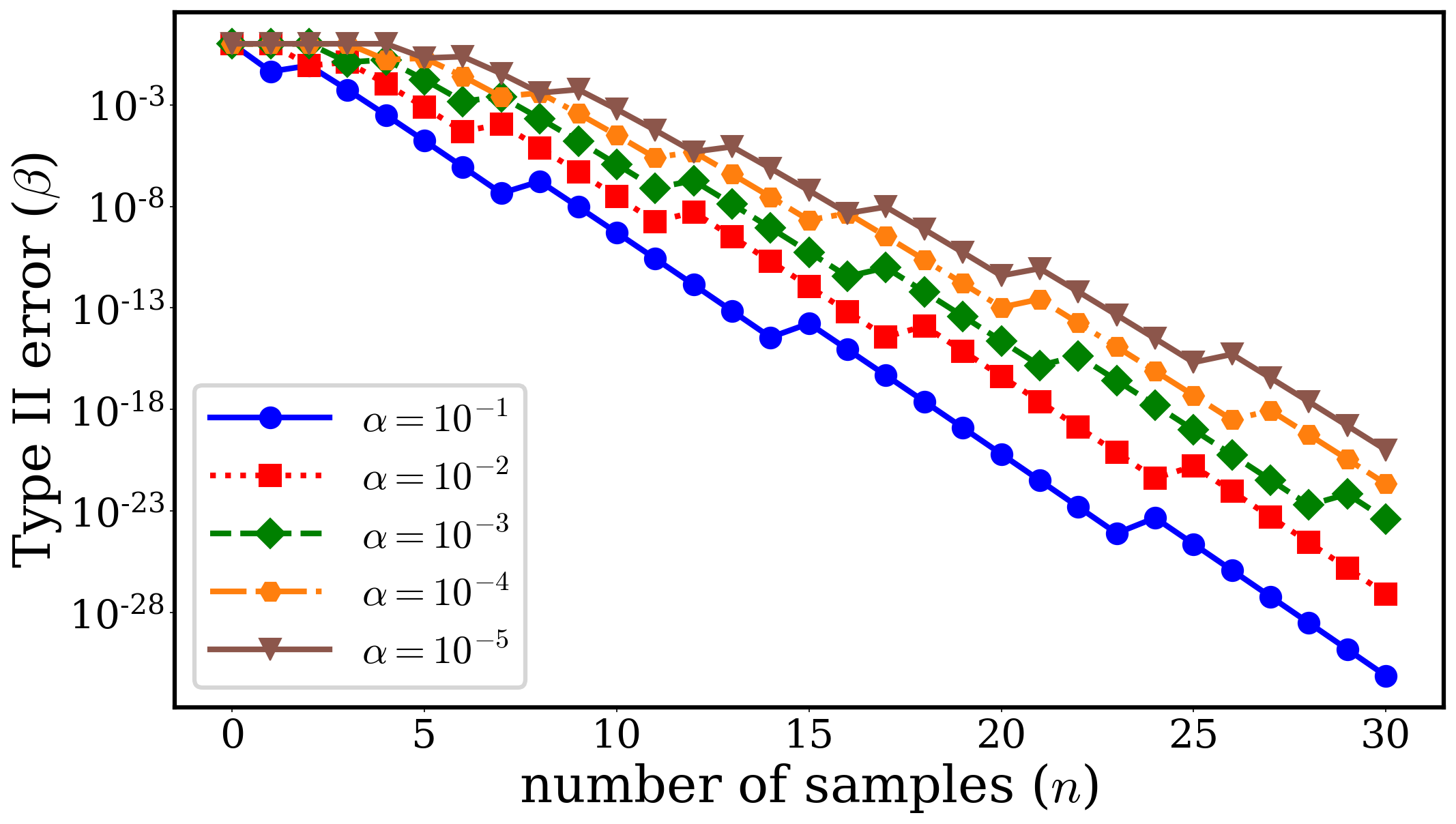}\vspace{\spacehacko}
	\end{subfigure}\hfill
	\begin{subfigure}[t]{\alphaswidth}
		\raggedright
		\includegraphics[width=\linewidth]{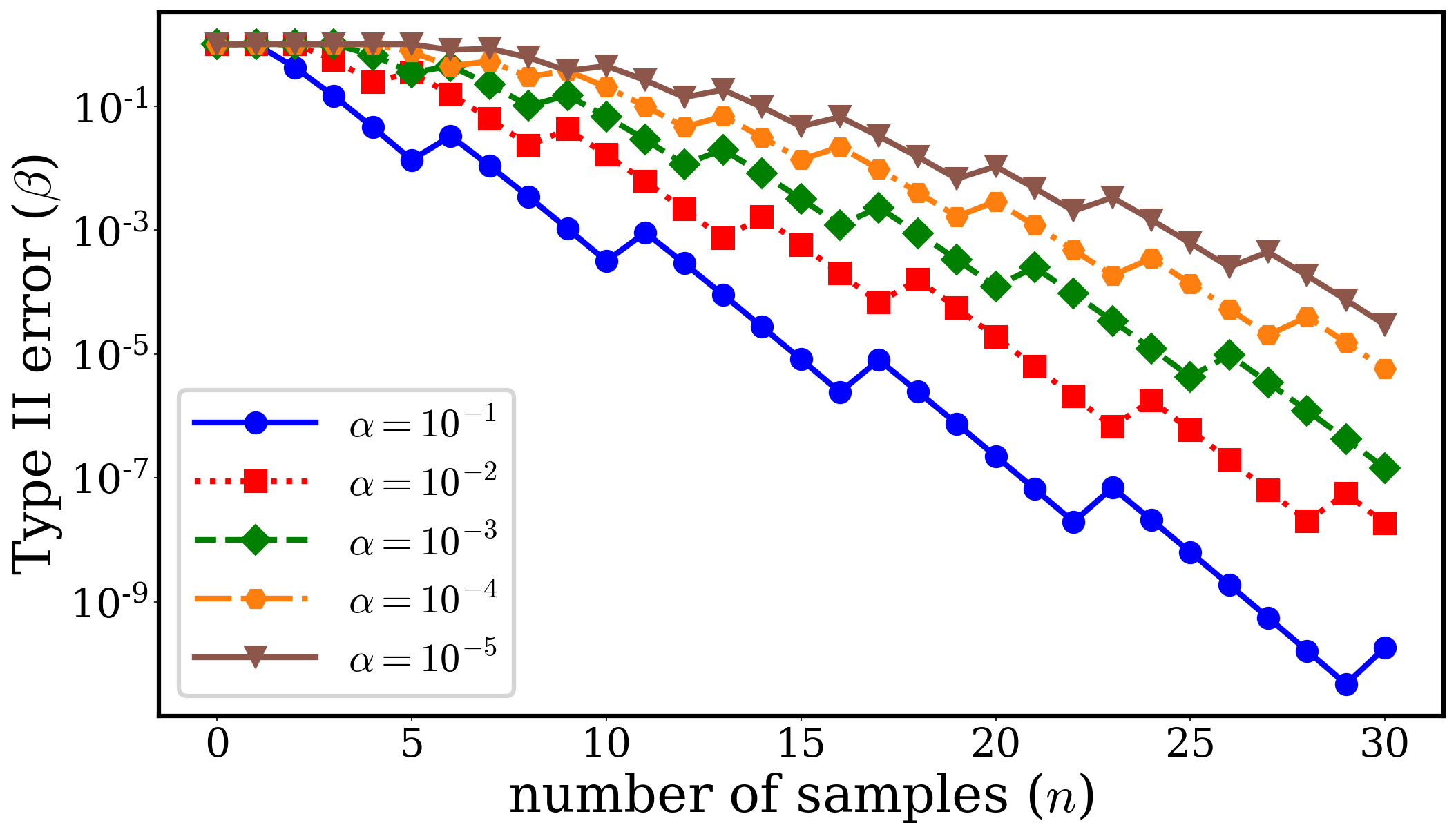}\vspace{\spacehacko}
	\end{subfigure}\hfill
	\par\medskip\vspace{\spacehack}
    \verticaltext{~~~~~~~~~~~~~ImageNet}
	\begin{subfigure}[t]{\alphaswidth}
		\raggedleft
		\includegraphics[width=\linewidth]{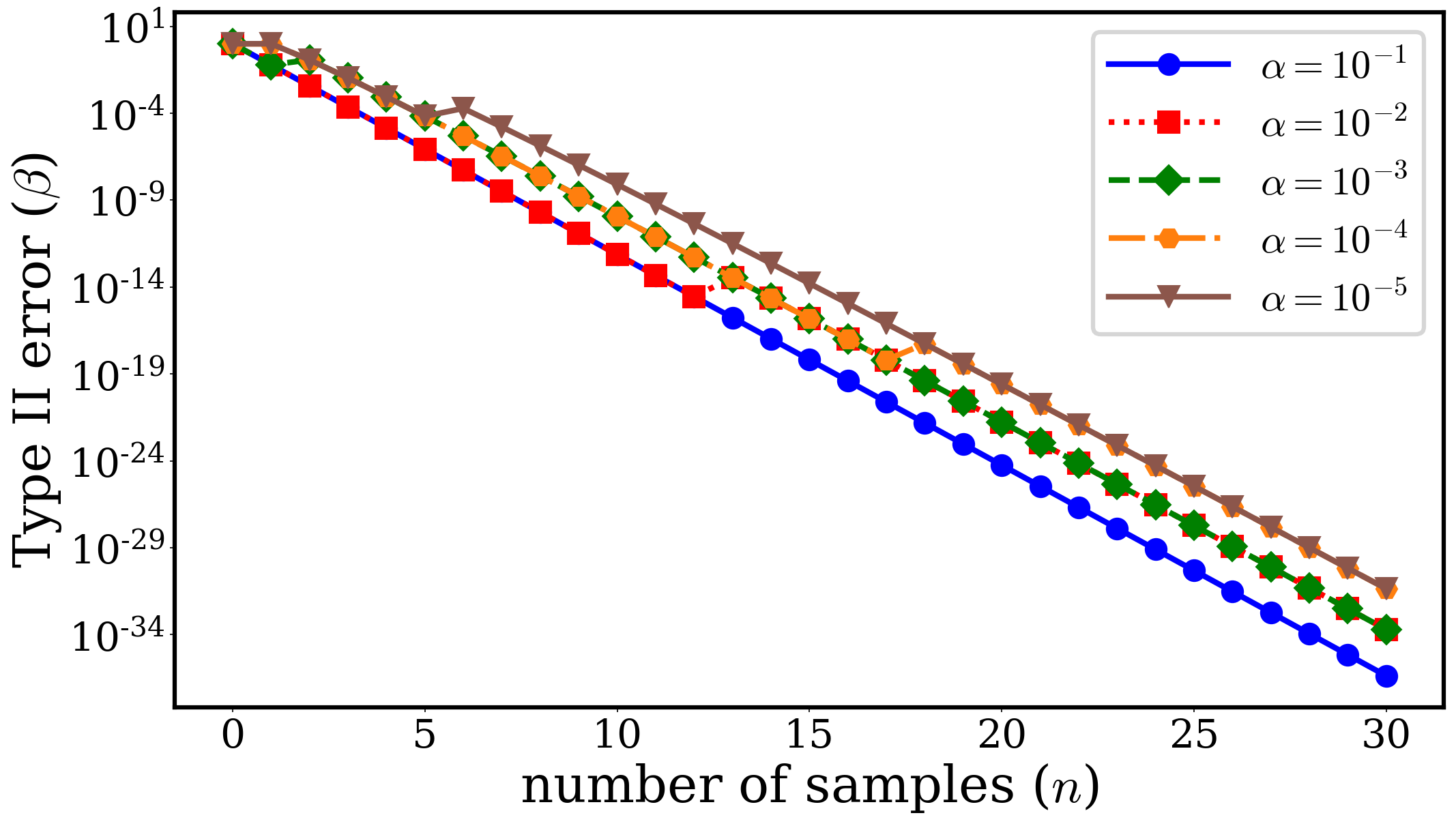}\vspace{\spacehacko}
	\end{subfigure}\hfill
	\begin{subfigure}[t]{\alphaswidth}
		\raggedright
		\includegraphics[width=\linewidth]{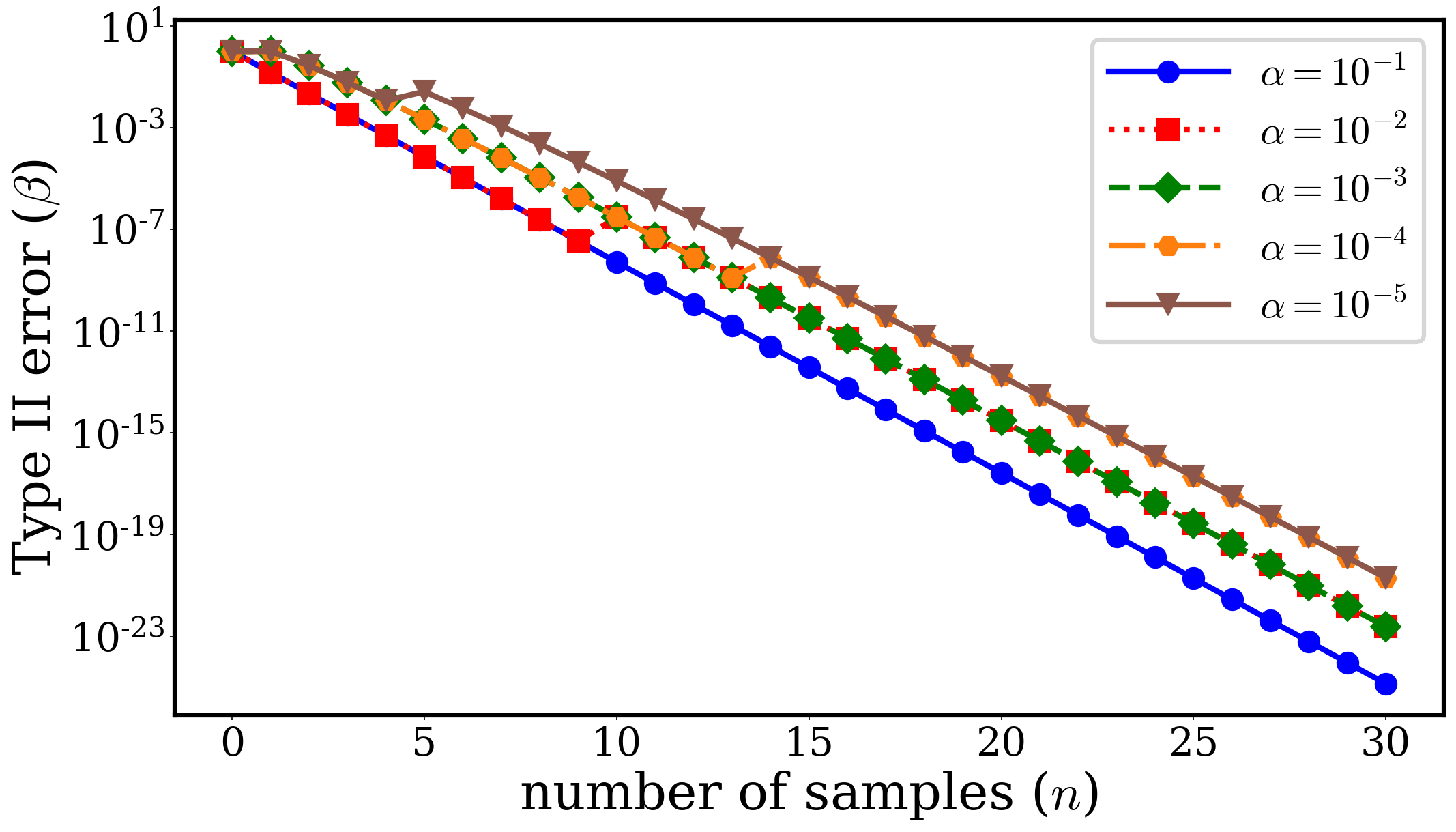}\vspace{\spacehacko}
	\end{subfigure}\hfill
	\par\medskip\vspace{\spacehack}
    \verticaltext{~~~~~~~~~~~~~AG News}
    \newsavebox\IBoxB \newlength\IHeight
    \sbox\IBoxB{\includegraphics[width=\alphaswidth]{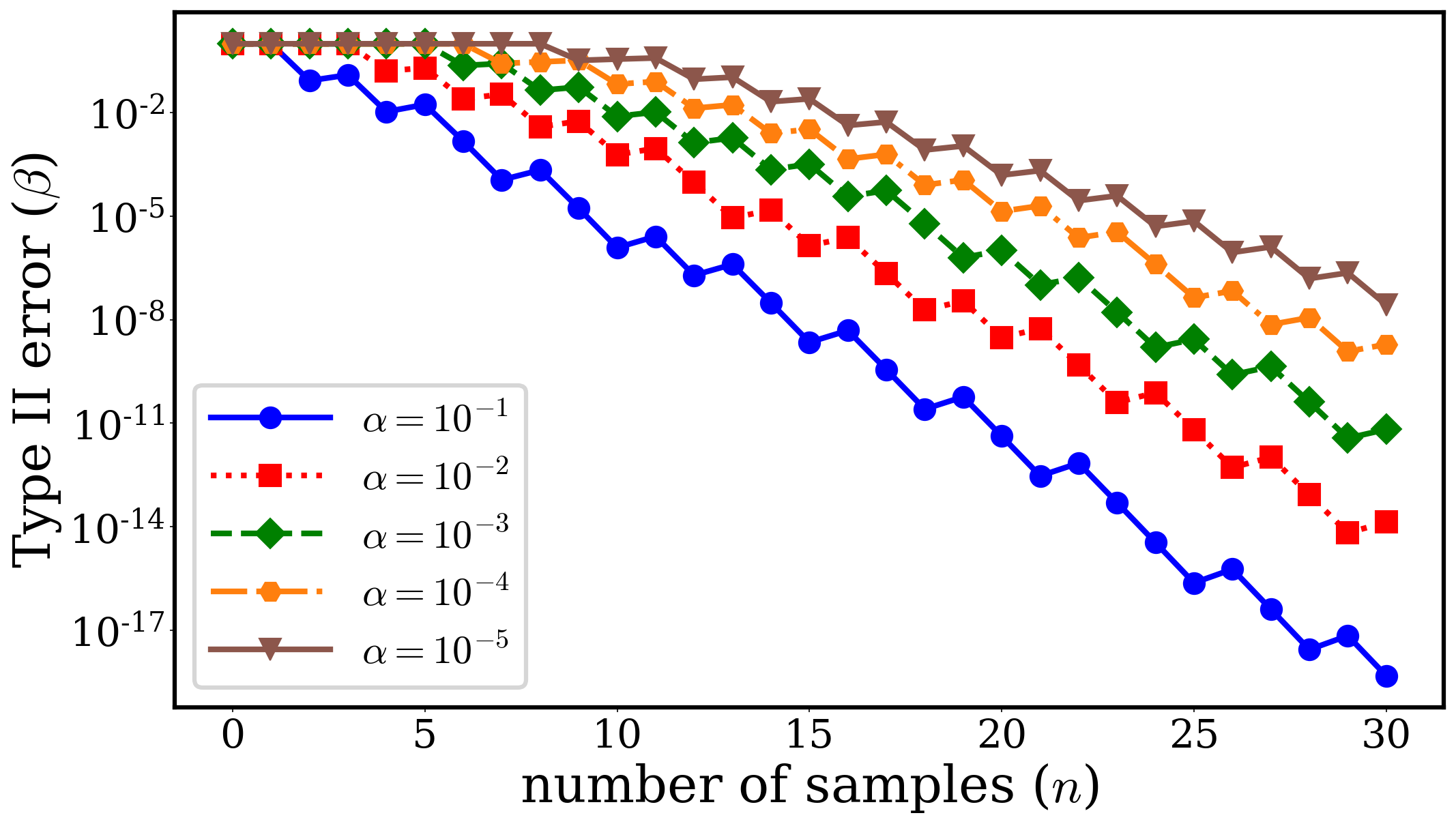}}\setlength\IHeight{\ht\IBoxB}%
	\begin{subfigure}[t]{\alphaswidth}
		\raggedleft
		\includegraphics[width=\linewidth]{Images/experiment_results/author_poison/AGNEWS_lstm_5_percent_author_poison_verify_poison_ratio_50.png}
		\caption{\textbf{\Naive{} Server}}
		\label{fig:nat_verify_alpha}
	\end{subfigure}\hfill
	\begin{subfigure}[t]{\alphaswidth}
    \vspace{-\IHeight}
	    \centering
	    \begin{minipage}[c][\IHeight][c]{0.8\linewidth}
    		    \textit{Not available}. State-of-the-art backdoor defense methods, such as the applied Neural Cleanse, do not generalise to non-continuous datasets such as AG-News. See \cref{sec:evaluation:advancedserver} for more details.
	    \end{minipage}
		\caption{\textbf{\Advanced{} Server}}
		\label{fig:def_verify_alpha}
	\end{subfigure}\hfill
	\caption{Verification performance for with different Type-I error $\alpha$. We have set the author poison fraction $\userratio{}=0.05$ and the poison ratio $\poisonratio{}=50\%$. Each row of plots is evaluated on the data-set specified at the most-left position. The left column depicts the result for the \Naive Server and the right column for the \Advanced Server.}
	\label{fig:verify_alpha}\vspace{\spacehack}
\end{figure*}

\begin{figure*}[!b]
	\centering
	\begin{subfigure}[t]{0.24\linewidth}
		\centering
		\includegraphics[width=\linewidth]{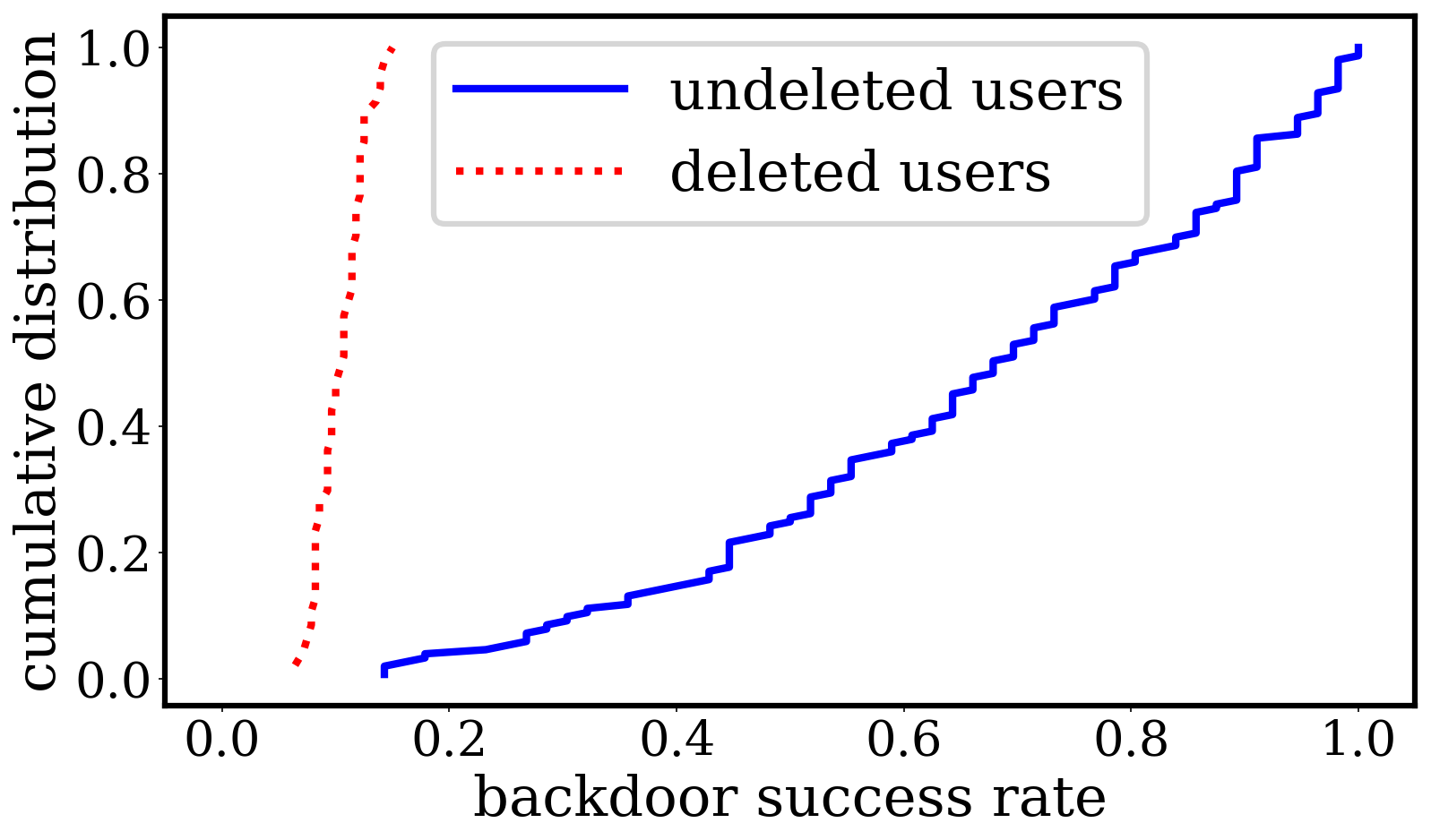}
		\caption{EMNIST}
		\label{fig:def_emnist_acc_cdf}
	\end{subfigure}\hfill
	\begin{subfigure}[t]{0.24\linewidth}
		\centering
		\includegraphics[width=\linewidth]{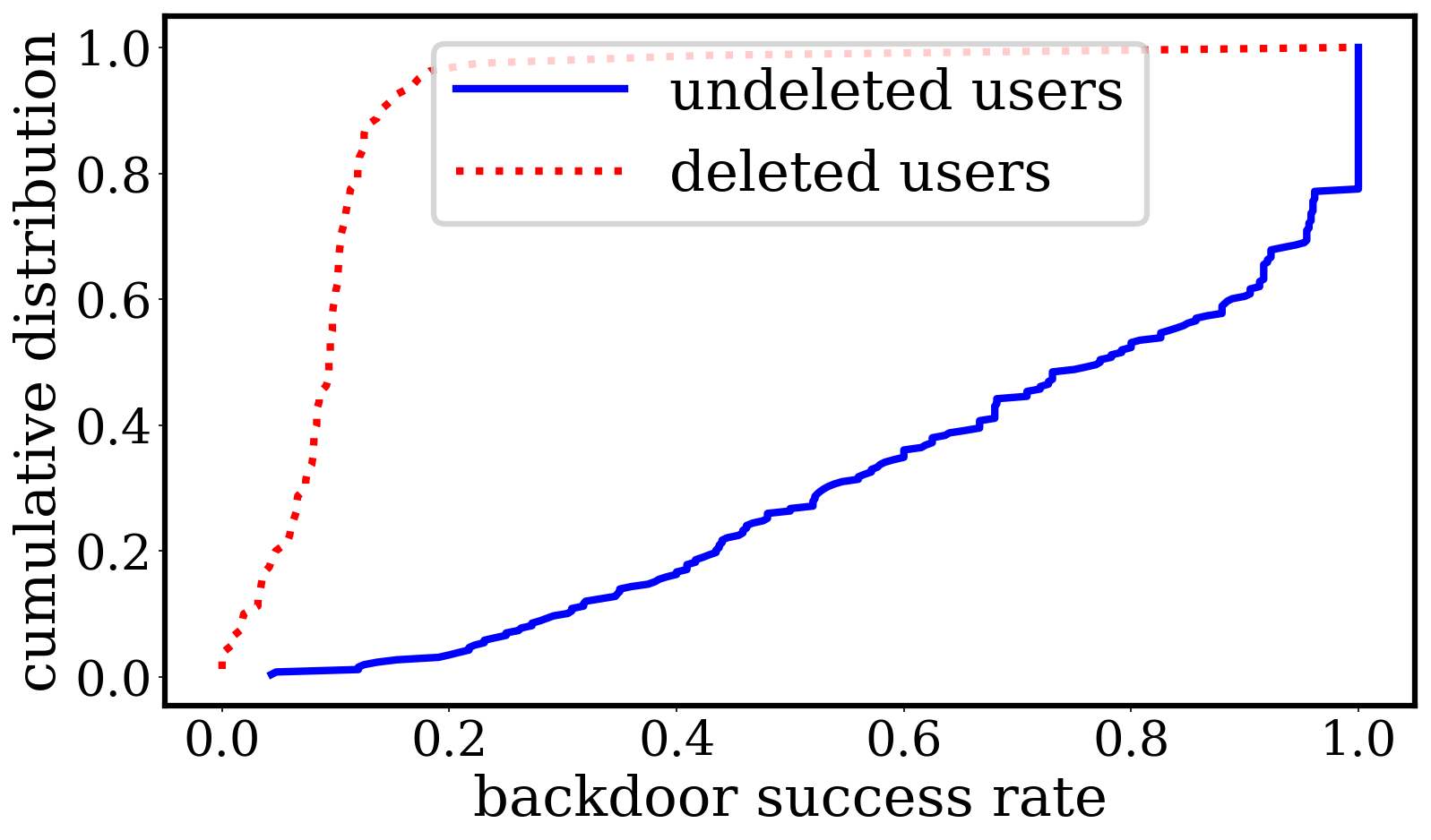}
		\caption{FEMNIST}
		\label{fig:def_femnist_acc_cdf}
	\end{subfigure}\hfill
	\begin{subfigure}[t]{0.24\linewidth}
		\centering
		\includegraphics[width=\linewidth]{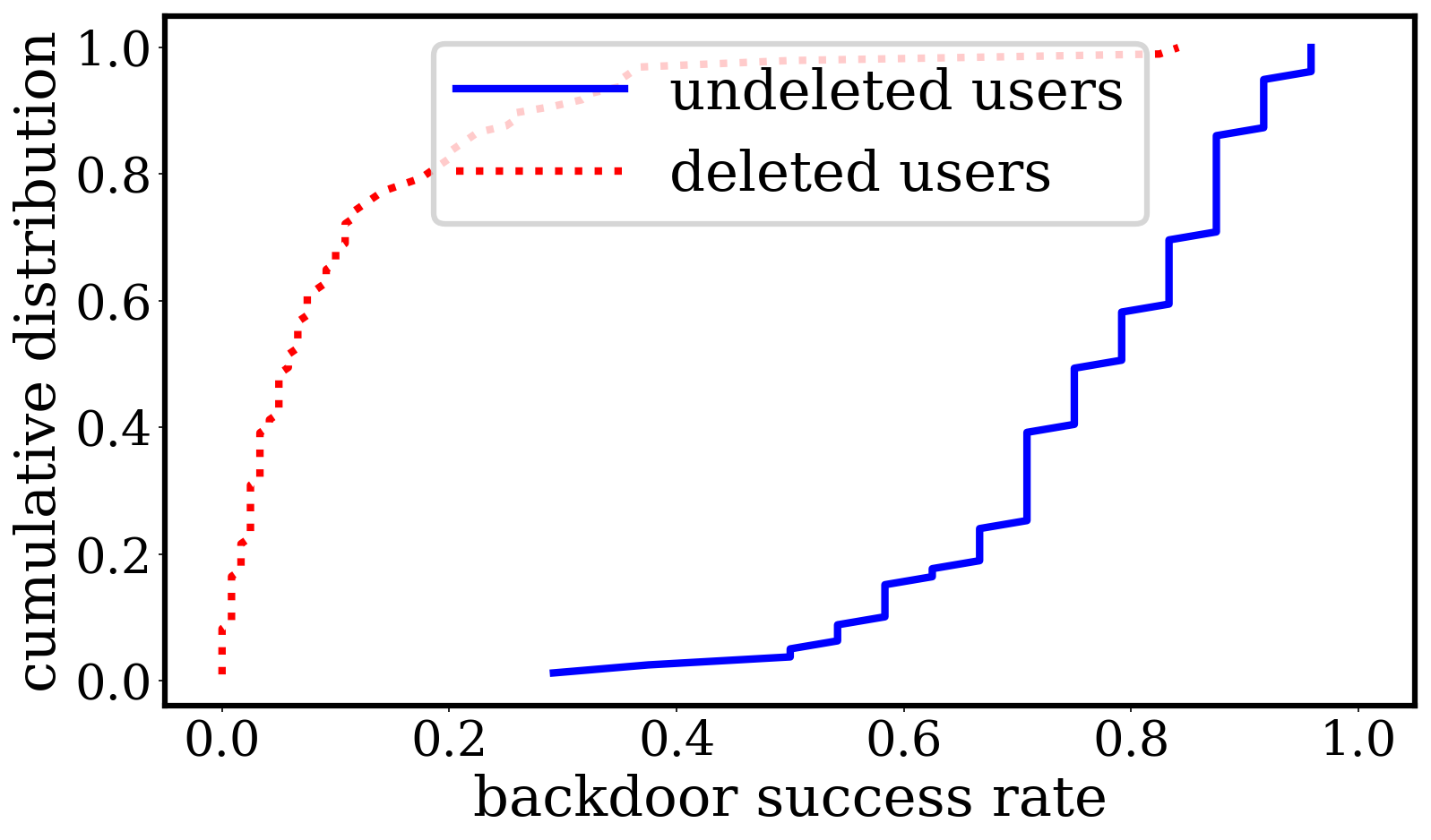}
		\caption{CIFAR10}
		\label{fig:def_cifar_acc_cdf}
	\end{subfigure}
	\begin{subfigure}[t]{0.24\linewidth}
		\centering
		\includegraphics[width=\linewidth]{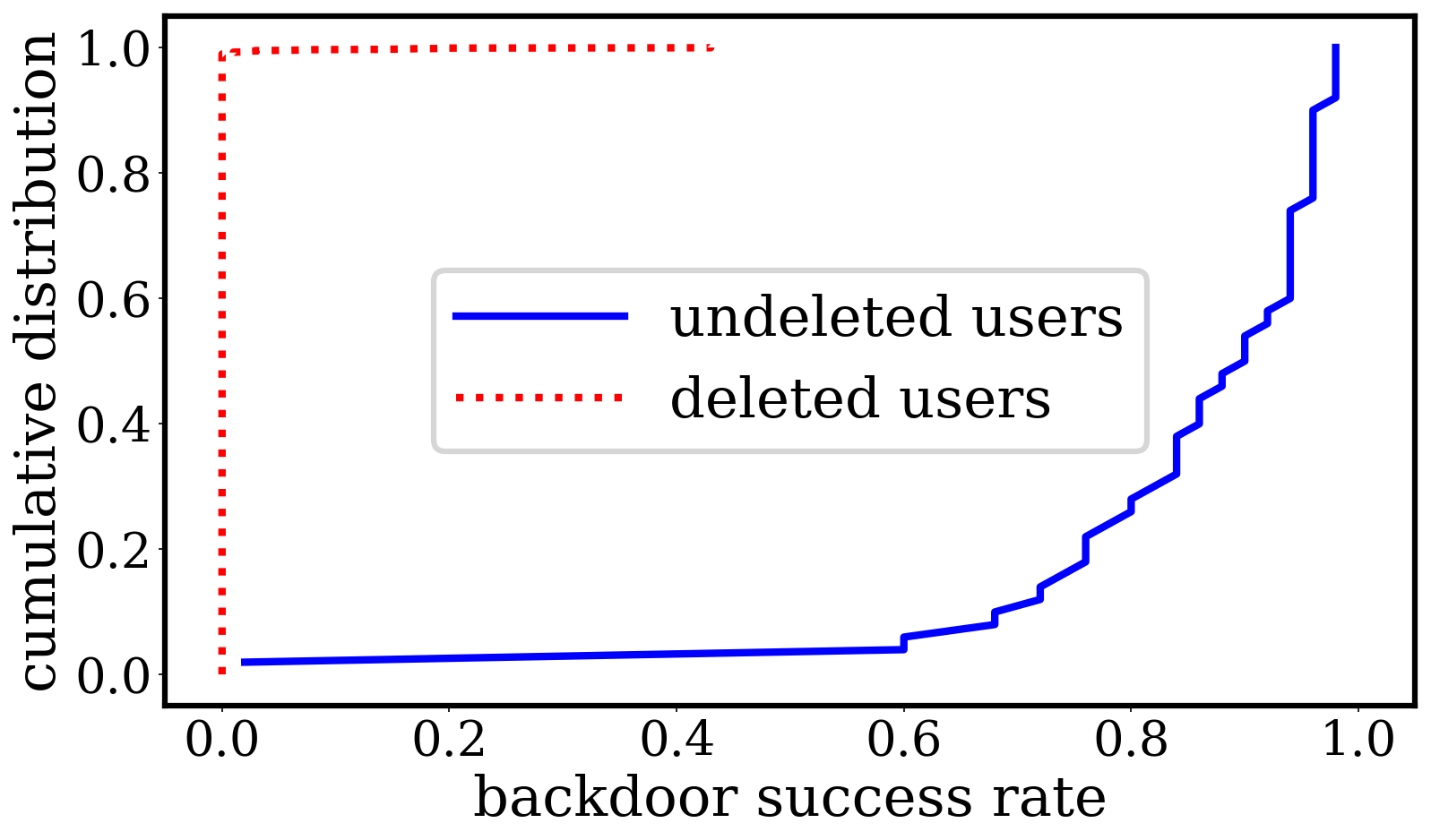}
		\caption{ImageNet}
		\label{fig:def_imagenet_acc_cdf}
	\end{subfigure}
	\caption{The cumulative distributions of backdoor attack success rates for deleted and undeleted users for several datasets in the presence of an adaptive server (data poison ratio $\poisonratio{}=50\%$ and privacy enthusiasts fraction $\userratio=0.05$).
	Similar to Section \ref{sec:individual_evaluation}, here we focus on the heterogeneity in performance across individual users when the \Advanced{} server uses Neural Cleanse \cite{backdoor_defense_wang_sp19} to defend against backdoor attacks.
	}
	\label{fig:def_model_acc_cdf}\vspace{9cm}
\end{figure*}

\section{Number of Users Sustainable}\label{appendix:numberofusers}
Given the finite space of backdoor patterns, one or more users can choose similar (similar and not exact because the ML algorithms are robust to small deviations) backdoors which can be a source of inaccuracies. It is important to have bounds on how many users can our mechanism sustain before such collisions start hampering the overall system performance. 
For ease of exposition, we consider the domain of image classification. Let us consider a setting with binary images of size $n$, each backdoor has $w$ pixels set, and define dissimilar backdoors to be backdoors that differ in at least $d$ values. For instance, in our backdoor, when using EMNIST dataset, each image is $n = 784 = 28\times 28$, we have set $w = 4$ pixels and $d = 2$ (i.e., if two backdoors share 3 of the 4 pixels, they interfere with each others classification). We want to answer the following question:
\begin{displayquote}
How many backdoor patterns exist that are sufficiently dissimilar to each other? 
\end{displayquote}
This can be answered by an exact mapping to the following problem in coding theory: find the maximal number of binary vectors of length $n$, hamming distance $d$ apart, with constant weight $w$. Exactly computing this quantity, denoted by $A(n, d, w)$, is an open research question but there exist a number of bounds in the literature (Chapter 17 in MacWilliams and Sloane~\cite{macwilliams1977theory}). In our study, we need to compute the quantity: 
\begin{equation}
    \mathsf{\# Backdoors} = \sum_{i = d}^n A(n, i, w)    
\end{equation}
where the summation is because backdoors can differ arbitrarily as long as they are sufficiently dissimilar. Theorem 7 from~\cite{macwilliams1977theory} provides exact values for simple cases such as those required in our EMNIST example. We can then use a simple birthday paradox analysis to bound the number of users in the system to ensure low probability of backdoor collision. Note that the above analysis becomes more involved when using Convolutional Neural Networks as the convolution layers treat neighboring pixels with the same filter weight. 




\end{document}